\providecommand{\tabularnewline}{\\}
\providecommand{\algorithmname}{Algorithm}
\theoremstyle{plain}
\newtheorem{thm}{\protect\theoremname}[section]
\theoremstyle{definition}
\newtheorem{defn}[thm]{\protect\definitionname}
\theoremstyle{remark}
\newtheorem{rem}[thm]{\protect\remarkname}
\theoremstyle{plain}
\newtheorem{cor}[thm]{\protect\corollaryname}
\theoremstyle{plain}
\newtheorem{prop}[thm]{\protect\propositionname}
\theoremstyle{definition}
\newtheorem{example}[thm]{\protect\examplename}
\theoremstyle{plain}
\newtheorem{lem}[thm]{\protect\lemmaname}
\newenvironment{lyxlist}[1]
	{\begin{list}{}
		{\settowidth{\labelwidth}{#1}
		 \setlength{\leftmargin}{\labelwidth}
		 \addtolength{\leftmargin}{\labelsep}
		 }}
	{\end{list}}
\numberwithin{equation}{section}
\newtheorem{assumption}{Assumption}
\providecommand{\corollaryname}{Corollary}
\providecommand{\definitionname}{Definition}
\providecommand{\examplename}{Example}
\providecommand{\lemmaname}{Lemma}
\providecommand{\propositionname}{Proposition}
\providecommand{\remarkname}{Remark}
\providecommand{\theoremname}{Theorem}
\begin{document}
\title{From characteristic functions to multivariate distribution functions
and European option prices by the (damped) COS method}
\author{Gero Junike\thanks{\textcolor{red}{This is a post-peer-review, pre-copyedit version of an article published in SIAM Journal on Numerical Analysis. The final authenticated version is available online at:
Junike, G. and Stier, H. (2025). From Characteristic Functions to Multivariate Distribution Functions and European Option Prices by the (Damped) COS Method. SIAM Journal on Numerical Analysis, 63(6), 2421-2453, https://doi.org/10.1137/24M1666240}\\ Corresponding author. Department of Mathematics, Ludwig-Maximilians
Universität, Theresienstr. 39, 80333 München, Germany. ORCID: 0000-0001-8686-2661,
E-mail: gero.junike@math.lmu.de}, Hauke Stier\thanks{Department of Mathematics, Carl von Ossietzky Universität, 26129 Oldenburg,
Germany.}}
\maketitle
\begin{abstract}
We provide a unified framework to obtain numerically certain quantities,
such as the distribution function, absolute moments and prices of
financial options, from the characteristic function of some (unknown)
probability density function using the Fourier-cosine series (COS)
method. The classical COS method is numerically very efficient in
one dimension, but it cannot deal very well with certain integrands
in general dimensions. Therefore, we introduce the damped COS method,
which can handle a large class of integrands very efficiently. We
prove the convergence of the (damped) COS method and study its order
of convergence. The method converges exponentially if the characteristic
function decays exponentially. To apply the (damped) COS method, one
has to specify two parameters: a truncation range for the multivariate
density and the number of terms to approximate the truncated density
by a Fourier-cosine series. We provide an explicit formula for the
truncation range and an implicit formula for the number of terms.
Numerical experiments up to five dimensions confirm the theoretical
results.

\textbf{Keywords: }Fourier transform, numerical integration, inversion
theorem, COS method, CDF, option pricing\\
 \textbf{Mathematics Subject Classification} 65D30 · 65T40 · 65Z05
· 60E10
\end{abstract}

\section{Introduction}
\author{We aim to solve the following integral numerically:
\begin{equation}
\int_{\mathbb{R}^{d}}w(\boldsymbol{x})g(\boldsymbol{x})d\boldsymbol{x}.\label{eq:int}
\end{equation}
The function $g$ is usually a density and the function $w$ is called
\emph{function of interest.} Integrals as in (\ref{eq:int}) appear
in a wide range of applications: The integral is equal to the cumulative
distribution function (CDF) of the density $g$ if $w$ is an indicator
function. CDFs appear in many scientific disciplines. }

If $w$ is the absolute value, the integral describes the absolute
moment of the density $g$, which plays an important role in various
disciplines but is not easy to obtain, see \cite{von1965convergence, brown1972formulae, barndorff2005absolute}
and references therein. 

In a financial context, the function $w$ might also describe some
financial contract, which depends on several assets. The function
$g$ is then the density of the logarithmic returns of the assets
and the integral describes the price of the contract. 

In many cases, the precise structure of $g$ is unknown, but the Fourier
transform $\widehat{g}$ is often given in closed form. For example:
while the joint density of the sum of two independent random variables
can only be expressed as a convolution and is usually not given explicitly,
the joint characteristic function is much simpler to obtain (it is
just the product of the marginal characteristic functions). Moreover,
the characteristic function of a Lévy process at a particular time-point
is usually given explicitly thanks to the Lévy-Khintchine formula.

The integral in (\ref{eq:int}) can be solved numerically using various
techniques, including (quasi) Monte Carlo simulation, numerical quadrature
and Fourier inversion. Special Fourier-inversion methods exist in
the case of a CDF, e.g., the Gil-Pelaez formula, see \cite{gil1951note}
and extensions, e.g., \cite{schorr1975numerical,waller1995obtaining,hughett1998error}
in one dimension and \cite{shephard1991characteristic,shephard1991numerical}
in $d$ dimensions.

The COS method, see \citet{fang2009novel} for $d=1$ and \cite{ruijter2012two}
for $d=2$, is a Fourier inversion technique. The COS method has been
applied extensively in computational finance, see \citet{fang2009pricing,fang2011fourier,grzelak2011heston,zhang2013efficient,leitao2018data,liu2019neural,liu2019pricing,oosterlee2019mathematical}
and \citet{bardgett2019inferring}.

The COS method has also been applied to solve backward stochastic
differential equations, see \cite{ruijter2015fourier} and \cite{andersson2023convergence}.

The main ideas of the COS method are to truncate the integration range
in (\ref{eq:int}) to some finite hypercube and to approximate the
density $g$ on the finite truncation range by a classical Fourier-cosine
series. There is a clever trick to approximate the Fourier-cosine
coefficients for $g$ in a very fast and robust way using $\widehat{g}$.
The COS method is particularly fast when the Fourier-cosine coefficients
of the function of interest $w$ are given analytically. However,
in multivariate dimensions this is rarely the case an exception being
a CDF. For example, the Fourier-cosine coefficients of arithmetic
basket options, call-on-minimum and spread options are not given explicitly.
\cite{ruijter2012two} propose in these cases to obtain the Fourier-cosine
coefficients of the function of interest numerically by a discrete
cosine transform, but this approach slows the COS method significantly.
In this article, we introduce the \emph{damped COS method}, which
is able to avoid the expensive application of the discrete cosine
transform if the Fourier transform of the function of interest is
given in closed form. The main idea is to damp $w$ by multiplying
it by an exponential function in order to make the damped function
of interest integrable. The idea of introducing a damping factor dates
back at least to \cite{carr1999option} and \citet{lewis2001simple}.
In multivariate dimensions, damping is also applied by \cite{eberlein2010analysis},
where the optimal damping factor is determined by \cite{bayer2022optimal}.

Other Fourier techniques to solve integrals as in (\ref{eq:int})
can be found in \citet{carr1999option,lewis2001simple,lord2008fast,eberlein2010analysis,ortiz2013robust,kirkby2015efficient,ortiz2016highly,baschetti2022sinc}
and \citet{bayer2024quasi}.

This article makes the following main contributions: We prove the
convergence of the multidimensional classical and damped COS methods,
we analyze the order of convergence of the classical and the damped
COS methods, and we provide explicit and implicit formulas for the
truncation range and the number of terms, respectively. The new approach
to finding the number of terms in $d$ dimensions is completely distinct
from the one-dimensional case discussed in \citet{junike2023handle}.
Unlike \cite{ruijter2012two}, who analyze the classical COS method,
we include in our analysis numerical uncertainty on the characteristic
function $\widehat{g}$. This makes it possible to understand how
approximations on $\widehat{g}$ affect the total error of the COS
method. 

In one dimension, it is well known that the COS method converges exponentially
if the density $g$ is smooth and has semi-heavy tails and performs
favorably to the Carr-Madan formula, see \citet{fang2009novel} and
\citet{junike2023handle}. However, the order of convergence of the
COS method is only linear if $g$ has heavy tails, which holds for
instance for the stable law. In such cases the Carr-Madan formula
outperforms the COS method, see \citet{junike2023handle}. There are
also other methods, such as the SINC method or the PROJ method, which
have demonstrated superior performance compared to the COS method
in certain scenarios, see \citet{baschetti2022sinc} and \citet{kirkby2015efficient}. 

In multivariate dimensions, the classical COS method becomes slow
for many function of interest $w$, e.g., arithmetic basket options,
since the Fourier-cosine coefficients of $w$ need to be approximated
numerically. In two dimensions, the classical COS method is significantly
slower than the Lewis formula, see \citet{bayer2022optimal}. This
disadvantage can be circumvented by the damped COS method introduced
in Section \ref{subsec:Comparison-with-Monte}.

The classical and damped COS methods suffer from the curse of dimensionality
and are efficient only for moderate dimensions: We show empirically
that the COS method outperforms a crude Monte Carlo simulation in
$d\in\{2,3\}$. A Monte Carlo simulation is recommended for $d\geq5$.
The choice between the COS method and a Monte Carlo simulation in
$d=4$ dimensions depends on the required error tolerance and the
function of interest. Note that some Fourier inversion approaches
are faster than Monte Carlo simulations up to 15 dimensions, see \citet{bayer2024quasi}.

This article is structured as follows: In Section \ref{sec:Notation}
we fix some notation. In Section \ref{sec:Damped-COS-method} we introduce
the multidimensional classical and damped COS methods, prove its convergence,
analyze the order of convergence and provide explicit and implicit
formulas for the truncation range and the number of terms. In Section
\ref{sec:Characteristic-functions} we discuss some examples for $g$
and $\widehat{g}$. In Section \ref{sec:Functions-of-interest} we
discuss some functions of interest, i.e., examples for $w$. In Section
\ref{sec:Numerical} we provide numerical experiments. Section \ref{sec:Conclusions}
concludes.

\section{\protect\label{sec:Notation}Notation}

Let $d\in\mathbb{N}$. Let $\mathcal{L}^{1}$ and $\mathcal{L}^{2}$
denote the sets of integrable and square-integrable functions from
$\mathbb{R}^{d}$ to $\mathbb{R}$ and by $\left\langle .,.\right\rangle $
and $\left\Vert .\right\Vert _{2}$ we denote the scalar product and
the (semi)norm on $\mathcal{L}^{2}$, respectively. The supremum norm
of a function $g:\mathbb{R}^{d}\to\mathbb{C}$ is defined by $\left\Vert g\right\Vert _{\infty}:=\sup_{\boldsymbol{x}\in\mathbb{R}^{d}}|g(\boldsymbol{x})|$.
By $\mathcal{L}^{\infty}$ we denote the set of bounded functions
from $\mathbb{R}^{d}$ to $\mathbb{R}$. By $\Re\{z\}$ and $\Im\{z\}$
we denote the real and imaginary parts of a complex number $z\in\mathbb{C}$.
The complex unit is denoted by $i$. By $\Gamma$, we denote the Gamma
function. The Euclidean norm and the maximum norm on $\mathbb{R}^{d}$
are denoted by $|.|$ and by $|.|_{\infty}$, respectively. For $\boldsymbol{x},\boldsymbol{y}\in\mathbb{R}^{d}$
we define
\[
\boldsymbol{x}\geq\boldsymbol{y}:\Leftrightarrow x_{1}\geq y_{1},...,x_{d}\geq y_{d}
\]
and treat ``$\leq$'', ``$<$'', ``$>$'', ``$=$'' and ``$\neq$''
similarly. We set $\mathbb{R}_{+}^{d}:=\{\boldsymbol{x}\in\mathbb{R}^{d},\boldsymbol{x}>\boldsymbol{0}\}$
and $\mathbb{R}_{-}^{d}:=\{\boldsymbol{x}\in\mathbb{R}^{d},\boldsymbol{x}<\boldsymbol{0}\}$.
For $\boldsymbol{a},\boldsymbol{b}\in\mathbb{R}^{d}$ with $\boldsymbol{a}\leq\boldsymbol{b}$,
two complex vectors $\boldsymbol{z},\boldsymbol{y}\in\mathbb{C}^{d}$
and $\lambda\in\mathbb{C}$ we define $\boldsymbol{z}+\boldsymbol{y}:=(z_{1}+y_{1},...,z_{d}+y_{d})\in\mathbb{C}^{d}$
and treat $\boldsymbol{zy}$ and $\frac{\boldsymbol{z}}{\boldsymbol{y}}$
similarly. We further define
\begin{align*}
\boldsymbol{z}\cdot\boldsymbol{y}:= & z_{1}y_{1}+...+z_{d}y_{d}\in\mathbb{C}\\
\lambda\boldsymbol{z}:= & (\lambda z_{1},...,\lambda z_{d})\in\mathbb{C}^{d}\\{}
[\boldsymbol{a},\boldsymbol{b}]:= & [a_{1},b_{1}]\times...\times[a_{d},b_{d}]\subset\mathbb{R}^{d}\\
(-\boldsymbol{\infty},\boldsymbol{b}]:= & (-\infty,b_{1}]\times...\times(-\infty,b_{d}]\subset\mathbb{R}^{d}\\
\exp(\boldsymbol{x}):= & \left(\exp(x_{1}),....,\exp(x_{d})\right)\in\mathbb{R}^{d},\quad\boldsymbol{x}\in\mathbb{R}^{d}\\
\log(\boldsymbol{x}):= & \left(\log(x_{1}),....,\log(x_{d})\right)\in\mathbb{R}^{d},\quad\boldsymbol{x}\in\mathbb{R}_{+}^{d}.
\end{align*}
For a subset $A\subset\mathbb{R}^{d},$ we define the indicator function
$1_{A}(\boldsymbol{x})$ by one if $\boldsymbol{x}\in A$ and by zero
otherwise. Let $\mathbb{N}_{0}=\mathbb{N}\cup\{0\}$. For $\boldsymbol{N}=(N_{1},...,N_{d})\in\mathbb{N}_{0}^{d}$
and a sequence $\left(a_{\boldsymbol{k}}\right)_{\boldsymbol{k}\in\mathbb{N}_{0}^{d}}\subset\mathbb{C}$,
we define
\begin{align*}
\sideset{}{'}\sum_{\boldsymbol{\boldsymbol{0}}\leq\boldsymbol{k}\leq\boldsymbol{N}}a_{\boldsymbol{k}} & :=\sum_{\boldsymbol{\boldsymbol{0}}\leq\boldsymbol{k}\leq\boldsymbol{N}}\frac{1}{2^{\Lambda(\boldsymbol{k})}}a_{\boldsymbol{k}},
\end{align*}
where $\Lambda(\boldsymbol{k})$ is the number of components of the
vector $\boldsymbol{k}$ that are equal to zero, i.e., $\Lambda(\boldsymbol{k}):=\sum_{h=1}^{d}1_{\{0\}}(k_{h})$.
For an integrable function $\psi:\mathbb{R}^{d}\to\mathbb{C}$ we
define its \emph{Fourier transform} by
\begin{equation}
\widehat{\psi}(\boldsymbol{u}):=\int_{\mathbb{R}^{d}}\psi(\boldsymbol{x})e^{i\boldsymbol{u}\cdot\boldsymbol{x}}d\boldsymbol{x},\quad\boldsymbol{u}\in\mathbb{R}^{d}.\label{eq:FourierTransform}
\end{equation}
This definition of the Fourier transform also appears in \citet[Def. 22.6]{bauer1996probability}
and in \cite{eberlein2010analysis}. Provided the integral in (\ref{eq:FourierTransform})
exists, the domain of $\widehat{\psi}$ may also be extended to parts
of the complex plane. If $\psi\geq0$ and $\int\psi(\boldsymbol{x})d\boldsymbol{x}=1$,
then $\psi$ is called \emph{density}, $\widehat{\psi}$ is called
the \emph{characteristic function} and the map $\boldsymbol{y}\mapsto\int_{(-\boldsymbol{\infty},\boldsymbol{y}]}\psi(\boldsymbol{x})d\boldsymbol{x}$
is called the \emph{cumulative distribution function }(CDF)\emph{.} 

The next definition introduces the concept of \emph{COS-admissiblility}
and extends \citet[Def. 1]{junike2022precise} to the multidimensional
setting. It is necessary to prove the convergence of the COS method. 
\begin{defn}
\label{def:COS_admissible}Let $\boldsymbol{L}=(L_{1},...,L_{d})\in\mathbb{R}_{+}^{d}$.
A function $\psi\in\mathcal{L}^{1}$ is called \emph{COS-admissible}
if
\[
B_{\psi}(\boldsymbol{L}):=\sideset{}{'}\sum_{\boldsymbol{k}\in\mathbb{N}_{0}^{d}}\frac{1}{\prod_{h=1}^{d}L_{h}}\left|\int_{\mathbb{R}^{d}\setminus[-\boldsymbol{L},\boldsymbol{L}]}\psi(\boldsymbol{x})e_{\boldsymbol{k}}(\boldsymbol{x})d\boldsymbol{x}\right|^{2}\to0,\quad\min_{h=1,...,d}L_{h}\to\infty.
\]
\end{defn}

The functions $e_{\boldsymbol{k}}$ are defined in Equation (\ref{eq:ek}).
If a function $\psi\in\mathcal{L}^{1}\cap\mathcal{L}^{2}$ satisfies
\begin{equation}
\int_{\mathbb{R}^{d}}\left|\boldsymbol{x}\right|^{2d}|\psi(\boldsymbol{x})|^{2}d\boldsymbol{x}<\infty,\label{eq:propBed}
\end{equation}
then $\psi$ is COS-admissible, see Proposition \ref{prop:COSadmissible}
in the appendix. In particular, it follows that bounded densities
with existing moments satisfy Inequality (\ref{eq:propBed}). 
\begin{rem}
In one dimension, there are other sufficient conditions than Inequality
(\ref{eq:propBed}) to ensure COS-admissiblility, see \citet[Corollary 4]{junike2022precise}
and \citet[Lemma 3.7]{junike2023handle}.
\end{rem}

\cite{junike2022precise} and \cite{junike2023handle} assume that
$g$ has semi-heavy tails, i.e., $g$ decays exponentially or faster.
Here, we make the same assumption in multivariate dimensions in order
to be able to estimate the truncation range and the number of terms
of the COS method.
\begin{defn}
We say a function $\psi:\mathbb{R}^{d}\to\mathbb{R}$ \emph{decays
exponentially} if there are $C_{1},C_{2},m>0$ such that for $|\boldsymbol{x}|>m$
it holds that $|\psi(\boldsymbol{x})|\leq C_{1}e^{-C_{2}|\boldsymbol{x}|}$.
\end{defn}

We denote the set of functions from $\mathbb{R}^{d}$ to $\mathbb{R}$
which decay exponentially and satisfy Inequality (\ref{eq:propBed})
by $\mathcal{L}^{e}$. 

\section{\protect\label{sec:Damped-COS-method}Classical and damped COS methods}

Problem statement: Throughout the paper, we fix a density $g:\mathbb{R}^{d}\to\mathbb{R}$
and a function of interest $w:\mathbb{R}^{d}\to\mathbb{R}$. We aim
to solve the integral in (\ref{eq:int}) numerically. 

Typically, for the function of interest, it holds that $w\notin\mathcal{L}^{1}$.
For example, the integral in (\ref{eq:int}) is equal to the CDF of
$g$ evaluated at $\boldsymbol{y}\in\mathbb{R}^{d}$ if $w(\boldsymbol{x})=1_{(\boldsymbol{-\infty},\boldsymbol{y}]}(\boldsymbol{x})$
for $\boldsymbol{x}\in\mathbb{R}^{d}$. Or, in a financial context,
an unweighted arithmetic basket put option is defined by $w(\boldsymbol{x})=\max(K-\sum_{h=1}^{d}e^{x_{h}},0)$,
$\boldsymbol{x}\in\mathbb{R}^{d}$, where $K>0$, see e.g., \citet{bayer2022optimal}. 

Note that for many densities and functions of interest, both $\widehat{g}$
and $\widehat{w}$ are given in closed form, see e.g., \cite{ruijter2012two,eberlein2010analysis}
and references therein. The density $g$ does not need to be known
precisely to apply the classical or the damped COS method. Let $\boldsymbol{\alpha}\in\mathbb{R}^{d}$
be a damping factor. Let
\begin{align}
\delta_{g}:= & \left\{ \boldsymbol{\alpha}\in\mathbb{R}^{d}:\,\boldsymbol{x}\mapsto g(\boldsymbol{x})e^{\boldsymbol{\alpha}\cdot\boldsymbol{x}}\in\mathcal{L}^{1}\cap\mathcal{L}^{2}\cap\mathcal{L}^{e}\right\} \label{eq:delta_g}\\
\delta_{w}^{\infty}:= & \left\{ \boldsymbol{\alpha}\in\mathbb{R}^{d}:\,\boldsymbol{x}\mapsto w(\boldsymbol{x})e^{-\boldsymbol{\alpha}\cdot\boldsymbol{x}}\in\mathcal{L}^{\infty}\right\} \label{eq:delta_w_inf}\\
\delta_{w}:= & \left\{ \boldsymbol{\alpha}\in\mathbb{R}^{d}:\,\boldsymbol{x}\mapsto w(\boldsymbol{x})e^{-\boldsymbol{\alpha}\cdot\boldsymbol{x}}\in\mathcal{L}^{1}\cap\mathcal{L}^{2}\cap\mathcal{L}^{e}\cap\mathcal{L}^{\infty}\right\} .\label{eq:delta_w}
\end{align}

The following Assumption \ref{A1} is a sufficient condition to prove
the convergence of the classical COS method and Assumption \ref{A2}
is a sufficient condition to prove the convergence of the damped COS
method.

\begin{assumption}\label{A1} 	
We assume that $\delta_{g}\cap\delta_{w}^\infty\neq\emptyset$ and $\boldsymbol{\alpha}\in\delta_{g}\cap\delta_{w}^\infty$.
\end{assumption}

\begin{assumption}\label{A2} 	
We assume that $\delta_{g}\cap\delta_{w}\neq\emptyset$ and $\boldsymbol{\alpha}\in\delta_{g}\cap\delta_{w}$.
\end{assumption}

Assume Assumption \ref{A1} or \ref{A2} holds. It holds that $wg\in\mathcal{L}^{1}$
since $w$ is bounded and $g$ is a density. For a scaling factor
$\lambda>0$ and a shift parameter $\boldsymbol{\mu}\in\mathbb{R}^{d}$,
we define the \emph{damped} \emph{density} by 
\begin{equation}
f(\boldsymbol{x})=\lambda e^{\boldsymbol{\alpha}\cdot(\boldsymbol{x}+\boldsymbol{\mu})}g(\boldsymbol{x}+\boldsymbol{\mu}),\quad\boldsymbol{x}\in\mathbb{R}^{d}\label{eq:dampedf}
\end{equation}
and the\emph{ damped} \emph{function of interest} by
\begin{equation}
v(\boldsymbol{x})=\frac{1}{\lambda}e^{-\boldsymbol{\alpha}\cdot(\boldsymbol{x}+\boldsymbol{\mu})}w(\boldsymbol{x}+\boldsymbol{\mu}),\quad\boldsymbol{x}\in\mathbb{R}^{d}.\label{eq:dampedv}
\end{equation}
By definition, it follows that
\begin{equation}
\int_{\mathbb{R}^{d}}w(\boldsymbol{x})g(\boldsymbol{x})d\boldsymbol{x}=\int_{\mathbb{R}^{d}}v(\boldsymbol{x})f(\boldsymbol{x})d\boldsymbol{x}.\label{eq:wg=00003Dvf}
\end{equation}

\begin{rem}
\label{rem:fHat_vhat}By Proposition \ref{prop:centr} it holds that
$\widehat{f}(\boldsymbol{u})=\lambda e^{-i\boldsymbol{u}\cdot\boldsymbol{\mu}}\widehat{g}(\boldsymbol{u}-i\boldsymbol{\alpha})$,
i.e., $\widehat{f}$ is given in closed form if $\widehat{g}$ is
given in closed form. The Fourier transform of $v$ is given by $\widehat{v}(\boldsymbol{u})=\lambda^{-1}e^{-i\boldsymbol{u}\cdot\boldsymbol{\mu}}\widehat{w}(\boldsymbol{u}+i\boldsymbol{\alpha})$.
Hence, a closed form expression for $\widehat{w}$ is sufficient to
obtain a closed form expression for $\widehat{v}$. 
\end{rem}

\begin{rem}
Under Assumption \ref{A1} \emph{or} Assumption \ref{A2} it holds
that $vf\in\mathcal{L}^{1}$, $f\in\mathcal{L}^{1}\cap\mathcal{L}^{2}\cap\mathcal{L}^{e}$
and $v\in\mathcal{L}^{\infty}$. Under Assumption \ref{A2} it holds
\emph{additionally} that $v\in\mathcal{L}^{1}\cap\mathcal{L}^{2}\cap\mathcal{L}^{e}$.
\end{rem}

\begin{rem}
On the choice of $\lambda$: In some applications it might be useful
to know that $g$ and $f$ are densities. Thanks to Proposition \ref{prop:centr},
$f$ is a density if $g$ is a density and if we choose $\lambda=\big(\widehat{g}(-i\boldsymbol{\alpha})\big)^{-1}$.
If $\boldsymbol{\alpha}=\boldsymbol{0}$ it holds that $\lambda=1$.
\end{rem}

\begin{rem}
\label{rem:On-the-choice}On the choice of $\boldsymbol{\mu}$: to
apply the COS method, we need to truncate the density $\boldsymbol{x}\mapsto\lambda e^{\boldsymbol{\alpha}\cdot\boldsymbol{x}}g(\boldsymbol{x})$,
i.e., we need to find some interval $[\boldsymbol{a},\boldsymbol{b}]$
such that $\lambda e^{\boldsymbol{\alpha}\cdot\boldsymbol{x}}g(\boldsymbol{x})\approx\lambda e^{\boldsymbol{\alpha}\cdot\boldsymbol{x}}g(\boldsymbol{x})1_{[\boldsymbol{a},\boldsymbol{b}]}(\boldsymbol{x})$.
Assume that such an interval is given. Let $\boldsymbol{\mu}:=\frac{\boldsymbol{b}-\boldsymbol{a}}{2}+\boldsymbol{a}$
and $\boldsymbol{L}:=\frac{\boldsymbol{b}-\boldsymbol{a}}{2}$. Define
$f$ as in Equation (\ref{eq:dampedf}). We then have $f\approx f1_{[-\boldsymbol{L},\boldsymbol{L}]}.$
Therefore, we can always work with $f$ instead of $g$ and a symmetric
interval $[-\boldsymbol{L},\boldsymbol{L}]$ to keep the notation
simple. We recall ideas from the literature, how such interval $[\boldsymbol{a},\boldsymbol{b}]$
can be defined: \citet[Eq. (49)]{fang2009novel} and \citet[Eq. (5.8)]{ruijter2012two}
suggest setting $[a_{h},b_{h}]:=\left[\int_{\mathbb{R}^{d}}\lambda e^{\boldsymbol{\alpha}\cdot\boldsymbol{x}}g(\boldsymbol{x})x_{h}d\boldsymbol{x}\pm L_{h}\right]$
for $L_{h}$ large enough, $h=1,...,d$. We follow this idea and set
\[
\mu_{h}:=\frac{b_{h}-a_{h}}{2}+a_{h}=\int_{\mathbb{R}^{d}}\lambda e^{\boldsymbol{\alpha}\cdot\boldsymbol{x}}g(\boldsymbol{x})x_{h}d\boldsymbol{x},\quad h=1,...,d.
\]
The vector $\boldsymbol{\mu}$ is equal to the first moments of the
marginal densities of $\boldsymbol{x}\mapsto\lambda e^{\boldsymbol{\alpha}\cdot\boldsymbol{x}}g(\boldsymbol{x})$.
Proposition \ref{prop:centr} explains how to obtain $\boldsymbol{\mu}$
from $\widehat{g}$.
\end{rem}

\begin{rem}
\label{rem:theta}In some applications, $\widehat{g}$ is not given
explicitly but needs to be approximated numerically; e.g., in \cite{duffie2003affine},
$\widehat{g}$ is the solution to some ordinary differential equation
(ODE), which itself needs to be solved numerically before solving
the integral (\ref{eq:int}). We denote the approximation of $\widehat{g}$
by $\phi$. The corresponding approximation of $\widehat{f}$ is denoted
by $\vartheta(\boldsymbol{u}):=\lambda e^{-i\boldsymbol{u}\cdot\boldsymbol{\mu}}\phi(\boldsymbol{u}-i\boldsymbol{\alpha})$,
$\boldsymbol{u}\in\mathbb{R}^{d}$. We assume that $\phi(\boldsymbol{u}-i\boldsymbol{\alpha})$
exists for $\boldsymbol{u}\in\mathbb{R}^{d}$ to ensure that $\vartheta$
is well defined. We have to evaluate $\widehat{f}$ at finitely many
points to apply the COS method. So for any fixed $\boldsymbol{u}\in\mathbb{R}^{d}$
and any $\delta>0$ we assume that we can choose $\vartheta$ such
that $|\widehat{f}(\boldsymbol{u})-\vartheta(\boldsymbol{u})|<\delta$.
For example, if $\vartheta$ is obtained by the Euler method, the
global error between $\widehat{f}(\boldsymbol{u})$ and $\vartheta(\boldsymbol{u})$
decreases linearly in the step size for discretization of the ODE,
see Section \ref{subsec:Uncertainty-of-the} for details.
\end{rem}

From now on, we will work with $\widehat{f}$ and $v$ instead of
$\widehat{g}$ and $w$. A numerical approximation of $\widehat{f}$
is denoted by a function $\vartheta:\mathbb{R}^{d}\to\mathbb{C}$.
If there is no numerical uncertainty on $\widehat{f}$, replace $\vartheta$
by $\widehat{f}$ in the rest of the article. 

\begin{table}[H]
\begin{centering}
\begin{tabular}{|c|l|}
\hline 
 & Aim: Approximate $\int_{\mathbb{R}^{d}}w(\boldsymbol{x})g(\boldsymbol{x})d\boldsymbol{x}$
by $\sideset{}{'}\sum\tilde{c}_{\boldsymbol{k}}v_{\boldsymbol{k}}$
or by $\sideset{}{'}\sum\tilde{c}_{\boldsymbol{k}}\tilde{v}_{\boldsymbol{k}}$.\tabularnewline
\hline 
\hline 
$w:\mathbb{R}^{d}\to\mathbb{R}$ & Function of interest\tabularnewline
\hline 
$g:\mathbb{R}^{d}\to\mathbb{R}$ & Density function\tabularnewline
\hline 
$\boldsymbol{\alpha},\boldsymbol{\mu}\in\mathbb{R}^{d}$, $\lambda\in\mathbb{R}_{+}^{d}$ & Damping factor, shift parameter and scaling factor\tabularnewline
\hline 
$v:\mathbb{R}^{d}\to\mathbb{R}$ & Damped function of interest\tabularnewline
\hline 
$f:\mathbb{R}^{d}\to\mathbb{R}$ & Damped density function\tabularnewline
\hline 
$\widehat{w},\widehat{g},\widehat{v},\widehat{f}:\mathbb{R}^{d}\to\mathbb{\mathbb{C}}$ & Fourier transforms of $w$, $g$, $v$ and $f$\tabularnewline
\hline 
$\phi:\mathbb{R}^{d}\to\mathbb{\mathbb{C}}$ & Numerical approximation of $\widehat{g}$\tabularnewline
\hline 
$\vartheta:\mathbb{R}^{d}\to\mathbb{\mathbb{C}}$ & Numerical approximation of $\widehat{f}$\tabularnewline
\hline 
$e_{\boldsymbol{k}}:\mathbb{R}^{d}\to\mathbb{R}$ & Cosine basis functions, see Eq. (\ref{eq:ek})\tabularnewline
\hline 
\textbf{$\boldsymbol{L},\boldsymbol{M}\in\mathbb{R}_{+}^{d}$} & Truncation ranges for $f$ and $v$\tabularnewline
\hline 
$\boldsymbol{N}\in\mathbb{N}^{d}$ & Number of terms\tabularnewline
\hline 
$\boldsymbol{0}\leq\boldsymbol{k}\leq\boldsymbol{N}$ & Index\tabularnewline
\hline 
$a_{\boldsymbol{k}}\in\mathbb{R}$ & Classical Fourier-cosine coefficients of $f1_{[-\boldsymbol{L},\boldsymbol{L}]}$\tabularnewline
\hline 
$c_{\boldsymbol{k}}\in\mathbb{R}$ & Approximation of $a_{\boldsymbol{k}}$ through $\widehat{f}$, see
Eq. (\ref{eq:ck})\tabularnewline
\hline 
$\tilde{c}_{\boldsymbol{k}}\in\mathbb{R}$ & Approximation of $c_{\boldsymbol{k}}$ through $\vartheta$, see Eq.
(\ref{eq:ck_tilde})\tabularnewline
\hline 
$v_{\boldsymbol{k}}\in\mathbb{R}$ & (Scaled) Fourier-cosine coefficients of $v1_{[-\boldsymbol{M},\boldsymbol{M}]}$,
see Eq. (\ref{eq:vk})\tabularnewline
\hline 
$v_{\boldsymbol{k}}^{\text{Num}}\in\mathbb{R}$ & Numerical approximation of $v_{\boldsymbol{k}}$\tabularnewline
\hline 
$\tilde{v}_{\boldsymbol{k}}\in\mathbb{R}$ & Approximation of $v_{\boldsymbol{k}}$ through $\widehat{v}$, see
Eq. (\ref{eq:vkTilde}) \tabularnewline
\hline 
$\mathcal{S}$ & Index set, see Eq. (\ref{eq:setS})\tabularnewline
\hline 
$\mathcal{U}(\boldsymbol{L},\boldsymbol{N})$ & Set where $\widehat{f}$ needs to be evaluated, see Eq. (\ref{eq:U}).\tabularnewline
\hline 
\end{tabular}
\par\end{centering}
\caption{\protect\label{tab:Overview-multidimensional-COS}Overview of the
multidimensional COS method.}
\end{table}

We describe the COS method in detail in order to approximate the right-hand
side of Equation (\ref{eq:wg=00003Dvf}). Table \ref{tab:Overview-multidimensional-COS}
provides an overview. Let $\boldsymbol{M}\in\mathbb{R}_{+}^{d}$ be
large enough so that
\begin{equation}
\int_{\mathbb{R}^{d}}v(\boldsymbol{x})f(\boldsymbol{x})d\boldsymbol{x}\approx\int_{[-\boldsymbol{M},\boldsymbol{M}]}v(\boldsymbol{x})f(\boldsymbol{x})d\boldsymbol{x}.\label{eq:approx_integrate_M}
\end{equation}
Under Assumption \ref{A1} or \ref{A2}, such $\boldsymbol{M}$ exists
since then $vf\in\mathcal{L}^{1}$. Let $\boldsymbol{L}\geq\boldsymbol{M}$.
We truncate $f$ on $[-\boldsymbol{L},\boldsymbol{L}]$ and approximate
the truncated damped density using a Fourier-cosine series. We intuitively
have that
\begin{equation}
f\approx f1_{[-\boldsymbol{L},\boldsymbol{L}]}\approx\sideset{}{'}\sum_{\boldsymbol{\boldsymbol{0}}\leq\boldsymbol{k}\leq\boldsymbol{N}}a_{\boldsymbol{k}}e_{\boldsymbol{k}}1_{[-\boldsymbol{L},\boldsymbol{L}]}\approx\sideset{}{'}\sum_{\boldsymbol{\boldsymbol{0}}\leq\boldsymbol{k}\leq\boldsymbol{N}}c_{\boldsymbol{k}}e_{\boldsymbol{k}}1_{[-\boldsymbol{L},\boldsymbol{L}]}\approx\sideset{}{'}\sum_{\boldsymbol{\boldsymbol{0}}\leq\boldsymbol{k}\leq\boldsymbol{N}}\tilde{c}_{\boldsymbol{k}}e_{\boldsymbol{k}}1_{[-\boldsymbol{L},\boldsymbol{L}]},\label{eq:f_approx}
\end{equation}
where the basis functions are defined by
\begin{equation}
e_{\boldsymbol{k}}(\boldsymbol{x})=\prod_{h=1}^{d}\cos\left(k_{h}\pi\frac{x_{h}+L_{h}}{2L_{h}}\right),\quad\boldsymbol{x}\in\mathbb{R}^{d},\quad\boldsymbol{k}\in\mathbb{\mathbb{N}}_{0}^{d}\label{eq:ek}
\end{equation}
and the classical Fourier-cosine coefficients (see \citet[Part III, Section 9]{pivato2010linear})
of $f1_{[-\boldsymbol{L},\boldsymbol{L}]}$ are given for $\boldsymbol{k}\in\mathbb{\mathbb{N}}_{0}^{d}$
by

\begin{align}
a_{\boldsymbol{k}} & =\frac{1}{\prod_{h=1}^{d}L_{h}}\int_{[-\boldsymbol{L},\boldsymbol{L}]}f(\boldsymbol{x})e_{\boldsymbol{k}}(\boldsymbol{x})d\boldsymbol{x}\nonumber \\
 & \approx\frac{1}{\prod_{h=1}^{d}L_{h}}\int_{\mathbb{R}^{d}}f(\boldsymbol{x})e_{\boldsymbol{k}}(\boldsymbol{x})d\boldsymbol{x}\label{eq:integral_ck}\\
 & =\frac{1}{2^{d-1}\prod_{h=1}^{d}L_{h}}\sum_{\boldsymbol{s}\in\mathcal{S}}\Re\left\{ \widehat{f}\left(\frac{\pi}{2}\frac{\boldsymbol{sk}}{\boldsymbol{L}}\right)\exp\left(i\frac{\pi}{2}\boldsymbol{s}\cdot\boldsymbol{k}\right)\right\} =:c_{\boldsymbol{k}}\label{eq:ck}\\
 & \approx\frac{1}{2^{d-1}\prod_{h=1}^{d}L_{h}}\sum_{\boldsymbol{s}\in\mathcal{S}}\Re\left\{ \vartheta\left(\frac{\pi}{2}\frac{\boldsymbol{sk}}{\boldsymbol{L}}\right)\exp\left(i\frac{\pi}{2}\boldsymbol{s}\cdot\boldsymbol{k}\right)\right\} =:\tilde{c}_{\boldsymbol{k}},\label{eq:ck_tilde}
\end{align}
where
\begin{equation}
\mathcal{S}:=\big\{\boldsymbol{s}\in\mathbb{Z}^{d}:s_{1}=1,s_{h}\in\{-1,1\},\quad h=2,...,d\big\}.\label{eq:setS}
\end{equation}
The key insight of the COS method is the fact that the integral at
the right-hand side of Equation (\ref{eq:integral_ck}) can be solved
explicitly\footnote{We use $\prod_{h=1}^{d}\cos\theta_{h}=\frac{1}{2^{d-1}}\sum_{\boldsymbol{s}\in\mathcal{S}}\cos\left(\boldsymbol{s}\cdot\boldsymbol{\theta}\right)$,
$\boldsymbol{\theta}\in\mathbb{R}^{d}$, which follows by mathematical
induction, from the fact that the cosine is an even function and from
the trigonometric identities stated in \citet[Eqs. (4.3.17, 4.3.31)]{abramowitz1972handbook}.}. If $\widehat{f}$ needs to be approximated by some function $\vartheta$,
we use $\tilde{c}_{\boldsymbol{k}}$ instead of $c_{\boldsymbol{k}}$,
compare with Remark \ref{rem:theta}.

The idea of the multidimensional COS method is to approximate $f$
as in (\ref{eq:f_approx}), and hence the right-hand side of Equation
(\ref{eq:approx_integrate_M}) by

\begin{align}
\int_{[-\boldsymbol{M},\boldsymbol{M}]}v(\boldsymbol{x})f(\boldsymbol{x})d\boldsymbol{x} & \approx\sideset{}{'}\sum_{\boldsymbol{\boldsymbol{0}}\leq\boldsymbol{k}\leq\boldsymbol{N}}\tilde{c}_{\boldsymbol{k}}\underbrace{\int_{[-\boldsymbol{M},\boldsymbol{M}]}v(\boldsymbol{x})e_{\boldsymbol{k}}(\boldsymbol{x})d\boldsymbol{x}}_{=:v_{\boldsymbol{k}}}.\label{eq:vk}
\end{align}
We summarize from the literature three approaches to computing $v_{\boldsymbol{k}}$
and propose a new method as well:
\begin{description}
\item [{COS-i}] Classical COS method (see \citet{fang2009novel}): it is
assumed that $\boldsymbol{\alpha}=\boldsymbol{0}$, $v_{\boldsymbol{k}}$
can be obtained explicitly and that Assumption \ref{A1} holds. Examples,
where $v_{\boldsymbol{k}}$ are given analytically, include, in one
dimension, plain vanilla put and call options and digital options,
see \citet{fang2009novel} and, in two dimensions, geometric basket
options, call-on-maximum options and put-on-minimum options, see \cite{ruijter2012two}.
In general dimensions, the coefficients $v_{\boldsymbol{k}}$ of a
CDF are given in closed form, see Example \ref{exa:CDF}. 
\item [{COS-ii}] Classical COS method approximating $v_{\boldsymbol{k}}$
numerically (see \cite{ruijter2012two}): it is assumed that $\boldsymbol{\alpha}=\boldsymbol{0}$
and that Assumption \ref{A1} holds. This method has been applied
in cases in which the coefficients $v_{\boldsymbol{k}}$ cannot be
obtained explicitly, e.g. in the case of arithmetic basket options,
call-on-minimum and spread options. The coefficients $v_{\boldsymbol{k}}$
are then approximated numerically by the discrete cosine transform
or some quadrature rule. However, solving the integral at the right-hand
side of (\ref{eq:vk}) numerically for each $\boldsymbol{k}$ is expensive
and slows down the COS method significantly. Let us denote the numerical
approximation of $v_{\boldsymbol{k}}$ by $v_{\boldsymbol{k}}^{\text{Num}}$.
\item [{COS-iii}] Classical COS method with damping (see \citet{wang2017pricing}):
it is assumed that $\boldsymbol{\alpha}\neq\boldsymbol{0}$ and $v_{\boldsymbol{k}}$
can be obtained explicitly and that Assumption \ref{A1} holds. \citet{wang2017pricing}
introduces damping (in one dimension) to make the COS method numerically
more stable for unbounded functions of interest. We propose to approximate
the $L_{1}$-norm by this method in Example \ref{exa:L1-norm}. However,
the coefficients $v_{\boldsymbol{k}}$ (with or without damping) are
not given in closed form for many functions of interest, e.g., arithmetic
basket options, call-on-minimum and spread options.
\item [{COS-iv}] Damped COS method (new to the literature): it is assumed
that $\boldsymbol{\alpha}\neq\boldsymbol{0}$, $\widehat{v}$ is given
explicitly and Assumption \ref{A2} holds. The coefficients $v_{\boldsymbol{k}}$
are approximated by $\tilde{v}_{\boldsymbol{k}}$, which are defined
explicitly in terms of $\widehat{v}$ in Equation (\ref{eq:vkTilde}).
This method is much faster than the classical COS method described
in COS-ii since the coefficients $\tilde{v}_{\boldsymbol{k}}$ are
given explicitly while $v_{\boldsymbol{k}}^{\text{Num}}$ are obtained
numerically.
\end{description}
The novel damped COS-iv method is defined as follows: assume that
Assumption \ref{A2} holds, which implies that $v\in\mathcal{L}^{1}.$
Hence, the integral over the finite interval $[-\boldsymbol{M},\boldsymbol{M}]$
at the right-hand side of (\ref{eq:vk}) can be approximated by an
integral over $\mathbb{R}^{d}$ and we approximate the right-hand
side of Equation (\ref{eq:approx_integrate_M}) by
\begin{equation}
\int_{[-\boldsymbol{M},\boldsymbol{M}]}v(\boldsymbol{x})f(\boldsymbol{x})d\boldsymbol{x}\approx\sideset{}{'}\sum_{\boldsymbol{\boldsymbol{0}}\leq\boldsymbol{k}\leq\boldsymbol{N}}\tilde{c}_{\boldsymbol{k}}\underbrace{\int_{\mathbb{R}^{d}}v(\boldsymbol{x})e_{\boldsymbol{k}}(\boldsymbol{x})d\boldsymbol{x}}_{=:\tilde{v}_{\boldsymbol{k}}}.\label{eq:vk_tilde}
\end{equation}
This works if $\boldsymbol{M}$ is large enough. Similar to the solution
presented in Equation (\ref{eq:ck}), the coefficients $\tilde{v}_{\boldsymbol{k}}$
are given analytically:
\begin{equation}
\tilde{v}_{\boldsymbol{k}}=\frac{1}{2^{d-1}}\sum_{\boldsymbol{s}\in\mathcal{S}}\Re\left\{ \widehat{v}\left(\frac{\pi}{2}\frac{\boldsymbol{sk}}{\boldsymbol{L}}\right)\exp\left(i\frac{\pi}{2}\boldsymbol{s}\cdot\boldsymbol{k}\right)\right\} ,\label{eq:vkTilde}
\end{equation}
provided $\widehat{v}$ is known. As shown in Section \ref{sec:Functions-of-interest},
there are many functions of interest such that $\widehat{v}$ is given
explicitly. 
\begin{rem}
In the special case that $\widehat{f}$ only takes real values, the
computational cost of the classical COS-i and the damped COS-iv methods
can be reduced by (about) a factor of one half, since $c_{\boldsymbol{k}}=0$
if $\sum_{h=1}^{d}k_{h}$ is odd. This applies for example in the
case of the multivariate normal distribution, see Example \ref{exa:(Normal-distribution).-Let}.
\end{rem}

The following theorem shows that multivariate densities can be approximated
by a Fourier-cosine series. The theorem also includes numerical uncertainty
on the Fourier transform $\widehat{f}$. Note that the classical COS-i
and the damped COS-iv methods need to evaluate $\widehat{f}$ only
at finitely many points, see Equation (\ref{eq:ck}), given by
\begin{equation}
\mathcal{U}(\boldsymbol{L},\boldsymbol{N}):=\left\{ \frac{\pi}{2}\frac{\boldsymbol{sk}}{\boldsymbol{L}}:\boldsymbol{0}\leq\boldsymbol{k}\leq\boldsymbol{N},\,\boldsymbol{s}\in\mathcal{S}\right\} .\label{eq:U}
\end{equation}

\begin{thm}
\label{thm:approx=000020f}\emph{(COS approximation of $f$).} Assume
that Assumption \ref{A1} or \ref{A2} holds. Let $\vartheta:\mathbb{R}^{d}\to\mathbb{C}$
and define $\tilde{c}_{\boldsymbol{k}}$ as in Equation (\ref{eq:ck_tilde}).
For any $\varepsilon>0$ there exists a $\boldsymbol{L}\in\mathbb{R}_{+}^{d}$,
a $\boldsymbol{N}\in\mathbb{N}^{d}$ and a $\delta>0$ such that $\max_{\boldsymbol{u}\in\mathcal{U}(\boldsymbol{L},\boldsymbol{N})}\left|\widehat{f}(\boldsymbol{u})-\vartheta(\boldsymbol{u})\right|<\delta$
implies
\[
\left\Vert f-\sideset{}{'}\sum_{\boldsymbol{\boldsymbol{0}}\leq\boldsymbol{k}\leq\boldsymbol{N}}\tilde{c}_{\boldsymbol{k}}e_{\boldsymbol{k}}1_{[-\boldsymbol{L},\boldsymbol{L}]}\right\Vert _{2}<\varepsilon.
\]
Note that $\boldsymbol{N}$ depends on $\boldsymbol{L}$ and that
$\delta$ depends on both $\boldsymbol{L}$ and $\boldsymbol{N}$.
\end{thm}

\begin{proof}
Define $e_{\boldsymbol{k}}^{\boldsymbol{L}}=e_{\boldsymbol{k}}1_{[-\boldsymbol{L},\boldsymbol{L}]}$.
It holds for $\boldsymbol{l},\boldsymbol{k}\in\mathbb{N}_{0}^{d}$
that
\begin{align}
\left\langle e_{\boldsymbol{k}}^{\boldsymbol{L}},e_{\boldsymbol{l}}^{\boldsymbol{L}}\right\rangle  & =\begin{cases}
2^{\Lambda(\boldsymbol{k})}\prod_{h=1}^{d}L_{h} & ,\boldsymbol{k}=\boldsymbol{l},\\
0 & ,\text{otherwise,}
\end{cases}\label{eq:ekel}
\end{align}
where $\Lambda$ is defined in Section \ref{sec:Notation}. For any
$\boldsymbol{L}\in\mathbb{R}_{+}^{d}$ and $\boldsymbol{N}\in\mathbb{N}^{d}$,
it holds that
\begin{align*}
\left\Vert f-\sideset{}{'}\sum_{\boldsymbol{\boldsymbol{0}}\leq\boldsymbol{k}\leq\boldsymbol{N}}\tilde{c}_{\boldsymbol{k}}e_{\boldsymbol{k}}^{\boldsymbol{L}}\right\Vert _{2}\leq & \underbrace{\left\Vert f-f1_{[-\boldsymbol{L},\boldsymbol{L}]}\right\Vert _{2}}_{=:A_{1}(\boldsymbol{L})}+\underbrace{\left\Vert f1_{[-\boldsymbol{L},\boldsymbol{L}]}-\sideset{}{'}\sum_{\boldsymbol{\boldsymbol{0}}\leq\boldsymbol{k}\leq\boldsymbol{N}}a_{\boldsymbol{k}}e_{\boldsymbol{k}}^{\boldsymbol{L}}\right\Vert _{2}}_{=:A_{2}(\boldsymbol{L},\boldsymbol{N})}\\
 & +\underbrace{\left\Vert \sideset{}{'}\sum_{\boldsymbol{\boldsymbol{0}}\leq\boldsymbol{k}\leq\boldsymbol{N}}(a_{\boldsymbol{k}}-c_{\boldsymbol{k}})e_{\boldsymbol{k}}^{\boldsymbol{L}}\right\Vert _{2}}_{=:A_{3}(\boldsymbol{L},\boldsymbol{N})}+\underbrace{\left\Vert \sideset{}{'}\sum_{\boldsymbol{\boldsymbol{0}}\leq\boldsymbol{k}\leq\boldsymbol{N}}(c_{\boldsymbol{k}}-\tilde{c}_{\boldsymbol{k}})e_{\boldsymbol{k}}^{\boldsymbol{L}}\right\Vert _{2}}_{=:A_{4}(\boldsymbol{L},\boldsymbol{N})}.
\end{align*}
Further,
\begin{align*}
A_{3}(\boldsymbol{L},\boldsymbol{N})^{2} & =\sum_{\boldsymbol{\boldsymbol{0}}\leq\boldsymbol{k}\leq\boldsymbol{N}}\sum_{\boldsymbol{\boldsymbol{0}}\leq\boldsymbol{l}\leq\boldsymbol{N}}\frac{1}{2^{\Lambda(\boldsymbol{k})+\Lambda(\boldsymbol{l})}}(a_{\boldsymbol{k}}-c_{\boldsymbol{k}})(a_{\boldsymbol{l}}-c_{\boldsymbol{l}})\left\langle e_{\boldsymbol{k}}^{\boldsymbol{L}},e_{\boldsymbol{l}}^{\boldsymbol{L}}\right\rangle \leq\sideset{}{'}\sum_{\boldsymbol{k}\in\mathbb{N}_{0}^{d}}\prod_{h=1}^{d}\{L_{h}\}\left|a_{\boldsymbol{k}}-c_{\boldsymbol{k}}\right|^{2}=B_{f}(\boldsymbol{L}),
\end{align*}
see Definition \ref{def:COS_admissible}. For $\varepsilon>0$, choose
$\boldsymbol{L}\in\mathbb{R}_{+}^{d}$ such that $A_{1}(\boldsymbol{L})<\frac{\varepsilon}{4}$
and $B_{f}(\boldsymbol{L})<\big(\frac{\varepsilon}{4}\big){}^{2}$.
Hence, $A_{3}(\boldsymbol{L},\boldsymbol{N})<\frac{\varepsilon}{4}$.
Then choose $\boldsymbol{N}\in\mathbb{N}^{d}$ such that $A_{2}(\boldsymbol{L},\boldsymbol{N})<\frac{\varepsilon}{4}$.
Such $\boldsymbol{N}$ exists by classical Fourier analysis, see \citet[Theorem 9B.1]{pivato2010linear}.
Let $\Upsilon:=\max_{\boldsymbol{u}\in\mathcal{U}(\boldsymbol{L},\boldsymbol{N})}\left|\widehat{f}(\boldsymbol{u})-\vartheta(\boldsymbol{u})\right|$.
By the definition of $c_{\boldsymbol{k}}$ and $\tilde{c}_{\boldsymbol{k}}$,
see Equation (\ref{eq:ck}), it follows that
\begin{align*}
|c_{\boldsymbol{k}}-\tilde{c}_{\boldsymbol{k}}| & \leq\frac{1}{2^{d-1}\prod_{h=1}^{d}L_{h}}\sum_{\boldsymbol{s}\in\mathcal{S}}\left|\widehat{f}\left(\frac{\pi}{2}\frac{\boldsymbol{sk}}{\boldsymbol{L}}\right)-\vartheta\left(\frac{\pi}{2}\frac{\boldsymbol{sk}}{\boldsymbol{L}}\right)\right|\leq\frac{\Upsilon}{\prod_{h=1}^{d}L_{h}}.
\end{align*}
Similarly to the analysis of $A_{3}$, we have
\begin{align}
A_{4}(\boldsymbol{L},\boldsymbol{N}) & ^{2}\leq\sum_{\boldsymbol{\boldsymbol{0}}\leq\boldsymbol{k}\leq\boldsymbol{N}}\prod_{h=1}^{d}\{L_{h}\}\left|c_{\boldsymbol{k}}-\tilde{c}_{\boldsymbol{k}}\right|^{2}\leq\frac{\Upsilon^{2}}{\prod_{h=1}^{d}L_{h}}\prod_{h=1}^{d}\{N_{h}+1\}.\label{eq:A4}
\end{align}
Choose $\delta=\frac{\varepsilon}{4}\sqrt{\prod_{h=1}^{d}L_{h}}\left(\prod_{h=1}^{d}\{N_{h}+1\}\right)^{-\frac{1}{2}}.$
Then $\Upsilon<\delta$ implies $A_{4}(\boldsymbol{L},\boldsymbol{N})<\frac{\varepsilon}{4}$,
which concludes the proof.
\end{proof}
The next two corollaries provide sufficient conditions to ensure that
the classical COS-i and COS-iii methods and the damped COS-iv methods
approximate the integral (\ref{eq:wg=00003Dvf}) within a predefined
error tolerance $\varepsilon>0$, including numerical uncertainty
on $\widehat{f}$. 
\begin{cor}
\label{cor:generalCase}\emph{(Convergence of the classical COS-i
and COS-iii methods).} Assume that Assumption \ref{A1} holds. The
integral of the product of $f$ and $v$ can be approximated by a
finite sum as follows: Let $\varepsilon>0$. Let $\boldsymbol{M}\in\mathbb{R}_{+}^{d}$
and $\xi>0$ such that
\begin{equation}
\int_{\mathbb{R}^{d}\setminus[-\boldsymbol{M},\boldsymbol{M}]}\left|v(\boldsymbol{x})f(\boldsymbol{x})\right|d\boldsymbol{x}\leq\frac{\varepsilon}{3},\quad\left\Vert v1_{[-\boldsymbol{M},\boldsymbol{M}]}\right\Vert _{2}\leq\xi.\label{eq:Mgeneral}
\end{equation}
Let $\boldsymbol{L}\geq\boldsymbol{M}$ be such that
\begin{equation}
\left\Vert f-f1_{[-\boldsymbol{L},\boldsymbol{L}]}\right\Vert _{2}\leq\frac{\varepsilon}{12\xi}\label{eq:L1general}
\end{equation}
and
\begin{equation}
B_{f}(\boldsymbol{L})\leq\left(\frac{\varepsilon}{12\xi}\right)^{2}.\label{eq:L2general}
\end{equation}
Choose $\boldsymbol{N}\in\mathbb{N}^{d}$ large enough, so that 
\begin{equation}
\left\Vert f1_{[-\boldsymbol{L},\boldsymbol{L}]}-\sideset{}{'}\sum_{\boldsymbol{\boldsymbol{0}}\leq\boldsymbol{k}\leq\boldsymbol{N}}a_{\boldsymbol{k}}e_{\boldsymbol{k}}1_{[-\boldsymbol{L},\boldsymbol{L}]}\right\Vert _{2}\leq\frac{\varepsilon}{12\xi}.\label{eq:Nlargeenough}
\end{equation}
Let $c:=\frac{\varepsilon}{12\xi}\frac{\sqrt{\prod_{h=1}^{d}L_{h}}}{\sqrt{\prod_{h=1}^{d}\{N_{h}+1\}}}$.
Then $c>0$. For $\vartheta:\mathbb{R}^{d}\to\mathbb{C}$ assume
\begin{equation}
\max_{\boldsymbol{u}\in\mathcal{U}(\boldsymbol{L},\boldsymbol{N})}\left|\widehat{f}(\boldsymbol{u})-\vartheta(\boldsymbol{u})\right|\leq c.\label{eq:phi_tilde}
\end{equation}
Then it follows that
\begin{equation}
\left|\int_{\mathbb{R}^{d}}v(\boldsymbol{x})f(\boldsymbol{x})d\boldsymbol{x}-\sideset{}{'}\sum_{\boldsymbol{\boldsymbol{0}}\leq\boldsymbol{k}\leq\boldsymbol{N}}\tilde{c}_{\boldsymbol{k}}v_{\boldsymbol{k}}\right|\leq\varepsilon.\label{eq:Ngeneral}
\end{equation}
\end{cor}

\begin{cor}
\label{cor:(Convergence-damped}\emph{(Convergence of the damped COS-iv
method).} Assume that Assumption \ref{A2} holds. Let $\varepsilon$,
$\xi,$ $\boldsymbol{L}$, $\boldsymbol{M}$, $\boldsymbol{N}$, $c$
and $\vartheta$ be as in Corollary \ref{cor:generalCase}. Let $\eta>0$
such that $\sideset{}{'}\sum_{\boldsymbol{\boldsymbol{0}}\leq\boldsymbol{k}\leq\boldsymbol{N}}|\tilde{c}_{\boldsymbol{k}}|^{2}\leq\eta.$
Assume that
\begin{equation}
\sideset{}{'}\sum_{\boldsymbol{\boldsymbol{0}}\leq\boldsymbol{k}\leq\boldsymbol{N}}|\tilde{v}_{\boldsymbol{k}}-v_{\boldsymbol{k}}|^{2}\leq\frac{\varepsilon^{2}}{9\eta}.\label{eq:v_l_v_l_tilde}
\end{equation}
Then it follows that $\left|\int_{\mathbb{R}^{d}}v(\boldsymbol{x})f(\boldsymbol{x})d\boldsymbol{x}-\sideset{}{'}\sum_{\boldsymbol{\boldsymbol{0}}\leq\boldsymbol{k}\leq\boldsymbol{N}}\tilde{c}_{\boldsymbol{k}}\tilde{v}_{\boldsymbol{k}}\right|\leq\varepsilon$.
\end{cor}

\begin{proof}
We first prove Corollary \ref{cor:(Convergence-damped}. Define $e_{\boldsymbol{k}}^{\boldsymbol{L}}=e_{\boldsymbol{k}}1_{[-\boldsymbol{L},\boldsymbol{L}]}$.
Let $A_{1}(\boldsymbol{L})$, $A_{2}(\boldsymbol{L},\boldsymbol{N})$
and $A_{4}(\boldsymbol{L},\boldsymbol{N})$ be as in the proof of
Theorem \ref{thm:approx=000020f}. By Inequalities (\ref{eq:A4},
\ref{eq:phi_tilde}) it follows that $A_{4}(\boldsymbol{L},\boldsymbol{N})\leq\frac{\varepsilon}{12\xi}$.
Due to $v_{\boldsymbol{k}}=\langle v1_{[-\boldsymbol{M},\boldsymbol{M}]},e_{\boldsymbol{k}}^{\boldsymbol{L}}\rangle$
and applying Theorem \ref{thm:approx=000020f} and the Cauchy-Schwarz
inequality, we have that
\begin{align*}
\Big| & \int_{\mathbb{R}^{d}}v(\boldsymbol{x})f(\boldsymbol{x})d\boldsymbol{x}-\sideset{}{'}\sum_{\boldsymbol{\boldsymbol{0}}\leq\boldsymbol{k}\leq\boldsymbol{N}}\tilde{c}_{\boldsymbol{k}}\tilde{v}_{\boldsymbol{k}}\Big|\\
= & \bigg|\int_{\mathbb{R}^{d}\setminus[-\boldsymbol{M},\boldsymbol{M}]}v(\boldsymbol{x})f(\boldsymbol{x})d\boldsymbol{x}+\langle v1_{[-\boldsymbol{M},\boldsymbol{M}]},f\rangle-\sideset{}{'}\sum_{\boldsymbol{\boldsymbol{0}}\leq\boldsymbol{k}\leq\boldsymbol{N}}\tilde{c}_{\boldsymbol{k}}\langle v1_{[-\boldsymbol{M},\boldsymbol{M}]},e_{\boldsymbol{k}}^{\boldsymbol{L}}\rangle-\sideset{}{'}\sum_{\boldsymbol{\boldsymbol{0}}\leq\boldsymbol{k}\leq\boldsymbol{N}}\tilde{c}_{\boldsymbol{k}}(\tilde{v}_{\boldsymbol{k}}-v_{\boldsymbol{k}})\bigg|\\
\leq & \underbrace{\int_{\mathbb{R}^{d}\setminus[-\boldsymbol{M},\boldsymbol{M}]}|v(\boldsymbol{x})f(\boldsymbol{x})|d\boldsymbol{x}}_{=:D_{1}(\boldsymbol{M})}+\Big|\Big\langle v1_{[-\boldsymbol{M},\boldsymbol{M}]},f-\sideset{}{'}\sum_{\boldsymbol{\boldsymbol{0}}\leq\boldsymbol{k}\leq\boldsymbol{N}}\tilde{c}_{\boldsymbol{k}}e_{\boldsymbol{k}}^{\boldsymbol{L}}\Big\rangle\Big|+\underbrace{\sqrt{\bigg(\,\sideset{}{'}\sum_{\boldsymbol{\boldsymbol{0}}\leq\boldsymbol{k}\leq\boldsymbol{N}}|\tilde{v}_{\boldsymbol{k}}-v_{\boldsymbol{k}}|^{2}\bigg)\bigg(\,\sideset{}{'}\sum_{\boldsymbol{\boldsymbol{0}}\leq\boldsymbol{k}\leq\boldsymbol{N}}|\tilde{c}_{\boldsymbol{k}}|^{2}\bigg)}}_{=:D_{2}(\boldsymbol{N},\boldsymbol{L},\boldsymbol{M})}\\
< & \frac{\varepsilon}{3}+\|v1_{[-\boldsymbol{M},\boldsymbol{M}]}\|_{2}\,\Big\Vert f-\sideset{}{'}\sum_{\boldsymbol{\boldsymbol{0}}\leq\boldsymbol{k}\leq\boldsymbol{N}}\tilde{c}_{\boldsymbol{k}}e_{\boldsymbol{k}}^{\boldsymbol{L}}\Big\Vert_{2}+D_{2}(\boldsymbol{N},\boldsymbol{L},\boldsymbol{M})\\
< & \frac{\varepsilon}{3}+\xi\left(A_{1}(\boldsymbol{L})+A_{2}(\boldsymbol{L},\boldsymbol{N})+\sqrt{B_{f}(\boldsymbol{L})}+A_{4}(\boldsymbol{L},\boldsymbol{N})\right)+D_{2}(\boldsymbol{N},\boldsymbol{L},\boldsymbol{M})\\
\leq & \frac{\varepsilon}{3}+\xi\left(\frac{\varepsilon}{12\xi}+\frac{\varepsilon}{12\xi}+\frac{\varepsilon}{12\xi}+\frac{\varepsilon}{12\xi}\right)+\frac{\varepsilon}{3}=\varepsilon.
\end{align*}
To prove Corollary \ref{cor:generalCase}, replace $\tilde{v}_{\boldsymbol{k}}$
by $v_{\boldsymbol{k}}$ in the inequalities above. Then, $D_{2}(\boldsymbol{N},\boldsymbol{L},\boldsymbol{M})=0$,
and the assertion in Corollary \ref{cor:generalCase} will follow.
\end{proof}
\begin{rem}
From the proof of Corollary 3.8, it can seen that the Inequalities
(\ref{eq:Mgeneral} - \ref{eq:phi_tilde}) could be strengthened replacing
$\frac{\varepsilon}{3}$ with $\frac{\varepsilon}{2}$ and $\frac{\varepsilon}{12\xi}$
with $\frac{\varepsilon}{8\xi}$. The convergence of the classical
COS-ii method approximating $v_{\boldsymbol{k}}$ numerically by $v_{\boldsymbol{k}}^{\text{Num}}$
follows as in Corollary \ref{cor:(Convergence-damped} replacing $\tilde{v}_{\boldsymbol{k}}$
by $v_{\boldsymbol{k}}^{\text{Num}}$. 
\end{rem}

\begin{rem}
Sufficient conditions such that Inequality (\ref{eq:v_l_v_l_tilde})
holds are provided in Corollary \ref{cor:vk_tilde_M_N}.
\end{rem}

\begin{rem}
We do not need all conditions in Assumption \ref{A1} to prove Theorem
\ref{thm:approx=000020f} and Corollary \ref{cor:generalCase}: $f$
does not need to decay exponentially and $v$ does not need to be
bounded; it is sufficient that $v$ is locally in $\mathcal{L}^{1}\cap\mathcal{L}^{2}$.
\end{rem}

\begin{rem}
\label{rem:c}Inequality (\ref{eq:phi_tilde}) tells us how close
the approximation $\vartheta$ has to be to $\widehat{f}$ to ensure
that the overall error of the COS-i, COS-iii or COS-iv method is utmost
$\varepsilon$. In Section \ref{subsec:Uncertainty-of-the}, we approximate
$\widehat{f}$ by numerically solving an ODE, and Inequality (\ref{eq:phi_tilde})
helps us to choose the number of steps for the Euler method.
\end{rem}

\begin{rem}
\label{rem:M>L}$\boldsymbol{M}$ and $\boldsymbol{L}$ in the Corollaries
\ref{cor:generalCase} and \ref{cor:(Convergence-damped} play different
roles: $\boldsymbol{M}$ depends mainly on $v$ and $\boldsymbol{L}$
depends only on $f$. For example, if $v$ is bounded by a very small
bound, the conditions in (\ref{eq:Mgeneral}) are satisfied for $\boldsymbol{M}$
close to zero. However, to satisfy Inequalities (\ref{eq:L1general},
\ref{eq:L2general}), it might be necessary to choose $\boldsymbol{L}$
much larger than $\boldsymbol{M}$.
\end{rem}

The next theorem and two corollaries provide an explicit formula for
the truncation range and an implicit formula for the number of terms
for the classical COS-i and COS-iii methods and the damped COS-iv
method.
\begin{thm}
\emph{\label{thm:(Multidimensional-COS-method}(Classical COS-i and
COS-iii methods: Find $\boldsymbol{M}$ and $\boldsymbol{L}$). }Assume
that Assumption \ref{A1} holds. Let $n\geq2$ be an even number and
assume that the moments of $f$ of $n^{th}-$order exist, i.e.,
\begin{equation}
m_{h}(n):=\int_{\mathbb{R}^{d}}x_{h}^{n}f(\boldsymbol{x})d\boldsymbol{x}=i^{-n}\left.\frac{\partial^{n}}{\partial u_{h}^{n}}\widehat{f}(\boldsymbol{u})\right|_{\boldsymbol{u}=\boldsymbol{0}}\in(0,\infty),\quad h=1,...,d.\label{eq:moments}
\end{equation}
Let $\varepsilon>0$ be small enough. Define
\begin{equation}
M_{h}:=\left(\frac{3d\left\Vert v\right\Vert _{\infty}}{\varepsilon}m_{h}(n)\right)^{\frac{1}{n}},\quad h=1,...,d,\label{eq:m-1-1}
\end{equation}
and $\boldsymbol{L}=\boldsymbol{M}=(M_{1},...,M_{d})\in\mathbb{R}_{+}^{d}$.
There is a $\boldsymbol{N}\in\mathbb{N}_{0}^{d}$ such that
\begin{equation}
\left|\int_{\mathbb{R}^{d}}v(\boldsymbol{x})f(\boldsymbol{x})d\boldsymbol{x}-\sideset{}{'}\sum_{\boldsymbol{\boldsymbol{0}}\leq\boldsymbol{k}\leq\boldsymbol{N}}c_{\boldsymbol{k}}v_{\boldsymbol{k}}\right|\leq\varepsilon.\label{eq:endRes}
\end{equation}
\end{thm}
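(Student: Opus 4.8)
The plan is to write the error in (\ref{eq:endRes}) as a domain-truncation term plus a cosine-approximation term, and to exploit the fact that the choice $\boldsymbol{L}=\boldsymbol{M}$ in (\ref{eq:m-1-1}) forces $\min_{h}M_{h}\to\infty$ as $\varepsilon\to0$. First, since $\boldsymbol{L}=\boldsymbol{M}$, the coefficient $v_{\boldsymbol{k}}=\langle v1_{[-\boldsymbol{M},\boldsymbol{M}]},e_{\boldsymbol{k}}^{\boldsymbol{L}}\rangle$ with $e_{\boldsymbol{k}}^{\boldsymbol{L}}=e_{\boldsymbol{k}}1_{[-\boldsymbol{L},\boldsymbol{L}]}$, so that
\[
\int_{\mathbb{R}^{d}}vf\,d\boldsymbol{x}-\sideset{}{'}\sum_{\boldsymbol{0}\leq\boldsymbol{k}\leq\boldsymbol{N}}c_{\boldsymbol{k}}v_{\boldsymbol{k}}=\int_{\mathbb{R}^{d}\setminus[-\boldsymbol{M},\boldsymbol{M}]}vf\,d\boldsymbol{x}+\Big\langle v1_{[-\boldsymbol{M},\boldsymbol{M}]},\,f1_{[-\boldsymbol{L},\boldsymbol{L}]}-\sideset{}{'}\sum_{\boldsymbol{0}\leq\boldsymbol{k}\leq\boldsymbol{N}}c_{\boldsymbol{k}}e_{\boldsymbol{k}}^{\boldsymbol{L}}\Big\rangle.
\]
By the triangle and Cauchy--Schwarz inequalities this is bounded by $D_{1}(\boldsymbol{M})+\|v1_{[-\boldsymbol{M},\boldsymbol{M}]}\|_{2}\big(A_{2}(\boldsymbol{L},\boldsymbol{N})+A_{3}(\boldsymbol{L},\boldsymbol{N})\big)$, where $D_{1}(\boldsymbol{M})=\int_{\mathbb{R}^{d}\setminus[-\boldsymbol{M},\boldsymbol{M}]}|vf|\,d\boldsymbol{x}$ and $A_{2},A_{3}$ are exactly the quantities from the proof of Theorem \ref{thm:approx f}. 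Note that no tail term $A_{1}$ appears, because the inner product only sees the truncated box $[-\boldsymbol{L},\boldsymbol{L}]=[-\boldsymbol{M},\boldsymbol{M}]$. I would aim to make each of the three resulting contributions at most $\varepsilon/3$.

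For the truncation term I would use $\mathbb{R}^{d}\setminus[-\boldsymbol{M},\boldsymbol{M}]\subset\bigcup_{h=1}^{d}\{|x_{h}|>M_{h}\}$ together with $\|v\|_{\infty}<\infty$ and a Markov-type estimate: since $n$ is even, $x_{h}^{n}=|x_{h}|^{n}\geq0$ and $m_{h}(n)=\int x_{h}^{n}f\,d\boldsymbol{x}$, so $\int_{\{|x_{h}|>M_{h}\}}f\,d\boldsymbol{x}\leq m_{h}(n)/M_{h}^{n}$. Plugging in the explicit $M_{h}$ from (\ref{eq:m-1-1}) gives $m_{h}(n)/M_{h}^{n}=\varepsilon/(3d\|v\|_{\infty})$, and summing over $h$ yields $D_{1}(\boldsymbol{M})\leq\|v\|_{\infty}\sum_{h=1}^{d}\varepsilon/(3d\|v\|_{\infty})=\varepsilon/3$. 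This is the routine part and explains the particular constant $3d$ in (\ref{eq:m-1-1}).

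The delicate contribution is $\|v1_{[-\boldsymbol{M},\boldsymbol{M}]}\|_{2}\,A_{3}(\boldsymbol{L},\boldsymbol{N})$. I would bound the payoff factor crudely by $\|v1_{[-\boldsymbol{M},\boldsymbol{M}]}\|_{2}\leq\|v\|_{\infty}\,2^{d/2}\sqrt{\prod_{h=1}^{d}M_{h}}$, which by (\ref{eq:m-1-1}) grows only polynomially, like $\varepsilon^{-d/(2n)}$, as $\varepsilon\to0$. As in the proof of Theorem \ref{thm:approx f} one has $A_{3}(\boldsymbol{L},\boldsymbol{N})\leq\sqrt{B_{f}(\boldsymbol{L})}$, and since $f$ satisfies (\ref{eq:propBed}) and decays exponentially, the bound (\ref{eq:B(L)_2}) of Proposition \ref{prop:COSadmissible} shows that the tail integrals $\int_{\mathbb{R}^{d}\setminus[-\boldsymbol{L},\boldsymbol{L}]}|f|^{2}$ and $\int_{\mathbb{R}^{d}\setminus[-\boldsymbol{L},\boldsymbol{L}]}|\boldsymbol{x}|^{2d}|f|^{2}$ both decay like $e^{-c\min_{h}L_{h}}$ up to polynomial prefactors, so that $B_{f}(\boldsymbol{L})$ is exponentially small in $\min_{h}M_{h}\gtrsim\varepsilon^{-1/n}$. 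Hence $\|v1_{[-\boldsymbol{M},\boldsymbol{M}]}\|_{2}\sqrt{B_{f}(\boldsymbol{L})}\to0$ faster than any power of $\varepsilon$, and for $\varepsilon$ small enough it is at most $\varepsilon/3$. This balancing of the polynomially growing $L^{2}$-norm of the truncated payoff against the exponentially decaying $B_{f}(\boldsymbol{L})$ is the main obstacle, and it is exactly where the hypothesis ``$\varepsilon>0$ small enough'' is consumed.

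Finally, with $\boldsymbol{L}=\boldsymbol{M}$ now fixed, $f1_{[-\boldsymbol{L},\boldsymbol{L}]}\in\mathcal{L}^{2}$ and the tensor-product cosine system $\{e_{\boldsymbol{k}}\}$ is complete in $L^{2}([-\boldsymbol{L},\boldsymbol{L}])$, so $A_{2}(\boldsymbol{L},\boldsymbol{N})\to0$ as $\min_{h}N_{h}\to\infty$ by classical Fourier analysis; I would choose $\boldsymbol{N}$ so large that $\|v1_{[-\boldsymbol{M},\boldsymbol{M}]}\|_{2}\,A_{2}(\boldsymbol{L},\boldsymbol{N})\leq\varepsilon/3$. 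Adding the three bounds then gives the estimate (\ref{eq:endRes}).
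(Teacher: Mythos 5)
Your proposal is correct and is essentially the paper's own argument: the paper proves the theorem by verifying the hypotheses of Corollary \ref{cor:generalCase}, using the identical Markov-inequality step (same constant $3d$, yielding exactly $\varepsilon/3$), the identical balancing of the polynomially growing bound $\xi=\left\Vert v\right\Vert _{\infty}\sqrt{2^{d}\prod_{h=1}^{d}M_{h}}$ against the exponentially decaying tail quantities $B_{f}(\boldsymbol{L})$ and $\left\Vert f-f1_{[-\boldsymbol{L},\boldsymbol{L}]}\right\Vert _{2}$ (which is precisely where ``$\varepsilon$ small enough'' is consumed), and the identical appeal to classical Fourier analysis for the existence of $\boldsymbol{N}$. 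The only difference is minor: you inline the corollary's decomposition and observe that, because $\boldsymbol{L}=\boldsymbol{M}$, the tail term $A_{1}(\boldsymbol{L})=\left\Vert f-f1_{[-\boldsymbol{L},\boldsymbol{L}]}\right\Vert _{2}$ never enters the inner product, whereas the paper bounds it separately by $\frac{\varepsilon}{12\xi}$ using the exponential decay of $f$---a small streamlining that does not change the substance.
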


\begin{cor}
\label{cor:vk_tilde_M_N}\emph{(Damped COS-iv method: Find $\boldsymbol{M}$
and $\boldsymbol{L}$).} Let \emph{$\boldsymbol{M}$ and $\boldsymbol{L}$}
be as in Theorem \ref{thm:(Multidimensional-COS-method}. Assume that
Assumption \ref{A2} holds. There is a $\boldsymbol{N}\in\mathbb{N}_{0}^{d}$
such that
\begin{equation}
\left|\int_{\mathbb{R}^{d}}v(\boldsymbol{x})f(\boldsymbol{x})d\boldsymbol{x}-\sideset{}{'}\sum_{\boldsymbol{\boldsymbol{0}}\leq\boldsymbol{k}\leq\boldsymbol{N}}c_{\boldsymbol{k}}\tilde{v}_{\boldsymbol{k}}\right|\leq\varepsilon.\label{eq:endRes-1}
\end{equation}
\end{cor}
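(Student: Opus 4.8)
The plan is to reduce the statement to the general convergence result, Corollary~\ref{cor:generalCase}, applied with the exact characteristic function and with the truncation range $\boldsymbol{L}=\boldsymbol{M}$ already fixed by Theorem~\ref{thm:(Multidimensional-COS-method}. Taking $\vartheta=\widehat{f}$ in Corollary~\ref{cor:generalCase} forces $\tilde{c}_{\boldsymbol{k}}=c_{\boldsymbol{k}}$, compare (\ref{eq:ck}) and (\ref{eq:ck_tilde}), so that its conclusion (\ref{eq:Ngeneral}) becomes precisely the claim (\ref{eq:endRes-1}). It therefore suffices to verify that every hypothesis of Corollary~\ref{cor:generalCase} is met for $\varepsilon$ small enough.

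First I would dispose of the routine hypotheses. Since $v$ is bounded and $f\in\mathcal{L}^{1}$ we have $vf\in\mathcal{L}^{1}$, and $v\in\mathcal{L}^{2}$ makes $v$ locally square-integrable; $f$ is COS-admissible by assumption. The first bound in (\ref{eq:Mgeneral}) is exactly what the choice (\ref{eq:m-1-1}) delivers: bounding $|vf|\leq\|v\|_{\infty}f$, covering $\mathbb{R}^{d}\setminus[-\boldsymbol{M},\boldsymbol{M}]$ by the $d$ slabs $\{|x_{h}|>M_{h}\}$, and applying the Markov-type estimate $\int_{\{|x_{h}|>M_{h}\}}f\leq m_{h}(n)M_{h}^{-n}$ (valid since $n$ is even and $f\geq 0$) yields $\int_{\mathbb{R}^{d}\setminus[-\boldsymbol{M},\boldsymbol{M}]}|vf|\leq\varepsilon/3$. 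I then set $\xi:=\|v1_{[-\boldsymbol{M},\boldsymbol{M}]}\|_{2}\leq\|v\|_{2}<\infty$, which is finite and bounded in $\varepsilon$ precisely because now $v\in\mathcal{L}^{2}$. With $\vartheta=\widehat{f}$ the requirement (\ref{eq:phi_tilde}) holds with zero on the left, and for the constant $\eta$ I take $\eta:=(\int_{\mathbb{R}^{d}}|f|^{2}d\boldsymbol{x}+G(\boldsymbol{L}))/\prod_{h=1}^{d}L_{h}$, which dominates $\sideset{}{'}\sum_{\boldsymbol{0}\leq\boldsymbol{k}\leq\boldsymbol{N}}|c_{\boldsymbol{k}}|^{2}$ by the bound (\ref{eq:bound_ck2}) established in Lemma~\ref{lem:fMinusAkEk}.

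The substantive step is the new condition (\ref{eq:v_l_v_l_tilde}). Because $\boldsymbol{L}=\boldsymbol{M}$, the definitions (\ref{eq:vk}) and (\ref{eq:vk_tilde}) give $\tilde{v}_{\boldsymbol{k}}-v_{\boldsymbol{k}}=\int_{\mathbb{R}^{d}\setminus[-\boldsymbol{L},\boldsymbol{L}]}v(\boldsymbol{x})e_{\boldsymbol{k}}(\boldsymbol{x})d\boldsymbol{x}$, so Definition~\ref{def:COS_admissible} applied to $v$ gives $\sideset{}{'}\sum_{\boldsymbol{0}\leq\boldsymbol{k}\leq\boldsymbol{N}}|\tilde{v}_{\boldsymbol{k}}-v_{\boldsymbol{k}}|^{2}\leq\prod_{h=1}^{d}L_{h}\,B_{v}(\boldsymbol{L})$, where $B_{v}$ denotes the quantity of Definition~\ref{def:COS_admissible} with $v$ in place of $f$. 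With the above $\eta$, condition (\ref{eq:v_l_v_l_tilde}) reduces to $B_{v}(\boldsymbol{L})\leq\frac{\varepsilon^{2}}{9(\int_{\mathbb{R}^{d}}|f|^{2}d\boldsymbol{x}+G(\boldsymbol{L}))}$. Since $v\in\mathcal{L}^{1}\cap\mathcal{L}^{2}$ satisfies (\ref{eq:propBed}), Proposition~\ref{prop:COSadmissible} applies to $v$; feeding the exponential decay of $v$ into the bound (\ref{eq:B(L)_2}) shows that $B_{v}(\boldsymbol{L})$ is exponentially small in $\min_{h}L_{h}$. As $\boldsymbol{L}=\boldsymbol{M}$ with $\min_{h}M_{h}$ of order $\varepsilon^{-1/n}\to\infty$, the left-hand side is super-polynomially small in $\varepsilon$ whereas the right-hand side is of order $\varepsilon^{2}$, so the inequality holds for $\varepsilon$ small enough. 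The same exponential-decay mechanism, now applied to $f$, makes $\|f-f1_{[-\boldsymbol{L},\boldsymbol{L}]}\|_{2}$ and $B_{f}(\boldsymbol{L})$ exponentially small, so (\ref{eq:L1general}) and (\ref{eq:L2general}) hold, while (\ref{eq:Nlargeenough}) is achievable for $\boldsymbol{N}$ large by classical Fourier analysis. Collecting these, Corollary~\ref{cor:generalCase} yields (\ref{eq:endRes-1}).

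I expect the comparison in (\ref{eq:v_l_v_l_tilde}) to be the crux. The truncation range is forced on us by Theorem~\ref{thm:(Multidimensional-COS-method} and cannot be enlarged independently of $\varepsilon$, so the tail mass $B_{v}(\boldsymbol{L})$ must be controlled against an $O(\varepsilon^{2})$ budget at the fixed range $\boldsymbol{L}=\boldsymbol{M}\sim\varepsilon^{-1/n}$. This is exactly why the corollary assumes that $v$ itself decays exponentially: it upgrades the qualitative statement $B_{v}(\boldsymbol{L})\to0$ from Proposition~\ref{prop:COSadmissible} to a quantitative bound that beats every fixed power of $\varepsilon$, which is what makes replacing the generally intractable $v_{\boldsymbol{k}}$ by the analytically available $\tilde{v}_{\boldsymbol{k}}$ harmless.
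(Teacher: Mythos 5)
Your proposal is correct and follows essentially the same route as the paper: reduce to Corollary \ref{cor:generalCase} with $\vartheta=\widehat{f}$, note that the verification of (\ref{eq:Mgeneral})--(\ref{eq:Nlargeenough}) carries over from Theorem \ref{thm:(Multidimensional-COS-method}, and establish the new condition (\ref{eq:v_l_v_l_tilde}) by bounding $\sideset{}{'}\sum|\tilde{v}_{\boldsymbol{k}}-v_{\boldsymbol{k}}|^{2}$ through $B_{v}$ (Proposition \ref{prop:COSadmissible} applied to $v$) and playing the exponential decay of $v$ against the polynomial budget. The only deviations are cosmetic choices of constants (your $\xi\leq\|v\|_{2}$ and $\varepsilon$-dependent $\eta$ versus the paper's $\|v\|_{\infty}\sqrt{2^{d}\prod_{h}M_{h}}$ and a fixed $\eta$ built from a fixed $\boldsymbol{P}$), which makes your budget of order $\varepsilon^{2-d/n}$ rather than $\varepsilon^{2}$, but the exponential-beats-polynomial conclusion is unaffected.
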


\begin{cor}
\label{cor:NrTerms}\emph{(Classical COS-i, COS-iii and damped COS-iv
methods: Find $\boldsymbol{N}$). }Let \emph{$\boldsymbol{M}$ and
$\boldsymbol{L}$} be as in Theorem \ref{thm:(Multidimensional-COS-method}.
Let $\xi>0$ such that $\left\Vert v1_{[-\boldsymbol{M},\boldsymbol{M}]}\right\Vert _{2}\leq\xi$.
Suppose for some $\boldsymbol{N}\in\mathbb{N}_{0}^{d}$ it holds that
\begin{equation}
\left|(2\pi)^{-d}\int_{\mathbb{R}^{d}}|\widehat{f}(\boldsymbol{u})|^{2}d\boldsymbol{u}-\prod_{h=1}^{d}L_{h}\sideset{}{'}\sum_{\boldsymbol{\boldsymbol{0}}\leq\boldsymbol{k}\leq\boldsymbol{N}}|c_{\boldsymbol{k}}|^{2}\right|\leq\frac{\varepsilon^{2}}{162\xi^{2}}.\label{eq:f_l-akek}
\end{equation}
If Assumption \ref{A1} is satisfied then Inequality (\ref{eq:endRes})
holds and if Assumption \ref{A2} is satisfied then Inequality (\ref{eq:endRes-1})
holds.
\end{cor}

\begin{proof}
\emph{We first prove Theorem \ref{thm:(Multidimensional-COS-method}}:
Assumption \ref{A1} implies that $f\geq0$, $f$ decays exponentially
and $v$ is bounded. Equation (\ref{eq:moments}) follows by \citet[Thm 25.2]{bauer1996probability}.
For $h\in\{1,...,d\}$ let $\pi_{h}:\mathbb{R}^{d}\to\mathbb{R}$,
$\boldsymbol{x}\mapsto x_{h}$. Let $\lambda_{\text{Lebesgue}}^{d}$
be the Lebesgue measure on $\mathbb{R}^{d}$ and define the finite
and positive measure $\zeta:=f\lambda_{\text{Lebesgue}}^{d}$. By
Markov's inequality, it follows that
\[
\int_{\mathbb{R}^{d}\setminus[-\boldsymbol{M},\boldsymbol{M}]}\left|v(\boldsymbol{x})f(\boldsymbol{x})\right|d\boldsymbol{x}\leq\left\Vert v\right\Vert _{\infty}\sum_{h=1}^{d}\zeta\left(\left\{ \boldsymbol{x}\in\mathbb{R}^{d}:|\pi_{h}(\boldsymbol{x})|\geq M_{h}\right\} \right)\leq\left\Vert v\right\Vert _{\infty}\sum_{h=1}^{d}\frac{m_{h}(n)}{M_{h}^{n}}=\frac{\varepsilon}{3}.
\]
The last equality follows by the definition of $\boldsymbol{M}$.
Define $\xi:=\left\Vert v\right\Vert _{\infty}\sqrt{2^{d}\prod_{h=1}^{d}M_{h}}$.
It holds that $\left\Vert v1_{[-\boldsymbol{M},\boldsymbol{M}]}\right\Vert _{2}\leq\xi.$
Hence, the inequalities in (\ref{eq:Mgeneral}) are satisfied. Next,
we use the following auxiliary result: \foreignlanguage{american}{Let
$s\geq0$, $a>0$ and $n\in\mathbb{N}_{0}$ and $d\in\mathbb{N}$.
Then it holds by mathematical induction over $n$ and \citet[Theorem 8.11 ]{amann2009analysis}
that
\begin{align}
\int_{\{\boldsymbol{x}\in\mathbb{R}^{d}:|\boldsymbol{x}|>s\}}e^{-a|\boldsymbol{x}|}|\boldsymbol{x}|^{n}d\boldsymbol{x} & =\frac{d\pi^{\frac{d}{2}}}{\Gamma\left(1+\frac{d}{2}\right)}e^{-as}\frac{(n+d-1)!}{a^{n+d}}\sum_{k=0}^{n+d-1}\frac{(as)^{k}}{k!}.\label{eq:amann}
\end{align}
}For $\varepsilon$ small enough, $\boldsymbol{L}$ is large enough.
Using that $f$ decays exponentially and applying Equation (\ref{eq:amann}),
we obtain with $\ell:=\min_{h=1,...,d}L_{h}$ that
\begin{align}
\left\Vert f-f1_{[-\boldsymbol{L},\boldsymbol{L}]}\right\Vert _{2} & \leq C_{1}\sqrt{\int_{\{\boldsymbol{x}\in\mathbb{R}^{d}:|\boldsymbol{x}|>\ell\}}e^{-2C_{2}|\boldsymbol{x}|}d\boldsymbol{x}}\leq\frac{\varepsilon}{12\xi}.\label{eq:||f-f_l}
\end{align}
The last inequality holds true if $\varepsilon$ is small enough because,
thanks to Equality (\ref{eq:amann}), the term in the middle of (\ref{eq:||f-f_l})
decreases exponentially in $\varepsilon$, while the term at the right-hand
side of (\ref{eq:||f-f_l}) goes to zero like $\varepsilon^{1+\frac{d}{2n}}$
for $\varepsilon\searrow0$. Hence, Inequality (\ref{eq:L1general})
holds. By Inequality (\ref{eq:B(L)_2}) it holds that $B_{f}(\boldsymbol{L})\leq\varepsilon^{2}(12\xi)^{-2}$
if $\varepsilon$ is small enough because $B_{f}(\boldsymbol{L})$
decreases exponentially in $\varepsilon$: to see this, use Inequality
(\ref{eq:||f-f_l}) and observe that the term $\int_{\mathbb{R}^{d}\setminus[-\boldsymbol{L},\boldsymbol{L}]}\left|\boldsymbol{x}\right|^{2d}|f(\boldsymbol{x})|^{2}d\boldsymbol{x}$
converges exponentially thanks to Inequality (\ref{eq:amann}). Hence,
Inequality (\ref{eq:L2general}) holds. By classical Fourier analysis,
there is a $\boldsymbol{N}\in\mathbb{N}_{0}^{d}$ such that Inequality
(\ref{eq:Nlargeenough}) is satisfied. By assumption we have $c_{\boldsymbol{k}}=\tilde{c}_{\boldsymbol{k}}$.
Inequality (\ref{eq:phi_tilde}) holds trivially. Apply Corollary
\ref{cor:generalCase} to finish the proof of Theorem \ref{thm:(Multidimensional-COS-method}.

\emph{We prove Corollary \ref{cor:vk_tilde_M_N}}: We have to show
that Inequality (\ref{eq:v_l_v_l_tilde}) holds to prove Corollary
\ref{cor:vk_tilde_M_N}. Let $G(\boldsymbol{L})$ be as in Equality
(\ref{eq:eta_l}). Observe $G(\boldsymbol{L})\to0$, $\min_{h}L_{h}\to\infty$
because $f$ is COS-admissible by Proposition \ref{prop:COSadmissible}.
There is $\boldsymbol{P}\in\mathbb{R}_{+}^{d}$ and a $\delta>0$
such that $G(\boldsymbol{L})\leq\delta$ for all $\boldsymbol{L}\geq\boldsymbol{P}$.
By Inequality (\ref{eq:bound_ck2}), it follows for all $\boldsymbol{N}\in\mathbb{N}^{d}$
and all $\boldsymbol{L}\geq\boldsymbol{P}$ that
\begin{equation}
\sideset{}{'}\sum_{\boldsymbol{\boldsymbol{0}}\leq\boldsymbol{k}\leq\boldsymbol{N}}|c_{\boldsymbol{k}}|^{2}\leq\frac{\int_{\mathbb{R}^{d}}|f(\boldsymbol{x})|^{2}d\boldsymbol{x}+\delta}{\prod_{h=1}^{d}P_{h}}=:\eta<\infty.\label{eq:eta}
\end{equation}
It follows by Proposition \ref{prop:COSadmissible} for all $\boldsymbol{N}\in\mathbb{N}^{d}$
that
\begin{align}
\sideset{}{'}\sum_{\boldsymbol{\boldsymbol{0}}\leq\boldsymbol{k}\leq\boldsymbol{N}}|\tilde{v}_{\boldsymbol{k}}-v_{\boldsymbol{k}}|^{2} & \leq\sideset{}{'}\sum_{\boldsymbol{k}\in\mathbb{N}_{0}^{d}}\left|\int_{\mathbb{R}^{d}\setminus[-\boldsymbol{M},\boldsymbol{M}]}v(\boldsymbol{x})e_{\boldsymbol{k}}(\boldsymbol{x})d\boldsymbol{x}\right|^{2}\leq\prod_{h=1}^{d}M_{h}B_{v}(\boldsymbol{M})\leq\frac{\varepsilon^{2}}{9\eta}\label{eq:v_tile-vk}
\end{align}
the last inequality holds true if $\varepsilon$ is small enough because
the term $M_{h}B_{v}(\boldsymbol{M})$ decreases exponentially in
$\varepsilon$ since $v$ decays exponentially, while the right-hand
side of (\ref{eq:v_tile-vk}) goes to zero like $\varepsilon^{2}$
for $\varepsilon\searrow0$. 

\emph{We prove Corollary \ref{cor:NrTerms}}: Let $G(\boldsymbol{L})$
be defined as in Equation (\ref{eq:eta_l}). By Lemma \ref{lem:fMinusAkEk}
and the Plancherel theorem, it follows that
\begin{align}
\Vert f1_{[-\boldsymbol{L},\boldsymbol{L}]}-\sideset{}{'}\sum_{\boldsymbol{\boldsymbol{0}}\leq\boldsymbol{k}\leq\boldsymbol{N}}a_{\boldsymbol{k}}e_{\boldsymbol{k}}1_{[-\boldsymbol{L},\boldsymbol{L}]}\Vert_{2}^{2}\leq & \left|(2\pi)^{-d}\int_{\mathbb{R}^{d}}|\widehat{f}(\boldsymbol{u})|^{2}d\boldsymbol{u}-\prod_{h=1}^{d}L_{h}\sideset{}{'}\sum_{\boldsymbol{\boldsymbol{0}}\leq\boldsymbol{k}\leq\boldsymbol{N}}|c_{\boldsymbol{k}}|^{2}\right|+G(\boldsymbol{L})\nonumber \\
\leq & \frac{\varepsilon^{2}}{162\left\Vert v1_{[-\boldsymbol{M},\boldsymbol{M}]}\right\Vert _{2}^{2}}+\frac{\varepsilon^{2}}{162\left\Vert v1_{[-\boldsymbol{M},\boldsymbol{M}]}\right\Vert _{2}^{2}}=\left(\frac{\varepsilon}{9\left\Vert v1_{[-\boldsymbol{M},\boldsymbol{M}]}\right\Vert _{2}}\right)^{2}.\label{eq:fakekPlan}
\end{align}
The last inequality holds because for $\varepsilon>0$ small enough,
$\boldsymbol{L}$ is large enough so that $G(\boldsymbol{L})\leq\frac{\varepsilon^{2}}{162\left\Vert v1_{[-\boldsymbol{M},\boldsymbol{M}]}\right\Vert _{2}^{2}}$
since $G(\boldsymbol{L})$ decreases exponentially. Note that we may
replace the term $\frac{\varepsilon}{12\xi}$ in Inequalities (\ref{eq:L1general},
\ref{eq:L2general}, \ref{eq:Nlargeenough}) in Corollary \ref{cor:generalCase}
by $\frac{\varepsilon}{9\left\Vert v1_{[-\boldsymbol{M},\boldsymbol{M}]}\right\Vert _{2}}$,
since $c_{\boldsymbol{k}}=\tilde{c}_{\boldsymbol{k}}$. Apply Inequality
(\ref{eq:fakekPlan}) to conclude.
\end{proof}
Assume the density $f$ in Corollary \ref{cor:NrTerms} is the density
of a Lévy process at a particular time point and $\widehat{f}$ is
real. The next Proposition \ref{prop:real} shows that the term $(2\pi)^{-d}\int_{\mathbb{R}^{d}}|\widehat{f}(\boldsymbol{u})|^{2}d\boldsymbol{u}$
is then given in closed form if $f$ is known. The densities of many
Lévy processes are given explicitly or in terms of specialized functions,
e.g., for the Meixner process, the Normal Inverse Gaussian process,
the Variance Gamma process and the Generalized Hyperbolic process,
see \cite{barndorff1997normal,madan1998variance,schoutens2003levy}
and references therein. For example, the multivariate Brownian motion
at time $T>0$ has a multivariate normal density and the corresponding
Fourier transform is real (with or without damping), see Example \ref{exa:(Normal-distribution).-Let}.
\begin{prop}
\label{prop:real}Let $(\boldsymbol{X}_{t})_{t\geq0}$ be a $d$-dimensional
Lévy process. Assume that the characteristic function $\widehat{f}_{\boldsymbol{X}_{t}}$
of $\boldsymbol{X}_{t}$ is real for all $t>0$ and that $\boldsymbol{X}_{t}$
has a density, denoted by $f_{\boldsymbol{X}_{t}}$. Let $T>0$. Then
\[
(2\pi)^{-d}\int_{\mathbb{R}^{d}}|\widehat{f}_{\boldsymbol{X}_{T}}(\boldsymbol{u})|^{2}d\boldsymbol{u}=f_{\boldsymbol{X}_{2T}}(\boldsymbol{0}).
\]
\end{prop}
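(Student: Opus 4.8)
The plan is to combine two ingredients: the independent, stationary increments of the Lévy process, which factor $\widehat{f}_{\boldsymbol{X}_{2T}}$ through $\widehat{f}_{\boldsymbol{X}_{T}}$, and the Fourier inversion theorem, which turns the integral on the left-hand side into the value of the density at the origin. First I would record the structural identity $\widehat{f}_{\boldsymbol{X}_{2T}}(\boldsymbol{u})=(\widehat{f}_{\boldsymbol{X}_{T}}(\boldsymbol{u}))^{2}$. This follows by writing $\boldsymbol{X}_{2T}=\boldsymbol{X}_{T}+(\boldsymbol{X}_{2T}-\boldsymbol{X}_{T})$: the increment $\boldsymbol{X}_{2T}-\boldsymbol{X}_{T}$ is independent of $\boldsymbol{X}_{T}$ and, by stationarity, is distributed as $\boldsymbol{X}_{T}$, so the characteristic function of the sum is the product of the two identical characteristic functions.

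Next I would use the hypothesis that $\widehat{f}_{\boldsymbol{X}_{T}}$ is real-valued. Together with the previous identity this gives
\[
|\widehat{f}_{\boldsymbol{X}_{T}}(\boldsymbol{u})|^{2}=(\widehat{f}_{\boldsymbol{X}_{T}}(\boldsymbol{u}))^{2}=\widehat{f}_{\boldsymbol{X}_{2T}}(\boldsymbol{u}),\quad\boldsymbol{u}\in\mathbb{R}^{d},
\]
so the integrand on the left-hand side is exactly the characteristic function of $\boldsymbol{X}_{2T}$; in particular it is nonnegative, and it is integrable precisely when the left-hand side is finite, which is the case of interest. Under this integrability, the inversion theorem for characteristic functions provides a bounded continuous version of $f_{\boldsymbol{X}_{2T}}$ equal to the inversion integral, and evaluating it at $\boldsymbol{x}=\boldsymbol{0}$ with the Fourier convention (\ref{eq:FourierTransform}) yields
\[
f_{\boldsymbol{X}_{2T}}(\boldsymbol{0})=(2\pi)^{-d}\int_{\mathbb{R}^{d}}\widehat{f}_{\boldsymbol{X}_{2T}}(\boldsymbol{u})\,d\boldsymbol{u}=(2\pi)^{-d}\int_{\mathbb{R}^{d}}|\widehat{f}_{\boldsymbol{X}_{T}}(\boldsymbol{u})|^{2}\,d\boldsymbol{u}.
\]
Chaining these two displays is the whole claim.

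I expect the delicate point to be the legitimacy of the pointwise evaluation $f_{\boldsymbol{X}_{2T}}(\boldsymbol{0})$: one must argue that integrability of $\widehat{f}_{\boldsymbol{X}_{2T}}$ forces the density to admit a continuous representative given by the inversion formula, so that its value at the origin is unambiguous. A route that avoids the continuity question altogether is the convolution identity $f_{\boldsymbol{X}_{2T}}=f_{\boldsymbol{X}_{T}}*f_{\boldsymbol{X}_{T}}$ (again from independent, stationary increments): since reality of $\widehat{f}_{\boldsymbol{X}_{T}}$ makes $\boldsymbol{X}_{T}$ symmetric, i.e.\ $f_{\boldsymbol{X}_{T}}(-\boldsymbol{y})=f_{\boldsymbol{X}_{T}}(\boldsymbol{y})$, one obtains $f_{\boldsymbol{X}_{2T}}(\boldsymbol{0})=\int_{\mathbb{R}^{d}}f_{\boldsymbol{X}_{T}}(\boldsymbol{y})^{2}\,d\boldsymbol{y}$, which equals $(2\pi)^{-d}\int_{\mathbb{R}^{d}}|\widehat{f}_{\boldsymbol{X}_{T}}(\boldsymbol{u})|^{2}\,d\boldsymbol{u}$ by Plancherel's theorem. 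Either route gives the stated formula, and both isolate the same two facts (the increment factorization and the reality/symmetry of $\boldsymbol{X}_{T}$) as doing the real work.
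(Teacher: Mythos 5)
Your main argument is exactly the paper's proof: reality of $\widehat{f}_{\boldsymbol{X}_{T}}$ gives $|\widehat{f}_{\boldsymbol{X}_{T}}(\boldsymbol{u})|^{2}=(\widehat{f}_{\boldsymbol{X}_{T}}(\boldsymbol{u}))^{2}=\widehat{f}_{\boldsymbol{X}_{2T}}(\boldsymbol{u})$ via the independent, stationary increments, and Fourier inversion identifies $(2\pi)^{-d}\int_{\mathbb{R}^{d}}\widehat{f}_{\boldsymbol{X}_{2T}}(\boldsymbol{u})\,d\boldsymbol{u}$ with $f_{\boldsymbol{X}_{2T}}(\boldsymbol{0})$. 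Your closing alternative via the convolution identity, symmetry of $f_{\boldsymbol{X}_{T}}$ and Plancherel is a legitimate variant that cleanly handles the pointwise-evaluation subtlety the paper's one-line proof leaves implicit, but the core approach is the same.
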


\begin{proof}
Using that $\boldsymbol{X}$ has independent and stationary increments
and that $\widehat{f}_{X_{T}}$ is real, it follows that
\begin{align*}
(2\pi)^{-d}\int_{\mathbb{R}^{d}}|\widehat{f}_{\boldsymbol{X}_{T}}(\boldsymbol{u})|^{2}d\boldsymbol{u}= & (2\pi)^{-d}\int_{\mathbb{R}^{d}}\big(\widehat{f}_{\boldsymbol{X}_{T}}(\boldsymbol{u})\big)^{2}d\boldsymbol{u}=(2\pi)^{-d}\int_{\mathbb{R}^{d}}\widehat{f}_{\boldsymbol{X}_{2T}}(\boldsymbol{u})d\boldsymbol{u}=f_{\boldsymbol{X}_{2T}}(\boldsymbol{0}).
\end{align*}
\end{proof}
\begin{rem}
\label{rem:lessStable}Provided the expression $(2\pi)^{-d}\int_{\mathbb{R}^{d}}|\widehat{f}(\boldsymbol{u})|^{2}d\boldsymbol{u}$
can be obtained precisely, Inequality (\ref{eq:f_l-akek}) makes it
possible to define a stopping criterion for $\boldsymbol{N}$. In
particular, Inequality (\ref{eq:f_l-akek}) enables us to determine
$\boldsymbol{N}$ \emph{while} computing the coefficients $c_{\boldsymbol{k}}$:
incrementally increase $\boldsymbol{N}$ and compute $c_{\boldsymbol{k}}$
and $|c_{\boldsymbol{k}}|^{2}$ simultaneously. Stop when Inequality
(\ref{eq:f_l-akek}) is met. This is explained in more detail in Algorithm
\ref{alg:Application-of-Lemma}. However, since the right-hand side
of Equation (\ref{eq:f_l-akek}) converges to zero at least like $O\left(\varepsilon^{2}\right)$,
rounding off errors makes it difficult to find $\boldsymbol{N}$ by
Inequality (\ref{eq:f_l-akek}) for very small $\varepsilon$. Using
arbitrary-precision arithmetic instead of fixed-precision arithmetic
should overcome this drawback.
\end{rem}

The next theorem implies that classical COS-i and COS-iii and the
damped COS-iv methods converge exponentially if $\widehat{f}$ decays
exponentially, i.e., if Inequality (\ref{eq:f_hat_p}) holds for all
$p>0$. The Cases (a) and (b) in Theorem \ref{thm:order=000020of=000020convergence}
treat the classical COS-i and COS-ii methods and the damped COS-iv
method, respectively. The bound for the order of convergence of the
damped COS-iv method is slightly better.

\begin{thm}
\label{thm:order=000020of=000020convergence}(Order of convergence).
Let $\gamma>0$ and $\beta\in(0,1)$. For $n\in\mathbb{N}$, let \textbf{$\boldsymbol{N}=(n,...,n)\in\mathbb{N}^{d}$}
and $\boldsymbol{M}=\boldsymbol{L}=(\gamma n^{\beta},...,\gamma n^{\beta}).$
Assume for some $p>\frac{d}{2}$ that
\begin{equation}
|\widehat{f}(\boldsymbol{u})|\leq O\left(|\boldsymbol{u}|_{\infty}^{-p}\right),\quad|\boldsymbol{u}|_{\infty}\to\infty.\label{eq:f_hat_p}
\end{equation}
(a) \emph{(Classical COS-i or COS-iii methods).} Assume that Assumption
\ref{A1} holds. Then it follows that
\[
\left|\int_{\mathbb{R}^{d}}v(\boldsymbol{x})f(\boldsymbol{x})d\boldsymbol{x}-\sideset{}{'}\sum_{\boldsymbol{\boldsymbol{0}}\leq\boldsymbol{k}\leq\boldsymbol{N}}c_{\boldsymbol{k}}v_{\boldsymbol{k}}\right|\leq O\bigg(n^{-(1-\beta)p+\frac{d}{2}}\bigg),\quad n\to\infty.
\]
(b) \emph{(Damped COS-iv method)}. Assume that Assumptions \ref{A2}
holds. Then it follows that
\[
\left|\int_{\mathbb{R}^{d}}v(\boldsymbol{x})f(\boldsymbol{x})d\boldsymbol{x}-\sideset{}{'}\sum_{\boldsymbol{\boldsymbol{0}}\leq\boldsymbol{k}\leq\boldsymbol{N}}c_{\boldsymbol{k}}\tilde{v}_{\boldsymbol{k}}\right|\leq O\bigg(n^{-(1-\beta)(p-\frac{d}{2})}\bigg),\quad n\to\infty.
\]
\end{thm}

\begin{proof}
Let $A_{1}(\boldsymbol{L})$, $A_{2}(\boldsymbol{L},\boldsymbol{N})$,
$D_{1}(\boldsymbol{M})$ and $D_{2}(\boldsymbol{N},\boldsymbol{L},\boldsymbol{M})$
be as in the proof of Corollary \ref{cor:generalCase}. Since $v_{\boldsymbol{k}}=\langle v1_{[-\boldsymbol{M},\boldsymbol{M}]},e_{\boldsymbol{k}}^{\boldsymbol{L}}\rangle$
and similarly to the proof of Corollary \ref{cor:generalCase} we
have that
\begin{align}
 & \bigg|\int_{\mathbb{R}^{d}}v(\boldsymbol{x})f(\boldsymbol{x})d\boldsymbol{x}-\sideset{}{'}\sum_{\boldsymbol{\boldsymbol{0}}\leq\boldsymbol{k}\leq\boldsymbol{N}}c_{\boldsymbol{k}}r_{\boldsymbol{k}}\bigg|\nonumber \\
\leq & D_{1}(\boldsymbol{M})+\|v1_{[-\boldsymbol{M},\boldsymbol{M}]}\|_{2}\,\bigg(A_{1}(\boldsymbol{L})+A_{2}(\boldsymbol{L},\boldsymbol{N})+\sqrt{B_{f}(\boldsymbol{L})}\bigg)+D_{2}(\boldsymbol{N},\boldsymbol{L},\boldsymbol{M}),\label{dsd}
\end{align}
where $r_{\boldsymbol{k}}:=v_{\boldsymbol{k}}$ in case a) and $r_{\boldsymbol{k}}:=\tilde{v}_{\boldsymbol{k}}$
in case b). We will analyze the order of convergence of each term
at the right-hand side of Inequality (\ref{dsd}): Since $v$ is bounded
and $f$ decays exponentially, $D_{1}(\boldsymbol{M})$, $A_{1}(\boldsymbol{L})$
and $\sqrt{B_{f}(\boldsymbol{L})}$ decay exponentially, i.e., can
be bounded by $O\big(\exp(-C_{3}n^{\beta})\big),$ $n\to\infty$,
for some $C_{3}$, see proof of Theorem \ref{thm:(Multidimensional-COS-method}.
By Inequality (\ref{eq:eta}), the term $\sideset{}{'}\sum|c_{\boldsymbol{k}}|^{2}$
is bounded. In case a), $D_{2}(\boldsymbol{N},\boldsymbol{L},\boldsymbol{M})=0$.
In case b), $D_{2}(\boldsymbol{N},\boldsymbol{L},\boldsymbol{M})$
decays exponentially, see proof of Corollary \ref{cor:vk_tilde_M_N}.
Last, we treat $A_{2}(\boldsymbol{L},\boldsymbol{N})$. Let $j\in\{1,...,d\}$.
Let $n$ be large enough. Let $\boldsymbol{k}\in\mathbb{N}_{0}^{d}$
such that $k_{j}>n$. By Equation (\ref{eq:ck}), Inequality (\ref{eq:f_hat_p})
and using $L_{h}=\gamma n^{\beta}$, $h=1,...,d$, there is a constant
$a_{1}>0$ so that
\begin{align*}
|c_{\boldsymbol{k}}|^{2}\overset{(\ref{eq:ck})}{\leq}\bigg(\frac{1}{2^{d-1}\prod_{h=1}^{d}L_{h}}\sum_{\boldsymbol{s}\in\mathcal{S}}\bigg|\widehat{f}\left(\frac{\pi}{2}\frac{\boldsymbol{sk}}{\boldsymbol{L}}\right)\bigg|\bigg)^{2}\overset{(\text{\ref{eq:f_hat_p}})}{\leq}a_{1}\left(\frac{1}{n^{d\beta}}\left(\frac{|\boldsymbol{k}|_{\infty}}{n^{\beta}}\right)^{-p}\right)^{2}= & a_{1}n^{2\beta(p-d)}|\boldsymbol{k}|_{\infty}^{-2p}.
\end{align*}
By mathematical induction over $d$ and the applying the integral
test of convergence, one can show that
\begin{align}
\sum_{\boldsymbol{k}\in\mathbb{N}_{0}^{d},k_{j}>n}|\boldsymbol{k}|_{\infty}^{-2p} & \leq\frac{2^{d-1}}{(2p-d)n^{2p-d}}.\label{eq:induction}
\end{align}
It follows by Inequality (\ref{eq:induction}) for some $a_{2}>0$
that
\begin{align}
\prod_{h=1}^{d}L_{h}\sum_{\boldsymbol{k}\in\mathbb{N}_{0}^{d},k_{j}>n}|c_{\boldsymbol{k}}|^{2}\leq & a_{2}n^{-(1-\beta)(2p-d)}.\label{eq:k_j_N_j}
\end{align}
Let \textbf{$G(\boldsymbol{L})$ }be defined as in Equality (\ref{eq:eta_l}).
By Equality (\ref{eq:ekel}), the Cauchy-Schwarz (CS) inequality and
Inequality (\ref{eq:bound_ck2}), we obtain
\begin{align*}
A_{2}(\boldsymbol{L},\boldsymbol{N})^{2}\overset{(\ref{eq:ekel})}{\leq} & \prod_{h=1}^{d}L_{h}\sideset{}{'}\sum_{k_{1}>N_{1}\text{ or}\dots\text{or }k_{d}>N_{d}}|a_{\boldsymbol{k}}+c_{\boldsymbol{k}}-c_{\boldsymbol{k}}|^{2}\\
\overset{(\text{CS})}{\leq} & \prod_{h=1}^{d}L_{h}\sideset{}{'}\sum_{k_{1}>N_{1}\text{ or}\dots\text{or }k_{d}>N_{d}}|c_{\boldsymbol{k}}|^{2}+\prod_{h=1}^{d}L_{h}\sideset{}{'}\sum_{\boldsymbol{k}\in\mathbb{N}_{0}^{d}}|a_{\boldsymbol{k}}-c_{\boldsymbol{k}}|^{2}\\
 & +2\sqrt{\prod_{h=1}^{d}L_{h}\sideset{}{'}\sum_{\boldsymbol{k}\in\mathbb{N}_{0}^{d}}|c_{\boldsymbol{k}}|^{2}\prod_{h=1}^{d}L_{h}\sideset{}{'}\sum_{\boldsymbol{k}\in\mathbb{N}_{0}^{d}}|a_{\boldsymbol{k}}-c_{\boldsymbol{k}}|^{2}}\\
\overset{(\ref{eq:bound_ck2})}{\leq} & \sum_{j=1}^{d}\bigg(\prod_{h=1}^{d}L_{h}\sum_{\boldsymbol{k}\in\mathbb{N}_{0}^{d},k_{j}>n}|c_{\boldsymbol{k}}|^{2}\bigg)+B_{f}(\boldsymbol{L})+2\sqrt{\left(\int_{\mathbb{R}^{d}}|f(\boldsymbol{x})|^{2}d\boldsymbol{x}+G(\boldsymbol{L})\right)B_{f}(\boldsymbol{L})}\\
\overset{(\ref{eq:k_j_N_j})}{\leq} & O\left(n^{-(1-\beta)(2p-d)}\right),\quad n\to\infty,
\end{align*}
since $B_{f}(\boldsymbol{L})$ and $G(\boldsymbol{L})$, converge
exponentially to zero. Since $v$ is bounded, we have that $\|v1_{[-\boldsymbol{M},\boldsymbol{M}]}\|_{2}\leq O\left(n^{\frac{d\beta}{2}}\right)$,
$n\to\infty$. Noting that $\frac{-(1-\beta)(2p-d)+d\beta}{2}=-(1-\beta)p+\frac{d}{2}$,
shows (a). It holds $\|v1_{[-\boldsymbol{M},\boldsymbol{M}]}\|_{2}\leq\|v\|_{2}$
if $v\in\mathcal{L}^{2}$, which implies (b).
\end{proof}

\section{\protect\label{sec:Characteristic-functions}Characteristic functions}

In this section, in Examples \ref{exa:(Normal-distribution).-Let}
and \ref{exa:(Variance-Gamma-distribution).}, we recall the multivariate
normal and the Variance Gamma distributions from the literature. Remark
\ref{rem:In-a-financial} and Examples \ref{rem:BS} and \ref{rem:VG}
provide a financial context.
\begin{example}
\label{exa:(Normal-distribution).-Let}(Multivariate normal distribution).
Let $\boldsymbol{X}$ be a multivariate normal random variable with
location $\boldsymbol{\eta}\in\mathbb{R}^{d}$ and covariance matrix
$\Sigma\in\mathbb{R}^{d\times d}$. The random variable $\boldsymbol{X}$
has characteristic function $\widehat{g}(\boldsymbol{u})=\exp\left(i\boldsymbol{\eta}\cdot\boldsymbol{u}-\frac{1}{2}\boldsymbol{u}\cdot\Sigma\boldsymbol{u}\right)$,
$\boldsymbol{u}\in\mathbb{R}^{d}$, which can be extended to $\mathbb{C}^{d}$,
i.e., $\widehat{g}(\boldsymbol{u}-i\boldsymbol{\alpha})$ exists for
all $\boldsymbol{\alpha}\in\mathbb{R}^{d}$ and for the set $\delta_{g}$,
defined in Equation (\ref{eq:delta_g}), it holds that $\delta_{g}=\mathbb{R}^{d}$.
By Proposition \ref{prop:centr} we set $\lambda=\exp\left(-\boldsymbol{\eta}\cdot\boldsymbol{\alpha}-\frac{1}{2}\boldsymbol{\alpha}\cdot\Sigma\boldsymbol{\alpha}\right)$
and $\boldsymbol{\mu}=\boldsymbol{\eta}+\Sigma\boldsymbol{\alpha}$.
The characteristic function of the damped density $f$, defined in
Equation (\ref{eq:dampedf}), is given by $\widehat{f}(\boldsymbol{u})=\exp\left(-\frac{1}{2}\boldsymbol{u}\cdot\Sigma\boldsymbol{u}\right)$,
which is the characteristic function of a multivariate normal random
variable with location zero and covariance matrix $\Sigma$. A straightforward
computation shows that
\begin{align*}
(2\pi)^{-d}\int_{\mathbb{R}^{d}}|\widehat{f}(\boldsymbol{u})|^{2}d\boldsymbol{u} & =\frac{2^{-d}}{\sqrt{\pi{}^{d}\det(\Sigma)}}.
\end{align*}
\end{example}

\begin{example}
\label{exa:(Variance-Gamma-distribution).}(Variance Gamma distribution).
Let $\boldsymbol{Z}$ be a $d$-dimensional, standard normal random
variable. Let $G$ be a Gamma distributed random variable, independent
of $\boldsymbol{Z}$, with shape $a>0$ and scale $s>0$. Let $\boldsymbol{\eta},\boldsymbol{\theta}\in\mathbb{R}^{d}$
and $\boldsymbol{\sigma}\in\mathbb{R}_{+}^{d}$. Consider $\boldsymbol{X}=\boldsymbol{\eta}+\boldsymbol{\theta}G+\sqrt{G}\boldsymbol{\sigma}\boldsymbol{Z}$.
The distribution of $\boldsymbol{X}$ is denoted by $\text{VG}(a,s,\boldsymbol{\eta},\boldsymbol{\theta},\boldsymbol{\sigma})$.
Throughout the paper we assume that $a>\frac{1}{2}$. Define $\Sigma\in\mathbb{R}^{d\times d}$
such that $\Sigma_{ii}=\sigma_{i}^{2}$ and $\Sigma_{ij}=0$ for $i\neq j$.
Then $\boldsymbol{X}$ has the characteristic function
\[
\widehat{g}(\boldsymbol{u})=\exp\left(i\boldsymbol{\eta}\cdot\boldsymbol{u}\right)\big(1-is\boldsymbol{\theta}\cdot\boldsymbol{u}+\frac{1}{2}s\boldsymbol{u}\cdot\Sigma\boldsymbol{u}\big)^{-a},
\]
see \citet{luciano2006multivariate}. These authors propose the simplifying
assumption that $\Sigma$ is a diagonal matrix to ensure that the
number of parameters of the VG distribution increases linearly with
the dimension $d$. The (extended) Fourier transform $\widehat{g}(\boldsymbol{u}-i\boldsymbol{\alpha})$
exists for all $\boldsymbol{\alpha}\in\mathbb{R}^{d}$ with $\zeta(\boldsymbol{\alpha}):=1-s\boldsymbol{\theta}\cdot\boldsymbol{\alpha}-\frac{1}{2}s\boldsymbol{\alpha}\cdot\Sigma\boldsymbol{\alpha}>0$,
see \citet{bayer2022optimal}. By Proposition \ref{prop:centr}, we
set $\lambda=\exp\left(-\boldsymbol{\eta}\cdot\boldsymbol{\alpha}\right)\big(\zeta(\boldsymbol{\alpha})\big)^{a}$
and $\boldsymbol{\mu}=\boldsymbol{\eta}+as\zeta^{-1}(\boldsymbol{\theta}+\Sigma\boldsymbol{\alpha})$.
The characteristic function of the damped density $f$, defined in
Equation (\ref{eq:dampedf}), is given by 
\begin{align*}
\widehat{f}(\boldsymbol{u})= & \exp\left(-i\frac{as}{\zeta(\boldsymbol{\alpha})}\big(\boldsymbol{\theta}+\Sigma\boldsymbol{\alpha}\big)\cdot\boldsymbol{u}\right)\left(1-i\frac{s}{\zeta(\boldsymbol{\alpha})}\big(\boldsymbol{\theta}+\Sigma\boldsymbol{\alpha}\big)\cdot\boldsymbol{u}+\frac{1}{2}\frac{s}{\zeta(\boldsymbol{\alpha})}\boldsymbol{u}\cdot\Sigma\boldsymbol{u}\right)^{-a},
\end{align*}
which is the characteristic function of a $\text{VG}\left(a,\frac{s}{\zeta(\boldsymbol{\alpha})},-\frac{as}{\zeta(\boldsymbol{\alpha})}\big(\boldsymbol{\theta}+\Sigma\boldsymbol{\alpha}\big),\boldsymbol{\theta}+\Sigma\boldsymbol{\alpha},\boldsymbol{\sigma}\right)$
distributed random variable. Using the fact that $\Sigma$ is a diagonal
matrix, it holds for $|\boldsymbol{u}|_{\infty}$ large enough that
\[
|\widehat{f}(\boldsymbol{u})|^{2}\leq\left(\frac{1}{2}\frac{s}{\zeta(\boldsymbol{\alpha})}\boldsymbol{u}\cdot\Sigma\boldsymbol{u}\right)^{-2a}\leq\left(c_{1}|\boldsymbol{u}|^{2}\right)^{-2a}\leq c_{2}|\boldsymbol{u}|_{\infty}^{-4a}
\]
for suitable constants $c_{1},c_{2}>0$. Since $\widehat{f}$ is bounded
by $1$, this implies that $|\widehat{f}(\boldsymbol{u})|\leq O\left(|\boldsymbol{u}|_{\infty}^{-2a}\right)$
for $|\boldsymbol{u}|_{\infty}\to\infty$. Since $a>\frac{1}{2}$,
$\widehat{f}$ is integrable, which implies that $f$ is continuous
and bounded. In particular, $f\in\mathcal{L}^{1}\cap\mathcal{L}^{2}$.
Since the marginal densities of $f$ are VG distributed and have semi-heavy
tails, see \citet{kuchler2008shapes}, $f$ decays exponentially if
$\zeta(\boldsymbol{\alpha})>0$. For the set $\delta_{g}$, defined
in Equation (\ref{eq:delta_g}), it holds that $\delta_{g}=\{\boldsymbol{\alpha}\in{\mathbb{R}^{d}}:\zeta(\boldsymbol{\alpha})>0\}$.
Note that $\boldsymbol{0}\in\delta_{g}$ and $\delta_{g}$ is open.
\end{example}

\begin{rem}
The Assumption that $\widehat{f}$ is integrable in Example \ref{exa:(Variance-Gamma-distribution).},
i.e., $a>\frac{1}{2}$, is also made in \citet[Remark 2.3 and Assumption A3]{eberlein2010analysis}
to treat discontinuous functions of interest, e.g., to obtain a CDF.
\end{rem}

\begin{rem}
\label{rem:In-a-financial}In a financial context, we model $d$ stock
prices over time by a $d$-dimensional positive semimartingale $(\boldsymbol{S}(t))_{t\geq0}$
on a filtered probability space $(\Omega,\mathcal{F},P,(\mathcal{F}_{t})_{t\geq0})$.
The filtration $(\mathcal{F}_{t})_{t\geq0}$ satisfies the usual conditions
with $\mathcal{F}_{0}=\{\Omega,\emptyset\}$. The logarithmic returns
are defined by $\boldsymbol{X}(t):=\log(\boldsymbol{S}(t))$, $t\geq0$.
There is a bank account paying continuous compound interest $r\in\mathbb{R}$.
There is a European option $w:\mathbb{R}^{d}\to\mathbb{R}$ with maturity
$T>0$ and payoff $w\big(\boldsymbol{X}(T)\big)$ at time $T$. We
denote by $g$ the (risk-neutral) density of $\log(\boldsymbol{S}(T))$.
The time-0 price of the European option is then given by $e^{-rT}\int_{\mathbb{R}^{d}}w(\boldsymbol{x})g(\boldsymbol{x})d\boldsymbol{x}.$
This integral can be approximated by the classical COS-i, COS-ii,
COS-iii or the damped COS-iv method.
\end{rem}

\begin{example}
\label{rem:BS}(BS model). Let $\Sigma\in\mathbb{R}^{d\times d}$
be a symmetric positive definite matrix. For the Black-Scholes (BS)
model, the logarithmic returns $\boldsymbol{X}(T)$ are normally distributed
with location $\boldsymbol{\eta}:=\log(\boldsymbol{S}(0))+(\boldsymbol{r}-\frac{1}{2}\text{diag}(\Sigma))T$
and covariance matrix $T\Sigma$, where $\boldsymbol{r}=(r,...,r)\in\mathbb{R}^{d}$
and $\text{diag}(\Sigma)\in\mathbb{R}^{d}$ denotes the diagonal of
$\Sigma$.
\end{example}

\begin{example}
\label{rem:VG}(VG model). Let $\nu>0$, $\boldsymbol{\sigma}\in\mathbb{R}_{+}^{d}$
and $\boldsymbol{\theta}\in\mathbb{R}^{d}$. In the multivariate Variance
Gamma (VG) model, see \citet{luciano2006multivariate}, the logarithmic
returns $\boldsymbol{X}(T)$ follow a $\text{VG}(\frac{T}{\nu},\nu,\boldsymbol{\eta},\boldsymbol{\theta},\boldsymbol{\sigma})$
distribution, where
\[
\eta_{h}:=\log(S_{h}(0))+\big(r+\frac{1}{\nu}\log\big(1-\frac{1}{2}\sigma_{h}^{2}\nu-\theta_{h}\nu\big)\big)T,\quad h=1,\dots,d.
\]
As in Example \ref{exa:(Variance-Gamma-distribution).}, we assume
$\frac{T}{\nu}>\frac{1}{2}$ to ensure that the density of the VG
distribution is square-integrable. \citet{carr1999option} calibrated
the one-dimensional VG model to real market data and observed an average
value of $\nu=0.16$. \citet{luciano2006multivariate} calibrated
a three-dimensional VG model with stochastic time change to real market
data and obtained a similar value for $\nu$. For such $\nu$, only
options with a maturity greater than about one month satisfy the condition
$\frac{T}{\nu}>\frac{1}{2}$, which limits the applicability of the
COS method in the VG model for options with short maturities.
\end{example}

\section{\protect\label{sec:Functions-of-interest}Functions of interest}
\begin{example}
\label{exa:CDF}(CDF by the classical COS-i method). Let $w(\boldsymbol{x})=1_{(-\boldsymbol{\infty},\boldsymbol{y}]}(\boldsymbol{x})$,
$\boldsymbol{x}\in\mathbb{R}^{d}$ for some $\boldsymbol{y}\in\mathbb{R}^{d}$.
Then $v(\boldsymbol{x})=1_{(-\boldsymbol{\infty},\boldsymbol{y}]}(\boldsymbol{x}+\boldsymbol{\mu})$
for shift parameter $\boldsymbol{\mu}$. Since $w$ is bounded and
zero outside $(-\boldsymbol{\infty},\boldsymbol{y}]$, it holds that
$\delta_{w}^{\infty}=\mathbb{R}_{-}^{d}$, where $\delta_{w}^{\infty}$
is defined in Equation (\ref{eq:delta_w_inf}). The integral in (\ref{eq:int})
is equal to the CDF of the density $g$ evaluated at $\boldsymbol{y}$.
The coefficients $v_{\boldsymbol{k}}$, defined in Equation (\ref{eq:vk}),
can be obtained in closed form: Let $\boldsymbol{M},\boldsymbol{L}\in\mathbb{R}_{+}^{d}$
as in Section \ref{sec:Damped-COS-method}. It holds for $\boldsymbol{k}\in\mathbb{N}_{0}^{d}$
that $v_{\boldsymbol{k}}=0$ if $y_{h}-\mu_{h}<-M_{h}$ for any $h$,
and otherwise
\begin{align}
v_{\boldsymbol{k}} & =\prod_{h=1}^{d}\int_{-M_{h}}^{M_{h}}1_{(-\infty,y_{h}]}(x+\mu_{h})\cos\left(k_{h}\pi\frac{x+L_{h}}{2L_{h}}\right)dx\nonumber \\
 & =\prod_{\underset{k_{h}=0}{h=1}}^{d}\{A_{h}+M_{h}\}\prod_{\underset{k_{h}>0}{h=1}}^{d}\bigg\{\frac{2L_{h}}{\pi k_{h}}\bigg(\sin\big(k_{h}\pi\frac{A_{h}+L_{h}}{2L_{h}}\big)-\sin\big(k_{h}\pi\frac{-M_{h}+L_{h}}{2L_{h}}\big)\bigg)\bigg\},\label{eq:vkCDF}
\end{align}
where $A_{h}:=\min(y_{h}-\mu_{h},M_{h})$. It holds that $\left\Vert v\right\Vert _{\infty}\leq1$
and $\left\Vert v1_{[-\boldsymbol{M},\boldsymbol{M}]}\right\Vert _{2}^{2}\leq2^{d}\prod_{h=1}^{d}M_{h}$.
\end{example}

We also discuss the possibility of approximating the CDF by the damped
COS-iv method in Example \ref{exa:(Digital-cash-or-nothing-put} to
test the damped COS-iv and to be able to compare the damped and the
classical COS-i methods. In real world applications we recommend applying
Example \ref{exa:CDF} instead of Example \ref{exa:(Digital-cash-or-nothing-put}
to approximate a CDF since the classical COS-i method does not require
a damping factor and can therefore be applied more straightforwardly.
\begin{example}
\label{exa:(Digital-cash-or-nothing-put}(CDF by the damped COS-iv
method). Let $w$ be as in Example \ref{exa:CDF}. A simple calculation
shows that the Fourier transform of $w$ exists for $\boldsymbol{z}\in\mathbb{C}^{d}$
such that $\Im\{z_{h}\}<0$, $h=1,...,d$, and is given by $\widehat{w}(\boldsymbol{z})=\prod_{h=1}^{d}\frac{e^{iy_{h}z_{h}}}{iz_{h}}.$
Hence, for $\boldsymbol{\alpha}<\boldsymbol{0}$, it holds that
\[
\int_{\mathbb{R}^{d}}|w(\boldsymbol{x})e^{-\boldsymbol{\alpha}\cdot\boldsymbol{x}}|d\boldsymbol{x}=\int_{\mathbb{R}^{d}}w(\boldsymbol{x})e^{-\boldsymbol{\alpha}\cdot\boldsymbol{x}}d\boldsymbol{x}=\widehat{w}(i\boldsymbol{\alpha})\in\mathbb{R}.
\]
So, the map $\boldsymbol{x}\mapsto w(\boldsymbol{x})e^{-\boldsymbol{\alpha}\cdot\boldsymbol{x}}$
is integrable and bounded and therefore square-integrable. It also
satisfies Inequality (\ref{eq:propBed}) since it has semi-heavy tails.
In conclusion, for the set $\delta_{w}$, defined in Equation (\ref{eq:delta_w}),
it holds that $\delta_{w}=\mathbb{R}_{-}^{d}$. For $\lambda>0$ and
$\boldsymbol{\mu}\in\mathbb{R}^{d}$, let $v$ be as in Equation (\ref{eq:dampedv}).
It holds for $\boldsymbol{\alpha}<\boldsymbol{0}$ that $\left\Vert v\right\Vert _{\infty}\leq\lambda^{-1}e^{-\boldsymbol{\alpha}\cdot\boldsymbol{y}}$
and
\begin{align*}
\left\Vert v\right\Vert _{2}^{2} & =\lambda^{-2}\prod_{h=1}^{d}\frac{\exp\left(-2\alpha_{h}(y_{h})\right)}{-2\alpha_{h}}.
\end{align*}
 
\end{example}

\begin{example}
\label{exa:(Various-other-European}(A cash-or-nothing put option
by the classical COS-i method). The payoff function of a cash-or-nothing
put option is defined by $w(\boldsymbol{x})=1_{[\boldsymbol{0},\boldsymbol{K}]}(e^{\boldsymbol{x}})$,
$\boldsymbol{x}\in\mathbb{R}^{d}$ for some $\boldsymbol{K}\in\mathbb{R}_{+}^{d}$;
for a financial context, see Remark \ref{rem:In-a-financial}. The
option pays $1\$$ at maturity if $\boldsymbol{S}(T)\leq\boldsymbol{K}$
and nothing otherwise. The price of a cash-or-nothing put option can
be approximated by the classical COS-i method as in Example \ref{exa:CDF}
with $\boldsymbol{y}:=\log(\boldsymbol{K})$ since $1_{[\boldsymbol{0},\boldsymbol{K}]}(e^{\boldsymbol{x}})=1_{(-\boldsymbol{\infty},\log(\boldsymbol{K})]}(\boldsymbol{x})$.
In particular, the coefficients $v_{\boldsymbol{k}}$ of the classical
COS-i method for a cash-or-nothing put option are as in Equation (\ref{eq:vkCDF}),
replacing $\boldsymbol{y}$ by $\log(\boldsymbol{K})$. We have that
$\delta_{w}^{\infty}=\delta_{w}=\mathbb{R}_{-}^{d}$. According to
Example \ref{exa:(Digital-cash-or-nothing-put}, it holds that $\widehat{w}(\boldsymbol{z})=\prod_{h=1}^{d}\frac{e^{i\log(K_{h})z_{h}}}{iz_{h}}$
provided that $\Im\{z_{h}\}<0$, $h=1,...,d$.
\end{example}

\begin{example}
\label{exa:L1-norm}($L_{1}$-norm by the classical COS-iii method
with damping). Let $w(\boldsymbol{x})=\sum_{i=1}^{d}|x_{i}|$. The
integral in (\ref{eq:int}) is equal to the expected $L_{1}$-norm
of a random variable with density $g$. Obtaining the $L_{1}$-norm
is a challenging numerical problem even in one dimension, see \cite{von1965convergence, brown1972formulae, barndorff2005absolute}.
Note that
\begin{align}
\int_{\mathbb{R}^{d}}w(\boldsymbol{x})g(\boldsymbol{x})d\boldsymbol{x} & =\sum_{i=1}^{d}\int_{\mathbb{R}}|x|g_{i}(x)dx=\sum_{i=1}^{d}\left(\int_{\mathbb{R}}\max(x,0)g_{i}(x)dx+\int_{\mathbb{R}}\max(-x,0)g_{i}(x)dx\right)\label{eq:L1_damped}
\end{align}
where $g_{i}$ is the $i^{th}$ marginal density of $g$, which has
Fourier transform $u\mapsto\widehat{g}(0,...0,u,0,...0)$. So, the
approximation of the expected $L_{1}$-norm boils down to solving
$d$ times a one-dimensional integration problem using the middle
term in (\ref{eq:L1_damped}). It can be solved by the classical COS-i
method since $v_{k}=\int_{-M}^{M}|x|\cos\left(k\pi\frac{x+L}{2L}\right)dx$,
defined in Equation (\ref{eq:vk}), can be obtained in closed form.
However, since $x\mapsto|x|$ is not bounded, Theorem \ref{thm:(Multidimensional-COS-method}
and Corollary \ref{cor:NrTerms} cannot be applied. Furthermore, the
COS-i method is numerically unstable when applied to unbounded functions
of interest due to significant cancellation errors, see \citet[Remark 5.2]{fang2009novel}.
In order to solve the $d$-dimensional integral, i.e., the term at
the left in (\ref{eq:L1_damped}), we propose applying the one-dimensional
classical COS-iii method with damping exactly $2d$ times to approximate
the term at the right in (\ref{eq:L1_damped}), since the functions
$w_{+}(x):=\max(x,0)e^{-\alpha_{+}x}$ and $w_{-}(x):=\max(-x,0)e^{-\alpha_{-}x}$
are bounded for $\alpha_{+}>0$ and $\alpha_{-}<0$. For both functions,
the coefficients $v_{k}$, defined in Equation (\ref{eq:vk}), are
given in closed form since $\int_{M}^{M}w_{\pm}(x)\cos\left(k\pi\frac{x+L}{2L}\right)dx$
can be computed explicitly. For the sets $\delta_{w_{\pm}}^{\infty}$,
defined in Equation (\ref{eq:delta_w_inf}), it holds that $\delta_{w_{\pm}}^{\infty}=\mathbb{R}_{\pm}$.
\end{example}

\begin{example}
\label{exa:w_hat_basket}(Unweighted arithmetic basket put option
by the damped COS-iv method). An unweighted arithmetic  basket put
option is defined by $w(\boldsymbol{x})=\max(K-\sum_{h=1}^{d}e^{x_{h}},0)$,
$\boldsymbol{x}\in\mathbb{R}^{d}$ for some $K>0$; for a financial
context, see Remark \ref{rem:In-a-financial}. The Fourier transform
of $w$ exists for $\boldsymbol{z}\in\mathbb{C}^{d}$ such that $\Im\{z_{h}\}<0$,
$h=1,...,d$, and is given by
\begin{align}
\widehat{w}(\boldsymbol{z}) & =\int_{\mathbb{R}^{d}}e^{i\boldsymbol{z}\cdot\boldsymbol{x}}w(\boldsymbol{x})d\boldsymbol{x}=\frac{K^{\big(1+i\sum_{h=1}^{d}z_{h}\big)}\prod_{h=1}^{d}\Gamma(iz_{h})}{\Gamma\bigg(i\sum_{h=1}^{d}z_{h}+2\bigg)}.\label{eq:w_fourier_basket}
\end{align}
Equation (\ref{eq:w_fourier_basket}) follows by an elementary substitution\footnote{We thank Friedrich Hubalek from Technische Universität Wien for pointing
this out to us.} from \citet[Eq. (5.14.1)]{olver2010nist} and is also mentioned in
a similar form in \cite{hubalek2003variance}. For the set $\delta_{w}$,
defined in Equation (\ref{eq:delta_w}), it holds that $\delta_{w}=\mathbb{R}_{-}^{d}$.
If $\boldsymbol{\alpha}<\boldsymbol{0}$, it holds that $\left\Vert v\right\Vert _{\infty}\leq\lambda^{-1}K^{1-\sum_{h=1}^{d}\alpha_{h}}$
and, using \citet[Eq. (5.14.1)]{olver2010nist} once more, it follows
that
\[
\left\Vert v1_{[-\boldsymbol{M},\boldsymbol{M}]}\right\Vert _{2}^{2}\leq\left\Vert v\right\Vert _{2}^{2}\leq\frac{K^{2-2\sum_{h=1}^{d}\alpha_{h}}}{\lambda^{2}}\frac{\prod_{h=1}^{d}\Gamma\big(-2\alpha_{h}\big)}{\Gamma\big(1+\sum_{h=1}^{d}(-2\alpha_{h})\big)}.
\]
\end{example}

There are other payoff functions with known Fourier transform which
could be priced using the damped COS-iv method: For put and call options
on the maximum or minimum of $d$ assets, see \cite{eberlein2010analysis};
for spread options, see \cite{hurd2010fourier}.

\section{\protect\label{sec:Numerical}Numerical experiments}

We provide several numerical experiments to solve the integral in
(\ref{eq:int}) using the classical COS-i or the damped COS-iv methods.
Reference values are obtained by a Monte Carlo simulation as follows:
Let $\varepsilon>0$ be some error tolerance. If not stated otherwise,
$\varepsilon$ is interpreted as an absolute error tolerance. Let
$p$ be the probability that the difference between the Monte Carlo
estimator with $U$ runs and the true value of the integral is less
than the error tolerance $\varepsilon$. Using the central limit theorem,
we choose $U$ such that $p=0.99$. The Monte Carlo estimator is denoted
by $\text{MC}(\varepsilon)$.

In all experiments, we set $\boldsymbol{M}$ as in Inequality (\ref{eq:m-1-1})
and we set $\boldsymbol{L}:=\boldsymbol{M}$. We only report $\boldsymbol{L}$.
All experiments are performed on a modern laptop with Intel i7-11850H
processor and 32 GB RAM. A short and pure R implementation of the
classical COS-i and the damped COS-iv methods with focus on readability
can be found in Algorithm \ref{alg:Pure-R-code} in the appendix.
For the numerical experiments, we implemented the COS-i and COS-iv
methods and the Monte Carlo simulation in C++ (this code is written
with focus on performance) using for-loops without parallelization.
The memory requirements are minimal and the complete code can be found
in \url{https://github.com/GeroJunike/COS_Method}.

\subsection{\protect\label{subsec:Uncertainty-of-the}Uncertainty of the Fourier
transform}

We illustrate uncertainty on $\widehat{f}$ as in Corollary \ref{cor:generalCase},
where $\widehat{f}$ is defined by a one-dimensional ODE
. An application involving multidimensional ODEs, so-called generalized
Riccati equations, can be found in \cite{duffie2003affine}. Suppose
that $\widehat{f}$ solves the following ODE:
\begin{equation}
\frac{d}{du}\widehat{f}(u)=-u\widehat{f}(u),\quad u\in\mathbb{R},\quad\widehat{f}(0)=1.\label{eq:ODE}
\end{equation}
The ODE can be solved numerically by the Euler method with $Q\in\mathbb{N}$
steps, where a numerical solution $\vartheta_{Q}:[0,\bar{u}]\to\mathbb{R}$,
with $\bar{u}>0$, is constructed as follows: for the step size $h=\frac{\bar{u}}{Q}$,
set $\vartheta_{Q}(0):=1$, 
\[
\vartheta_{Q}\big((q+1)h\big):=\vartheta_{Q}(qh)+h\left(-qh\vartheta_{Q}(qh)\right),\quad q=0,..,Q-1,
\]
and use linear interpolation in the gaps, see e.g., \citet[Section 2]{griffiths2010numerical}.
The analytic solution of the ODE (\ref{eq:ODE}) is also well known
and is given by $\widehat{f}(u)=e^{-\frac{u^{2}}{2}}$, which is the
characteristic function of a standard normal random variable. We will
compare the numerical solution to the analytical solution. After solving
the ODE (\ref{eq:ODE}) by the Euler method, we use the classical
COS-i method with $\vartheta_{Q}$ instead of $\widehat{f}$ to approximate
$\Phi(-2)=\int_{\infty}^{\infty}v(x)f(x)dx$, where $\Phi$ and $f$
are the CDF and density of a standard normal random variable, respectively,
and $v(x)=1_{(-\infty,-2]}(x)$. To use the classical COS-i method,
set $N=5$ and $L=\pi$. We must approximate $\widehat{f}$ by $\vartheta_{Q}$
at $\mathcal{U}(L,N)=\left\{ 0,\frac{1}{2},1,\frac{3}{2},2,\frac{5}{2}\right\} $,
compare with Equation (\ref{eq:ck_tilde}). So we set $\bar{u}=\frac{5}{2}$.
We use the error tolerance $\varepsilon=0.0006$, which is about twice
as large as the absolute difference between the classical COS-i method
without uncertainty on $\widehat{f}$ and the analytical solution
given by $\Phi(-2)$. Since $v$ is bounded by $1$, we set $\xi$
in (\ref{eq:Mgeneral}) to $\sqrt{2L}$. How can we choose $Q$ so
that the error of the COS-i method using $\vartheta_{Q}$ instead
of $\widehat{f}$ remains below $\varepsilon$? Corollary \ref{cor:generalCase}
gives a theoretical answer: We have to choose $Q$ such that Inequality
(\ref{eq:phi_tilde}) is satisfied, i.e., 

\[
\max_{u\in\mathcal{U}(L,N)}\left|\widehat{f}(u)-\vartheta_{Q}(u)\right|\leq c:=\frac{\varepsilon}{12\xi}\frac{\sqrt{L}}{\sqrt{N+1}}\approx1.4\times10^{-5}.
\]
Note that $\log(c)\approx-11.1$. We observe empirically from Figure
\ref{fig:euler} that for $Q\approx e^{10.5}$ Inequality (\ref{eq:phi_tilde})
is satisfied. We also observe that the COS-i approximation then has
an error of $0.00031$, which is well below the error tolerance $\varepsilon$.
For $Q\to\infty$, we see that the error of the COS-i method using
$\vartheta_{Q}$ converges to the error of the COS-i method using
$\widehat{f}$. However, we also see in this example that $Q\approx e^{6.4}$
is already sufficient to stay below the error tolerance $\varepsilon$.
In conclusion, we confirm empirically that if Inequality (\ref{eq:phi_tilde})
is satisfied, the Euler approximation does not increase the error
of the COS-i method too much. However, the smallest $Q$ that keeps
the error of the COS-i method below $\varepsilon$ is about 60 times
smaller than predicted by Inequality (\ref{eq:phi_tilde}).

\begin{figure}[H]
\centering{}\includegraphics[scale=0.4]{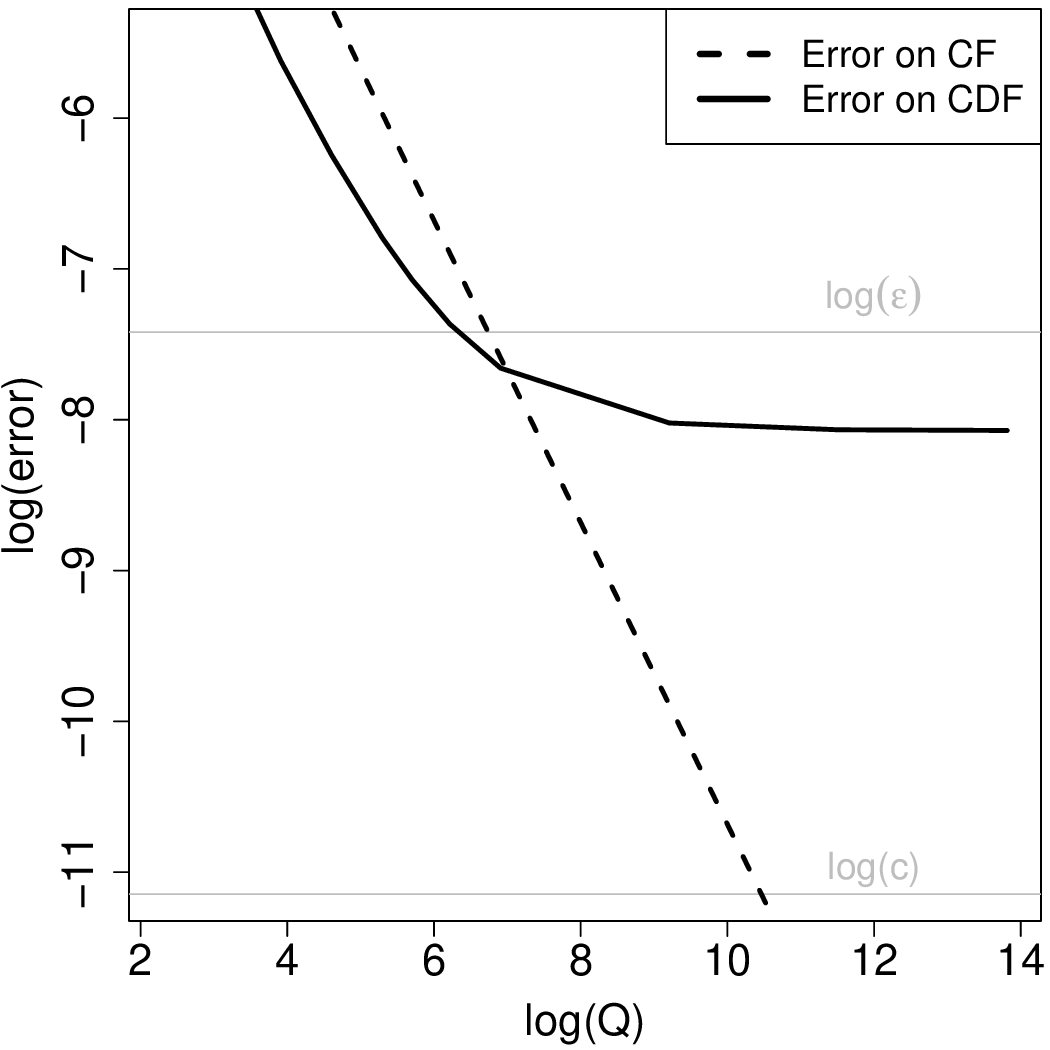}\caption{\protect\label{fig:euler}How the absolute error on the characteristic
function (CF) propagates in the COS-i method to estimate the CDF $\Phi$
of a standard normal random variable at $-2$. The solid line shows
the absolute error between $\Phi(-2)$ and the COS-i method, where
the characteristic function is approximated by the Euler method with
$Q$ steps. The two gray horizontal lines are the logarithmic error
tolerance $\log(\varepsilon)$ and $\log(c)$.}
\end{figure}

\subsection{\protect\label{subsec:On-the-choice}On the choice of the damping
factor and the order of convergence}

We investigate the influence of the damping factor $\boldsymbol{\alpha}$
on the accuracy of the damped COS-iv method to obtain the CDF of a
multivariate normal distribution. According to Example \ref{exa:(Various-other-European},
the CDF can also be interpreted as the price of a cash-or-nothing
put option. Figure \ref{fig:alpha} shows the behavior of the damped
COS-iv method for different damping factors in dimensions $d\in\{2,3,4\}$.
If $\boldsymbol{\alpha}$ is too close to zero, almost no damping
takes place and the difference between $v_{\boldsymbol{k}}$ and $\tilde{v}_{\boldsymbol{k}}$
is large, which implies a relatively high error for the damped COS-iv
method. If $|\boldsymbol{\alpha}|$ is too big, $\left\Vert v\right\Vert _{\infty}$
and $\left\Vert v\right\Vert _{2}$ become very large and the truncation
error increases. 

We observe in Figure \ref{fig:alpha} that a wide range of damping
factors work well in various dimensions. Fixing the number of terms
$\boldsymbol{N}$ and the truncation range $\boldsymbol{L}$, we see
that the classical COS-i method almost always yields lower error than
the damped COS-iv method, except for very specific choices for $\boldsymbol{\alpha}$,
which can only be determined through trial and error and is not known
a priori.

Table \ref{tab:Truncation-ranges-for} shows for the multivariate
normal distribution the dependence of the truncation range on the
damping factor and on the correlation, i.e., on the off-diagonal elements
of the covariation matrix, in $d\in\{2,4\}$ dimensions. The truncation
range does not depend on the off-diagonal elements if there is no
damping. However, in the case of high correlations, large damping
factors in four dimensions greatly increase the truncation range.

We also illustrate the order of convergence of the damped COS-iv method
for an unweighted arithmetic basket put option in the VG model. We
compare three different maturities. In Figure \ref{fig:alpha} we
can see that the theoretical bound from Theorem \ref{thm:order=000020of=000020convergence}
for the order of convergence is sharp and close to the empirical order
of convergence for $T\geq0.5$. The theoretical bound is less sharp
for short maturities like $T=0.1$. 

\begin{figure}[H]
\begin{centering}
\includegraphics[scale=0.4]{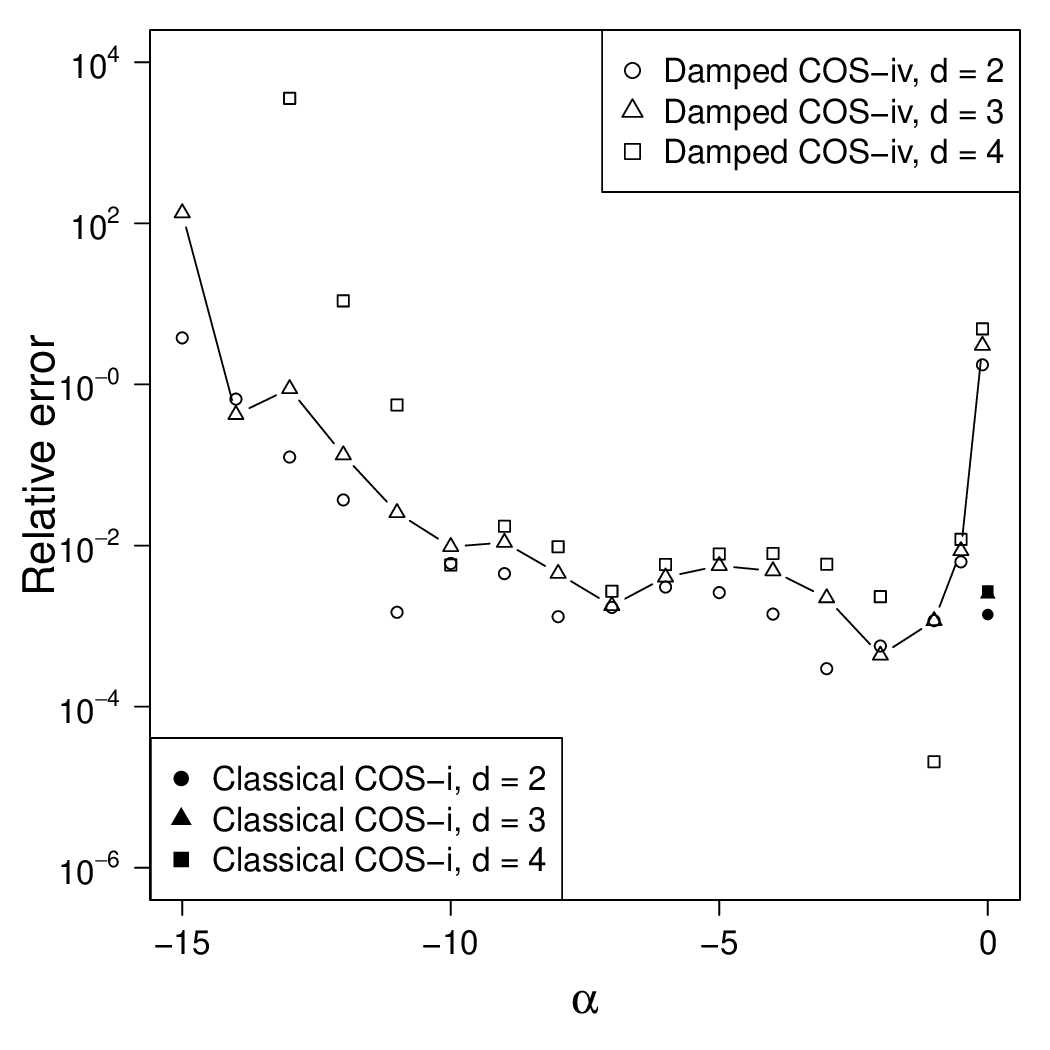} \includegraphics[scale=0.4]{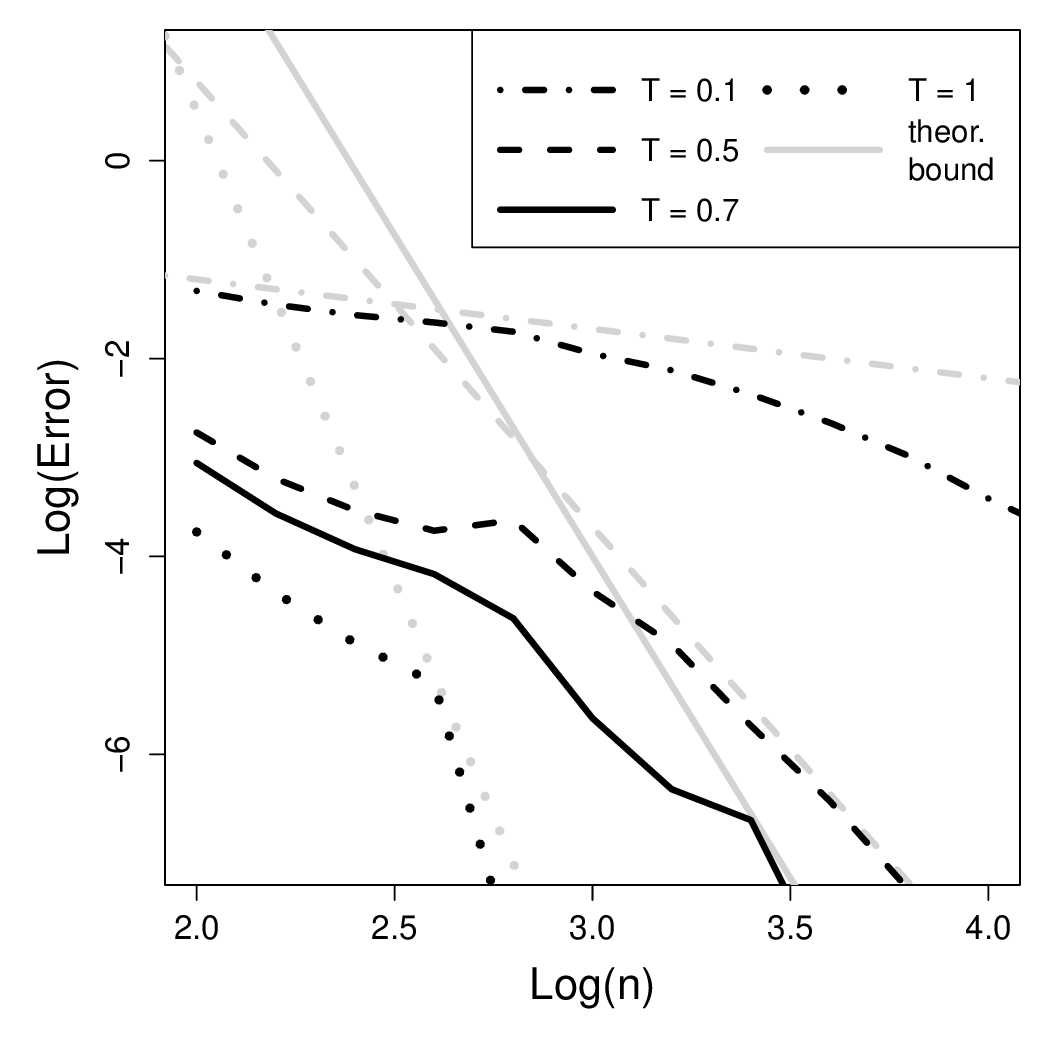}
\par\end{centering}
\caption{\protect\label{fig:alpha}\emph{Left}: Relative error of the CDF of
the multivariate normal distribution with different damping factors,
$\boldsymbol{\eta}=(4.58517,...,4.58517)$, $\boldsymbol{y}=(4.60517,...,4.60517)$
and $\Sigma_{ii}=\sigma^{2}$, $\Sigma_{ij}=0.5\sigma^{2}$, $i\protect\neq j$,
where $\sigma=0.2$. Further, $\boldsymbol{L}=(30\sigma,...,30\sigma)$
and $\boldsymbol{N}=(60,...,60)$. Reference values are obtained by
a Monte Carlo $\text{MC}(10^{-5})$ simulation. \emph{Right}: Logarithmic
error of the price by the damped COS-iv method for the VG model over
the logarithmic number of terms for an unweighted arithmetic basket
put option and $d=2$. We choose $\boldsymbol{N}=(n,n$) and $\boldsymbol{L}=(\gamma n^{\beta},\gamma n^{\beta})$
with $\gamma=\beta=\frac{1}{2}$. We set $\boldsymbol{S}(0)=(50,50)$,
$K=100$,\textbf{ $\boldsymbol{\sigma}=(0.2,0.2)$},\textbf{ $\boldsymbol{\theta}=(-0.03,-0.03)$},\textbf{
$\nu=0.1$}, $r=0$ and $\boldsymbol{\alpha}=(-4,-4)$. The theoretical
bound from Theorem \ref{thm:order=000020of=000020convergence}, i.e.,
a line with slope $-(1-\beta)(p-\frac{d}{2})$, is shown in gray.
For the VG model, we have $p=\frac{2T}{\nu}$. Reference values are
obtained by a Monte Carlo $\text{MC}(10^{-4})$ simulation and are
given by 1.6244, 3.8998, 4.6509 and 5.5951 for $T=0.1$, $T=0.5$,
$T=0.7$ and $T=1$, respectively.}
\end{figure}

\begin{table}[H]
\begin{centering}
\begin{tabular}{|c|c|c|c|c|}
\hline 
 & $\alpha=0$ & $\alpha=-3$ & $\alpha=-7$ & $\alpha=-11$\tabularnewline
\hline 
\hline 
$\rho=0.00$ & 1.42 / 1.54 & 1.50 / 1.74 & 1.87 / 2.70 & 2.74 / ~~5.78\tabularnewline
\hline 
$\rho=0.25$ & 1.42 / 1.54 & 1.52 / 1.86 & 1.99 / 3.90 & 3.19 / ~14.32\tabularnewline
\hline 
$\rho=0.50$ & 1.42 / 1.54 & 1.54 / 1.99 & 2.12 / 5.64 & 3.71 / ~35.49\tabularnewline
\hline 
$\rho=0.75$ & 1.42 / 1.54 & 1.55 / 2.13 & 2.25 / 8.14 & 4.31 / ~87.95\tabularnewline
\hline 
$\rho=0.99$ & 1.42 / 1.54 & 1.57 / 2.27 & 2.39 / 11.6 & 4.99 / 210.1\tabularnewline
\hline 
\end{tabular}
\par\end{centering}
\caption{\protect\label{tab:Truncation-ranges-for}Truncation ranges $\boldsymbol{L}=(L,...,L)$
for error tolerance $\varepsilon=10^{-4}$ for a multivariate normal
distribution evaluated at $\boldsymbol{y}=(4.60517,...,4.60517)$
with location $\boldsymbol{\eta}=(4.58517,...,4.58517)$ and covariance
matrix $\Sigma_{ii}=\sigma^{2}$, $\Sigma_{ij}=\rho\sigma^{2}$, $i\protect\neq j$,
where $\sigma=0.2$. We show $L$ for the dimensions $d=2$ / $d=4$
for different values for $\rho$ and damping factors $\boldsymbol{\alpha}=(\alpha,...,\alpha)$.
The term $L$ is obtained from Equation (\ref{eq:m-1-1}) and a bound
for $|v|_{\infty}$ is taken from Example \ref{exa:(Digital-cash-or-nothing-put}.}
\end{table}

\subsection{\protect\label{subsec:On-the-choiceCDF}On the choice of $\boldsymbol{N}$}

In this section, we approximate the CDF for the VG and the multivariate
normal distributions by the classical COS-i method and the price of
an unweighted arithmetic put option in the BS model by the damped
COS-iv method. 

We approximate the CDF for the $\text{VG}(a,s,\boldsymbol{\eta},\boldsymbol{\theta},\boldsymbol{\sigma})$
distribution in $d=3$ dimensions with parameters $a=10$, $s=0.1$,
$\boldsymbol{\eta}=\boldsymbol{0}$, $\boldsymbol{\theta}=(-0.03,-0.03,-0.03)$
and $\boldsymbol{\sigma}=(0.2,0.2,0.2)$ at 1000 different $\boldsymbol{y}\in\mathbb{R}^{d}$
by the classical COS-i method. These $\boldsymbol{y}$ are randomly
chosen according to the $\text{VG}(a,s,\boldsymbol{\eta},\boldsymbol{\theta},\boldsymbol{\sigma})$
distribution. 

We set the error tolerance to $\varepsilon=10^{-3}$ and obtain $\boldsymbol{L}=(1.2,1.2,1.2)$
using $n=8$ moments by Inequality (\ref{eq:m-1-1}). The number of
terms is obtained by Corollary \ref{cor:NrTerms} and is given by
$\boldsymbol{N}=(21,21,21)$. We solve the integral appearing in Corollary
\ref{cor:NrTerms} numerically. This takes some time, but it only
has to be done once: the formulas for $\boldsymbol{L}$ and $\boldsymbol{N}$
depend only on the upper bound of the function of interest and not
on the concrete choice of $\boldsymbol{y}$. Therefore, we use identical
$\boldsymbol{L}$ and $\boldsymbol{N}$ for all 1000 choices for $\boldsymbol{y}$.
Reference values are obtained by a Monte Carlo $\text{MC}(\frac{\varepsilon}{10})$
simulation. We observe that the error is less than $\varepsilon$
in \emph{all} cases. The average CPU time to compute the CDF for a
single $\boldsymbol{y}$ is 0.019sec. We report some $\boldsymbol{y}$
corresponding to quantiles far away from the median and their COS-i
approximations and reference values in Table \ref{tab:Nopt}. This
experiment indicates that a CDF can be approximated on the entire
$\mathbb{R}^{d}$ by the COS-i method. 

\begin{table}[H]
\begin{centering}
\begin{tabular}{|c|c|c|}
\hline 
$\boldsymbol{y}$ & COS-i approximation & Ref. value\tabularnewline
\hline 
\hline 
{\small$(-0.49,0.18,0.3)$} & 0.0103 & {\small 0.0103}\tabularnewline
\hline 
{\small$(-0.02,-0.02,0.27)$} & 0.2505 & {\small 0.2505}\tabularnewline
\hline 
{\small$(0.07,0.21,0.15)$} & 0.5096 & {\small 0.5096}\tabularnewline
\hline 
$(0.30,0.26,0.17)$ & 0.7509 & {\small 0.7508}\tabularnewline
\hline 
$(0.94,0.89,0.45)$ & 0.9907 & {\small 0.9907}\tabularnewline
\hline 
\end{tabular}
\par\end{centering}
\caption{\protect\label{tab:Nopt}Approximation of a $\text{VG}(a,s,\boldsymbol{\eta},\boldsymbol{\theta},\boldsymbol{\sigma})$
distribution at different $\boldsymbol{y}\in\mathbb{R}^{3}$.}
\end{table}

Next, we approximate the CDF for a multivariate normal distribution
in $d=4$ dimensions with location zero and covariance matrix $\Sigma_{ii}=1$,
$\Sigma_{ij}=\rho$, $i\neq j$ for $\rho=0.75$. We set $\varepsilon=10^{-2}$.
We choose the truncation range by Inequality (\ref{eq:m-1-1}) and
the number of terms according to Corollary \ref{cor:NrTerms}. We
obtain $\boldsymbol{L}=(4.34,...,4.34)$ and $\boldsymbol{N}=(29,...,29)$.
We randomly choose 1000 different $\boldsymbol{y}\in\mathbb{R}^{d}$
according to the multivariate normal distribution with location zero
and covariance matrix $\Sigma$. Then, we approximate the multivariate
normal CDF at each $\boldsymbol{y}$ by the classical COS-i method.
The error tolerance is satisfied in \emph{all} cases. Reference values
are obtained by a Monte Carlo $\text{MC}(\frac{\varepsilon}{10})$
simulation. We repeat this experiment with other choices for $\rho$,
i.e., $\rho\in\{0.0,0.5,0.9,0.99\}$, and observe that $\boldsymbol{L}$
and $\boldsymbol{N}$ do not change and that the error tolerance is
satisfied in all cases.

Lastly, we test the damped COS-iv method in the BS model to approximate
the price of an unweighted arithmetic put option. We consider the
case $d=2$ and set $T=1$, $r=0$, $\boldsymbol{S}(0)=(50,50)$,
$K=100$ and covariance matrix $\Sigma\in\mathbb{R}^{d\times d}$
with $\Sigma_{11}=0.2^{2},$ $\Sigma_{22}=0.4^{2}$, $\Sigma_{12}=\Sigma_{21}=\frac{1}{2}\sqrt{\Sigma_{11}\Sigma_{22}}$.
We obtain the truncation range $\boldsymbol{L}=(3.9,7.9)$ from Inequality
(\ref{eq:m-1-1}) using $n=8$ moments and error tolerance $\varepsilon=10^{-2}$.
We set $\boldsymbol{\alpha}=(-4,-4)$. Applying Corollary \ref{cor:NrTerms},
we set $\boldsymbol{N}=(72,72)$. Reference values are obtained by
a Monte Carlo $\text{MC}(\frac{\varepsilon}{10})$ simulation. The
damped COS-iv method prices the unweighted arithmetic put option in
0.007sec. The minimal number of terms to obtain the required error
tolerance is roughly twice as large as $\boldsymbol{N}$ in this example
and is given by $(40,40)$, which leads to a CPU time of 0.002sec.
We average over ten runs to obtain the CPU time. 

\subsection{\protect\label{subsec:Comparison-with-Monte}Comparison with Monte
Carlo and Lewis}

We compare the classical COS-i and the damped COS-iv methods with
a Monte Carlo (MC) simulation in the VG and BS models to obtain the
price of a digital cash-or-nothing put option or the price of an unweighted
arithmetic basket put option. According to Example \ref{exa:(Various-other-European},
the digital cash-or-nothing put option can also be interpreted as
a CDF. 

The computational complexity of a MC simulation with $U\in\mathbb{N}$
runs scales like $O(U)$. The classical COS-i and damped COS-iv methods
consist of $d$-nested sums. According to Equation (\ref{eq:ck}),
the computational complexity of the COS-i and COS-iv methods scale
like $O\left(\prod_{h=1}^{d}\{N_{h}\}\right)$. A MC simulation converges
relatively slowly, but it scarcely depends on the dimension. On the
other hand, the complexity of the COS-i and COS-iv methods grows exponentially
in the dimension; however, the COS-i and COS-iv methods also converge
exponentially for the multivariate normal distribution. The choice
between MC and the COS method depends both on the dimension and on
the error tolerance $\varepsilon$: the higher $d$, the better MC
compares to the COS method, but the smaller $\varepsilon$, the faster
the COS method performs.

In Table \ref{tab:=000020COS=000020-=000020MC=000020-=000020BS} (Table
\ref{tab:=000020COS=000020-=000020MC=000020-=000020VG}), we price
these two options in $d\in\{2,4\}$ dimensions in the BS model (VG
model). We observe that the classical COS-i and the damped COS-iv
methods in $d=2$ dimensions are significantly faster than MC, approximately
by a factor between $500$ and $5000$. In $d=4$ dimensions, the
COS-i method outperforms MC but the damped COS-iv method is slower
than a MC simulation. The standard deviation of a single Monte Carlo
run decreases in the examples discussed in Tables \ref{tab:=000020COS=000020-=000020MC=000020-=000020BS}
and \ref{tab:=000020COS=000020-=000020MC=000020-=000020VG} with increasing
dimension, which explains why the CPU time of the Monte Carlo method
decreases as the dimension increases.

\begin{table}[H]
\begin{centering}
\begin{tabular}{|c|c|c|c|c|c|c|>{\centering}p{1.5cm}|>{\centering}p{1.7cm}|>{\centering}p{1.5cm}|}
\hline 
$d$ & Function of interest & $N$ & $L$ & $\alpha$ & $U$ & $\sigma_{MC}$ & Ref. value & CPU time COS & CPU time MC\tabularnewline
\hline 
\hline 
2 & Cash-or-nothing put & $5$ & $0.796$ & $0$ & $15392$ & 0.48 & $0.3741$ & $1.61\cdot10^{-5}$ & 0.008\tabularnewline
\hline 
4 & Cash-or-nothing put & $10$ & $0.868$ & $0$ & $11975$ & 0.42 & $0.2345$ & $0.0072$ & 0.01\tabularnewline
\hline 
2 & Arithm. basket put & $25$ & $2.585$ & $-3$ & $5070844$ & 9.16 & $6.9066$ & $0.0015$ & 1.49\tabularnewline
\hline 
4 & Arithm. basket put & $35$ & $4.688$ & $-1.5$ & $4607857$ & 8.38 & $6.3056$ & $13.873$ & 3.44\tabularnewline
\hline 
\end{tabular}
\par\end{centering}
\caption{\protect\label{tab:=000020COS=000020-=000020MC=000020-=000020BS}Let
$\varepsilon=10^{-2}$. Comparison of a Monte Carlo $\text{MC}(\varepsilon)$
simulation to the classical COS-i method to obtain the price of a
cash-or-nothing digital put option and the damped COS-iv method to
obtain the price of an unweighted arithmetic basket put option in
the BS model. We set $T=1$, $r=0$ and $\Sigma_{ii}=0.2^{2}$ and
$\Sigma_{ij}=\frac{1}{2}\cdot0.2^{2}$ for $j\protect\neq i$. For
the cash-or-nothing put option, we set $\boldsymbol{S}_{0}=\boldsymbol{K}=(100,\dots,100)$
and for the unweighted arithmetic basket put option, we set $K=100$
and $\boldsymbol{S}_{0}=\big(\frac{100}{d},...,\frac{100}{d}\big)$.
We obtain the truncation range $\boldsymbol{L}=(L,\dots,L)$ from
Inequality (\ref{eq:m-1-1}) using $n=8$ moments. The number of terms
$\boldsymbol{N}$ and for the basket option and the damping factor
$\boldsymbol{\alpha}=(\alpha,\dots,\alpha)$, are chosen, such that
$\boldsymbol{N}$ is minimal so that the error of the COS-i and COS-iv
methods are smaller than $\varepsilon$. The number of runs for the
Monte Carlo simulation is denoted by $U$. The standard deviation
of a single Monte Carlo run is denoted by $\sigma_{MC}$. Reference
values are calculated by a Monte Carlo $\text{MC}(\frac{\varepsilon}{10})$
simulation. We average over 10 runs to obtain the CPU time, which
is measured in seconds. }
\end{table}

\begin{table}[H]
\begin{centering}
\begin{tabular}{|c|c|c|c|c|c|c|>{\centering}p{1.5cm}|>{\centering}p{1.7cm}|>{\centering}p{1.5cm}|}
\hline 
$d$ & Function of interest & $N$ & $L$ & $\alpha$ & $U$ & $\sigma_{MC}$ & Ref. value & CPU time COS & CPU time MC\tabularnewline
\hline 
\hline 
2 & Cash-or-nothing put & $5$ & $0.853$ & $0$ & $13728$ & 0.45 & $0.2898$ & $2.79\cdot10^{-5}$ & 0.015\tabularnewline
\hline 
4 & Cash-or-nothing put & $5$ & $0.930$ & $0$ & $5261$ & 0.27 & $0.0839$ & $0.0027$ & 0.005\tabularnewline
\hline 
2 & Arithm. basket put & $20$ & $2.581$ & $-2.5$ & $4002494$ & 7.66 & $5.5951$ & $0.0014$ & 1.5\tabularnewline
\hline 
4 & Arithm. basket put & $30$ & $5.029$ & $-1.5$ & $2133811$ & 5.82 & $3.9696$ & $16.355$ & 1.3\tabularnewline
\hline 
\end{tabular}
\par\end{centering}
\caption{\protect\label{tab:=000020COS=000020-=000020MC=000020-=000020VG}Let
$\varepsilon=10^{-2}$. Comparison of a Monte Carlo $\text{MC}(\varepsilon)$
simulation to the classical COS-i method to obtain the price of a
cash-or-nothing digital put option and the damped COS-iv method to
obtain the price of an unweighted arithmetic basket put option in
the VG model. We set $\boldsymbol{\sigma}=(0.2,\dots,0.2)$, $\boldsymbol{\theta}=(-0.03,\dots,-0.03)$,
$\nu=0.1$ and choose the other parameters as in Table \ref{tab:=000020COS=000020-=000020MC=000020-=000020BS}.}
\end{table}

We also compare the CPU time of the damped COS-iv method for very
small error tolerance $\varepsilon$ to the CPU time of a Monte Carlo
simulation. For tiny $\varepsilon$, the computation of reference
values by a Monte Carlo simulation would take too long, which is why
we provide several numerical experiments with an uncorrelated multivariate
normal distribution, where reference value are given in closed form.
One could and should use the classical COS-i method to approximate
the multivariate normal CDF, but we explicitly want to test the damped
COS-iv method. We observe that the damped COS-iv method is faster
than MC for $d\leq4$ and $\varepsilon\leq10^{-3}$. For $d=5$, the
damped COS-iv method outperforms MC for $\varepsilon\leq10^{-5}$;
otherwise, a MC simulation is faster. If $\varepsilon=10^{-9}$ and
$d=4$, the damped COS-iv method needs $220$ terms in each dimension
to stay below the error tolerance and the CPU time is about one hour.
We estimate that a MC simulation would take longer than $20,000$
years. Some experiments are also shown in Figure \ref{fig:Computational-cost-of}.

\begin{figure}[H]
\begin{centering}
\includegraphics[scale=0.4]{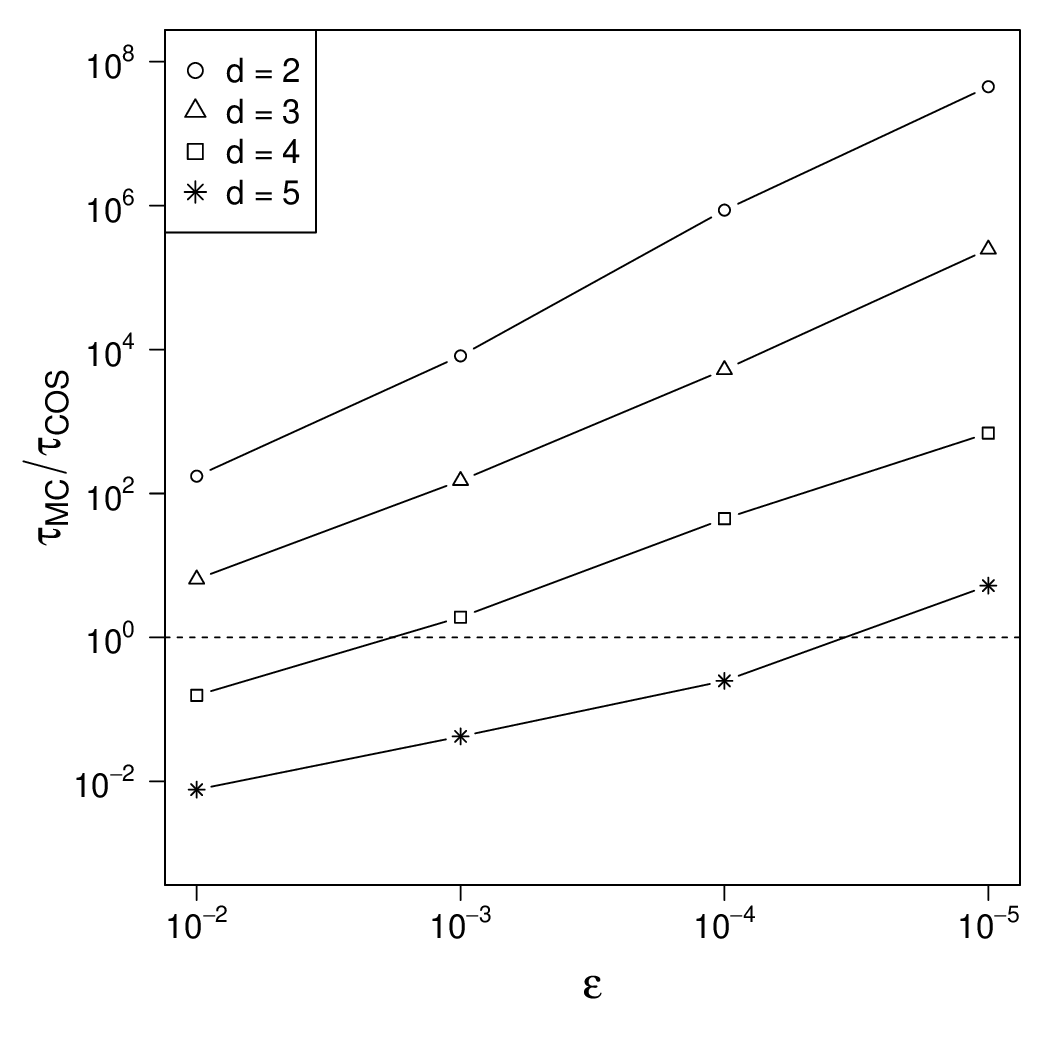}
\par\end{centering}
\caption{\protect\label{fig:Computational-cost-of}Ratio of the CPU time of
the damped COS-iv method $(\tau_{\text{COS}})$ and the CPU time of
a MC simulation $(\tau_{\text{MC}})$ to approximate the CDF of a
multivariate normal distribution. We set $\Sigma_{ii}=\sigma^{2}$,
$\Sigma_{ij}=0$, $i\protect\neq j$, where $\sigma=0.2$, and we
use $\boldsymbol{\eta}=(4.58517,...,4.58517)$, $\boldsymbol{y}=(4.60517,...,4.60517)$
and $\boldsymbol{\alpha}=(-7,\dots,-7)$. We obtain the truncation
range $\boldsymbol{L}=(L,...,L)$ from Inequality (\ref{eq:m-1-1})
using $n=8$ moments. We use the minimal number of terms $\boldsymbol{N}$
to stay below the error tolerance. The reference value can be obtained
in closed form. }
\end{figure}

The Lewis formula (\citet[Theorem 3.2]{eberlein2010analysis} and
\citet[Proposition 2.4]{bayer2022optimal}) expresses the integral
in (\ref{eq:int}) by another integral involving the Fourier-transforms
$\widehat{g}$ and $\widehat{w}$. The computational speed and robustness
of the Lewis formula strongly depend on the numerical method for approximating
the new integral. We implement the Lewis formula using Gauss-Hermite
quadrature and the R package \emph{mvQuad} without parallelization.
Using this package, the CPU time of the Lewis formula is of the same
order of magnitude as the damped COS-iv method in two dimensions and
is usually faster in four dimensions, see Table \ref{tab:Lewis}. 

\begin{table}[H]
\begin{centering}
\begin{tabular}{|>{\centering}p{0.1cm}|>{\centering}p{0.8cm}|>{\centering}p{0.55cm}|>{\centering}p{1.1cm}|>{\centering}p{0.75cm}|>{\centering}p{0.75cm}|>{\centering}p{0.75cm}|>{\centering}p{2.5cm}|c|>{\centering}p{2.3cm}|>{\centering}p{1.75cm}|}
\hline 
$d$ & Model & $\varepsilon$ & Ref. value & CPU MC & CPU Lewis & CPU COS & $\boldsymbol{L}$ & $N$ & $\boldsymbol{\alpha}$ & Damping Lewis\tabularnewline
\hline 
\hline 
2 & BS & $10^{-4}$ & 11.4474 & $1.7$ & $0.003$ & $0.001$ & $(6.09,6.09)$ & $24$ & $(-0.9,-0.9)$ & $(2.5,2.5)$\tabularnewline
\hline 
4 & BS & $10^{-4}$ & 8.193 & $39$ & $0.52$ & $48.57$ & $(9.35,\dots,9.35)$ & $42$ & $(-0.7,\dots,-0.7)$ & $(2.1,...,2.1)$\tabularnewline
\hline 
4 & BS & $10^{-4}$ & 3.1 & $13.6$ & $26.7$ & $12.03$ & $(4.57,9.15,$

$13.72,18.29)$ & $33$ & $(-0.9,-0.7,$

$-0.6,-0.4)$ & $(2.4,1.9,$

$1.5,1.2)$\tabularnewline
\hline 
2 & VG & $10^{-3}$ & 11.7589 & $16.4$ & $0.001$ & $0.004$ & $(8.14,8.14)$ & $26$ & $(-1,-1)$ & $(1.7,1.7)$\tabularnewline
\hline 
4 & VG & $10^{-4}$ & 8.9441 & $40.8$ & $0.69$ & $70.34$ & $(12.80,\dots,12.80)$ & $44$ & $(-0.5,\dots,-0.5)$ & $(1.2,...,1.2)$\tabularnewline
\hline 
\end{tabular}
\par\end{centering}
\caption{\protect\label{tab:Lewis}Comparison of the damped COS-iv method and
the Lewis formula to price an unweighted arithmetic basket put option
with $K=100$, $\boldsymbol{S}_{0}=\big(\frac{100}{d},...,\frac{100}{d}\big)$,
$T=1$ and $r=0$. Reference values, optimal damping factors for the
Lewis formula and number of runs for the Monte Carlo simulation are
taken from \citet[Tab. 4.6]{bayer2022optimal}, who used the following
parameters for the BS model: $\Sigma_{ii}=0.4^{2}$ (first two rows),
$\Sigma_{ii}=\big(\frac{2i}{10}\big)^{2}$ (third row) and $\Sigma_{ij}=0$,
$i\protect\neq j$. For the VG model, they set $\boldsymbol{\sigma}=(0.4,...,0.4)$,
$\boldsymbol{\theta}=(-0.3,...,-0.3)$ and $\nu=0.257$. We obtain
the truncation range $\boldsymbol{L}$ for the COS-iv method from
Inequality (\ref{eq:m-1-1}) using $n=8$ moments and setting the
absolute error tolerance $\varepsilon$ (approximately) equal to the
statistical error of the reference values taken from \citet[Table 4.1]{bayer2022optimal}.
The number of terms $\boldsymbol{N}=(N,\dots,N)$ and the damping
factor $\boldsymbol{\alpha}$ are chosen so that $\boldsymbol{N}$
is minimal and the relative errors of the COS-iv method are smaller
or equal to the relative errors of the Lewis formula, which are of
the order $10^{-4}$. We average over 10 runs to obtain the CPU time,
which is measured in seconds.}
\end{table}

\section{\protect\label{sec:Conclusions}Conclusion}

In this article we introduced and discussed in general dimensions
the classical and the damped COS methods, which are tools for solving
certain multidimensional integrals numerically, e.g., approximating
the CDF from a characteristic function or pricing a financial contract.
We showed that the classical and the damped COS methods converge exponentially
if the Fourier transform of the (damped) density decays exponentially. 

We compared the classical and the damped COS methods to other numerical
techniques, such as the Lewis formula and Monte Carlo simulations.
In terms for speed, we observed that the classical and damped COS
methods are faster than a crude Monte Carlo simulation in two and
three dimensions but slower in five dimensions for an absolute error
tolerance of about $10^{-2}$. The choice between the COS method and
a Monte Carlo simulation in four dimensions depends on the required
error tolerance and the function of interest.

The Lewis formula requires solving an integral involving the Fourier
transforms of the density and the function of interest numerically.
The computational speed of the Lewis formula depends strongly on the
numerical method used to solve that integral. In our experiments,
the CPU time of the Lewis formula is of the same order of magnitude
as the damped COS-iv method in two dimensions and usually faster in
four dimensions.

Let us compare the three methods in terms of error control: The number
of runs for a crude Monte Carlo simulation is given explicitly thanks
to the central limit theorem. The classical COS method requires two
parameters: a truncation range and a number of terms. We developed
explicit and implicit formulas for both parameters. A damping parameter
needs to be provided to apply the damped COS method. So far, we are
only able to find the damping factor by trial and error. The Lewis
formula also requires a damping factor, which can be obtained via
an optimization techniques, see \citet{bayer2022optimal}. However,
the Lewis formula requires a numerical integration method and hence
some tuning parameters such as a number of quadrature points, etc.,
which are not always easy to estimate.

\section*{Acknowledgments}

We thank two anonymous referees for very helpful comments that improved
this paper. 

\section*{Declarations}

The authors declare no competing interests.

\section*{Availability of data, materials and Code}

No data was used for the research described in the article. The code
for the experiments presented in this paper is available online. URLs
are included in this article.

\appendix

\section{Auxiliary results}
\begin{prop}
\label{prop:COSadmissible}Assume $\psi\in\mathcal{L}^{1}\cap\mathcal{L}^{2}$.
Then $\psi$ is COS-admissible if Inequality (\ref{eq:propBed}) holds.
Let $\boldsymbol{L}=(L_{1},...,L_{d})\in\mathbb{R}_{+}^{d}$; then
it holds that
\begin{align}
B_{\psi}(\boldsymbol{L}) & \leq\Xi\int_{\mathbb{R}^{d}\setminus[-\boldsymbol{L},\boldsymbol{L}]}\prod_{h=1}^{d}\max\left\{ x_{h}^{2}L_{h}^{-2},1\right\} |\psi(\boldsymbol{x})|^{2}d\boldsymbol{x}\label{eq:B(L)_1}\\
 & \leq\frac{\Xi}{d\underset{h=1,...,d}{\min}L_{h}^{2d}}\int_{\mathbb{R}^{d}\setminus[-\boldsymbol{L},\boldsymbol{L}]}\left|\boldsymbol{x}\right|^{2d}|\psi(\boldsymbol{x})|^{2}d\boldsymbol{x}+\Xi\int_{\mathbb{R}^{d}\setminus[-\boldsymbol{L},\boldsymbol{L}]}|\psi(\boldsymbol{x})|^{2}d\boldsymbol{x},\label{eq:B(L)_2}
\end{align}
where $\Xi=\frac{\pi^{2}}{3}\sum_{h=1}^{d}\left(\frac{\pi^{2}}{3}+1\right)^{h-1}$.
\end{prop}

\begin{proof}
Let $\boldsymbol{L}\in\mathbb{R}_{+}^{d}$ and $\boldsymbol{j}\in\mathbb{Z}^{d}$.
It follows by Parseval's identity
\begin{align}
\int_{[2\boldsymbol{jL}-\boldsymbol{L},2\boldsymbol{jL}+\boldsymbol{L}]}|\psi(\boldsymbol{x})|^{2}d\boldsymbol{x}= & \sideset{}{'}\sum_{\boldsymbol{k}\in\mathbb{N}_{0}^{d}}\frac{1}{\prod_{h=1}^{d}L_{h}}\bigg|\int_{[2\boldsymbol{jL}-\boldsymbol{L},2\boldsymbol{jL}+\boldsymbol{L}]}\psi(\boldsymbol{x})\prod_{h=1}^{d}\underbrace{\cos\left(k_{h}\pi\frac{x_{h}-(2j_{h}L_{h}-L_{h})}{2L_{h}}\right)}_{=(-1)^{j_{h}k_{h}}\cos\left(k_{h}\pi\frac{x_{h}+L_{h}}{2L_{h}}\right)}d\boldsymbol{x}\bigg|^{2}\nonumber \\
= & \sideset{}{'}\sum_{\boldsymbol{k}\in\mathbb{N}_{0}^{d}}\frac{1}{\prod_{h=1}^{d}L_{h}}\bigg|\int_{[2\boldsymbol{jL}-\boldsymbol{L},2\boldsymbol{jL}+\boldsymbol{L}]}\psi(\boldsymbol{x})e_{\boldsymbol{k}}(\boldsymbol{x})d\boldsymbol{x}\bigg|^{2}.\label{eq:parseval}
\end{align}
By the Cauchy-Schwarz inequality, we obtain with $\varrho(\boldsymbol{j}):=\prod_{h=1}^{d}\max\{|j_{h}|,1\}$,
\begin{align}
\left|\int_{\mathbb{R}^{d}\setminus[-\boldsymbol{L},\boldsymbol{L}]}\psi(\boldsymbol{x})e_{\boldsymbol{k}}(\boldsymbol{x})d\boldsymbol{x}\right|^{2} & =\left|\sum_{\boldsymbol{j}\in\mathbb{Z}^{d}\setminus\{\boldsymbol{0}\}}\frac{\varrho(\boldsymbol{j})}{\varrho(\boldsymbol{j})}\int_{[2\boldsymbol{jL}-\boldsymbol{L},2\boldsymbol{jL}+\boldsymbol{L}]}\psi(\boldsymbol{x})e_{\boldsymbol{k}}(\boldsymbol{x})d\boldsymbol{x}\right|^{2}\nonumber \\
 & \leq\bigg(\underbrace{\sum_{\boldsymbol{j}\in\mathbb{Z}^{d}\setminus\{\boldsymbol{0}\}}\frac{1}{(\varrho(\boldsymbol{j}))^{2}}}_{=\Xi}\bigg)\sum_{\boldsymbol{j}\in\mathbb{Z}^{d}\setminus\{\boldsymbol{0}\}}(\varrho(\boldsymbol{j}))^{2}\left|\int_{[2\boldsymbol{jL}-\boldsymbol{L},2\boldsymbol{jL}+\boldsymbol{L}]}\psi(\boldsymbol{x})e_{\boldsymbol{k}}(\boldsymbol{x})d\boldsymbol{x}\right|^{2}.\label{eq:CS}
\end{align}
The fact that $\Xi=\frac{\pi^{2}}{3}\sum_{h=1}^{d}\left(\frac{\pi^{2}}{3}+1\right)^{h-1}$
can be shown by mathematical induction over $d$. Then it follows
that
\begin{align*}
B_{\psi}(\boldsymbol{L}) & \overset{(\ref{eq:CS})}{\leq}\,\Xi\sum_{\boldsymbol{j}\in\mathbb{Z}^{d}\setminus\{\boldsymbol{0}\}}(\varrho(\boldsymbol{j}))^{2}\sideset{}{'}\sum_{\boldsymbol{k}\in\mathbb{N}_{0}^{d}}\frac{1}{\prod_{h=1}^{d}L_{h}}\left|\int_{[2\boldsymbol{jL}-\boldsymbol{L},2\boldsymbol{jL}+\boldsymbol{L}]}\psi(\boldsymbol{x})e_{\boldsymbol{k}}(\boldsymbol{x})d\boldsymbol{x}\right|^{2}\\
 & \overset{(\ref{eq:parseval})}{=}\Xi\sum_{\boldsymbol{j}\in\mathbb{Z}^{d}\setminus\{\boldsymbol{0}\}}(\varrho(\boldsymbol{j}))^{2}\int_{[2\boldsymbol{jL}-\boldsymbol{L},2\boldsymbol{jL}+\boldsymbol{L}]}|\psi(\boldsymbol{x})|^{2}d\boldsymbol{x}.
\end{align*}
For $\boldsymbol{j}\in\mathbb{Z}^{d}$ and $\boldsymbol{x}\in[2\boldsymbol{jL}-\boldsymbol{L},2\boldsymbol{jL}+\boldsymbol{L}]$,
one has $|j_{h}|\leq\frac{|x_{h}|}{L_{h}}$, $h=1,...,d$. It follows
that $(\varrho(\boldsymbol{j)})^{2}\leq\prod_{h=1}^{d}\max\left\{ x_{h}^{2}L_{h}^{-2},1\right\} $,
which implies Inequality (\ref{eq:B(L)_1}). By Young's inequality,
it holds that 
\begin{align*}
\prod_{h=1}^{d}\left(\max\left\{ x_{h}^{2d}L_{h}^{-2d},1\right\} \right)^{\frac{1}{d}} & \leq\frac{1}{d}\sum_{h=1}^{d}\max\left\{ x_{h}^{2d}L_{h}^{-2d},1\right\} \leq\frac{\left|\boldsymbol{x}\right|^{2d}}{d\underset{h=1,...,d}{\min}L_{h}^{2d}}+1.
\end{align*}
In the last inequality, we used $\max\{a,b\}\leq a+b$ for any $a,b\geq0$
and $\sum_{h=1}^{d}x_{h}^{2d}\leq\left|\boldsymbol{x}\right|^{2d}$,
which follows from the monotonicity of the $p$-norm. Hence, Inequality
(\ref{eq:B(L)_2}) holds. Assumption (\ref{eq:propBed}) and $\psi\in\mathcal{L}^{2}$
imply $B_{\psi}(\boldsymbol{L})\to0$, $\min_{h=1,...,d}L_{h}\to\infty$.
\end{proof}
\begin{prop}
\label{prop:centr}Let $\boldsymbol{\alpha}\in\mathbb{R}^{d}$. Assume
that $g$ is a density and that $\boldsymbol{x}\mapsto|\boldsymbol{x}|e^{\boldsymbol{\alpha}\cdot\boldsymbol{x}}g(\boldsymbol{x})$
is integrable. Let $\lambda=(\widehat{g}(-i\boldsymbol{\alpha}))^{-1}$
then $\lambda\in(0,\infty)$. Choose $\boldsymbol{\mu}\in\mathbb{R}^{d}$
by
\begin{equation}
\mu_{h}=-\lambda i\left.\frac{\partial}{\partial u_{h}}\widehat{g}(\boldsymbol{u}-i\boldsymbol{\alpha})\right|_{\boldsymbol{u}=\boldsymbol{0}},\quad h=1,...,d.\label{eq:mu_lam_partial}
\end{equation}
Define $f$ as in Equation (\ref{eq:dampedf}). Then $f$ is a density
with characteristic function 
\begin{equation}
\widehat{f}(\boldsymbol{u})=\lambda e^{-i\boldsymbol{u}\cdot\boldsymbol{\mu}}\widehat{g}(\boldsymbol{u}-i\boldsymbol{\alpha}),\quad\boldsymbol{u}\in\mathbb{R}^{d}.\label{eq:Fourier_f}
\end{equation}
Further, the moments of $f$ of first order are zero, i.e., $\int_{\mathbb{R}^{d}}f(\boldsymbol{x})x_{h}d\boldsymbol{x}=0$,
$h=1,...,d$.
\end{prop}

\begin{proof}
Use $\int|\boldsymbol{x}|e^{\boldsymbol{\alpha}\cdot\boldsymbol{x}}g(\boldsymbol{x})d\boldsymbol{x}<\infty$
and split the integration range into $\mathbb{R}^{d}\setminus B_{1}$
and $B_{1}$, where $B_{1}$ is the unit ball, to see that $\boldsymbol{x}\mapsto e^{\boldsymbol{\alpha}\cdot\boldsymbol{x}}g(\boldsymbol{x})$
is integrable. Since $\lambda=(\int_{\mathbb{R}^{d}}e^{\boldsymbol{\alpha}\cdot\boldsymbol{x}}g(\boldsymbol{x})d\boldsymbol{x})^{-1}$
and $g$ is a density we have $\lambda\in(0,\infty)$. By the definition
of $\lambda$, $f$ is a density. Since $f\in\mathcal{L}^{1}$, $\widehat{f}$
exists. Note that 
\[
\widehat{f}(\boldsymbol{u})=\int_{\mathbb{R}^{d}}\lambda e^{\boldsymbol{\alpha}\cdot(\boldsymbol{x}+\boldsymbol{\mu})}g(\boldsymbol{x}+\boldsymbol{\mu})e^{i\boldsymbol{u}\cdot\boldsymbol{x}}d\boldsymbol{x}=\lambda\int_{\mathbb{R}^{d}}e^{\boldsymbol{\alpha}\cdot\boldsymbol{x}}g(\boldsymbol{x})e^{i\boldsymbol{u}\cdot(\boldsymbol{x}-\boldsymbol{\mu})}d\boldsymbol{x}=\lambda e^{-i\boldsymbol{\mu}\cdot\boldsymbol{u}}\int_{\mathbb{R}^{d}}g(\boldsymbol{x})e^{i(\boldsymbol{u}-i\boldsymbol{\alpha})\cdot\boldsymbol{x}}d\boldsymbol{x},
\]
which implies Equation (\ref{eq:Fourier_f}). By \citet[Thm 25.2]{bauer1996probability},
the partial derivatives in Equation (\ref{eq:mu_lam_partial}) exist
and it holds that $\mu_{h}=\lambda\int_{\mathbb{R}^{d}}e^{\boldsymbol{\alpha}\cdot\boldsymbol{x}}g(\boldsymbol{x})x_{h}d\boldsymbol{x}$.
Finally, we have that
\begin{align*}
\int_{\mathbb{R}^{d}}f(\boldsymbol{x})x_{h}d\boldsymbol{x}= & \int_{\mathbb{R}^{d}}\lambda e^{\boldsymbol{\alpha}\cdot(\boldsymbol{x}+\boldsymbol{\mu})}g(\boldsymbol{x}+\boldsymbol{\mu})x_{h}d\boldsymbol{x}=\lambda\int_{\mathbb{R}^{d}}e^{\boldsymbol{\alpha}\cdot\boldsymbol{x}}g(\boldsymbol{x})x_{h}d\boldsymbol{x}-\mu_{h}\lambda\int_{\mathbb{R}^{d}}e^{\boldsymbol{\alpha}\cdot\boldsymbol{x}}g(\boldsymbol{x})d\boldsymbol{x}=0.
\end{align*}
\end{proof}
\begin{lem}
\label{lem:fMinusAkEk}Let $\psi\in\mathcal{L}^{1}\cap\mathcal{L}^{2}$.
Let $\boldsymbol{M},\boldsymbol{L}\in\mathbb{R}_{+}^{d}$ with $\boldsymbol{M}\leq\boldsymbol{L}$;
then it holds that
\[
\Vert\psi1_{[-\boldsymbol{L},\boldsymbol{L}]}-\sideset{}{'}\sum_{\boldsymbol{\boldsymbol{0}}\leq\boldsymbol{k}\leq\boldsymbol{N}}a_{\boldsymbol{k}}e_{\boldsymbol{k}}1_{[-\boldsymbol{L},\boldsymbol{L}]}\Vert_{2}^{2}\leq\int_{\mathbb{R}^{d}}|\psi(\boldsymbol{x})|^{2}d\boldsymbol{x}-\prod_{h=1}^{d}L_{h}\sideset{}{'}\sum_{\boldsymbol{\boldsymbol{0}}\leq\boldsymbol{k}\leq\boldsymbol{N}}|c_{\boldsymbol{k}}|^{2}+G(\boldsymbol{L}),
\]
where 
\begin{equation}
G(\boldsymbol{L}):=B_{\psi}(\boldsymbol{L})+2\sqrt{B_{\psi}(\boldsymbol{L})\int_{\mathbb{R}^{d}}|\psi(\boldsymbol{x})|^{2}d\boldsymbol{x}}.\label{eq:eta_l}
\end{equation}
\end{lem}

\begin{proof}
Let
\[
\phi_{\boldsymbol{k}}:=\frac{1}{\prod_{h=1}^{d}L_{h}}\int_{\mathbb{R}^{d}\setminus[-\boldsymbol{L},\boldsymbol{L}]}\psi(\boldsymbol{x})e_{\boldsymbol{k}}(\boldsymbol{x})d\boldsymbol{x},\quad\boldsymbol{k}\in\mathbb{N}_{0}^{d}.
\]
It holds that $c_{\boldsymbol{k}}=a_{\boldsymbol{k}}+\phi_{\boldsymbol{k}}$.
It follows by the Cauchy-Schwarz inequality that
\begin{align}
\prod_{h=1}^{d}L_{h}\sideset{}{'}\sum_{\boldsymbol{\boldsymbol{0}}\leq\boldsymbol{k}\leq\boldsymbol{N}}|c_{\boldsymbol{k}}|^{2}= & \prod_{h=1}^{d}L_{h}\left(\sideset{}{'}\sum_{\boldsymbol{\boldsymbol{0}}\leq\boldsymbol{k}\leq\boldsymbol{N}}|a_{\boldsymbol{k}}|^{2}+\sideset{}{'}\sum_{\boldsymbol{\boldsymbol{0}}\leq\boldsymbol{k}\leq\boldsymbol{N}}|\phi_{\boldsymbol{k}}|^{2}+2\sideset{}{'}\sum_{\boldsymbol{\boldsymbol{0}}\leq\boldsymbol{k}\leq\boldsymbol{N}}|\phi_{\boldsymbol{k}}||a_{\boldsymbol{k}}|\right)\nonumber \\
\leq & \prod_{h=1}^{d}L_{h}\sideset{}{'}\sum_{\boldsymbol{\boldsymbol{0}}\leq\boldsymbol{k}\leq\boldsymbol{N}}|a_{\boldsymbol{k}}|^{2}+B_{\psi}(\boldsymbol{L})+2\prod_{h=1}^{d}L_{h}\sqrt{\sideset{}{'}\sum_{\boldsymbol{\boldsymbol{0}}\leq\boldsymbol{k}\leq\boldsymbol{N}}|\phi_{\boldsymbol{k}}|^{2}\sideset{}{'}\sum_{\boldsymbol{\boldsymbol{0}}\leq\boldsymbol{k}\leq\boldsymbol{N}}|a_{\boldsymbol{k}}|^{2}}\nonumber \\
\overset{(\ref{eq:parseval})}{\leq} & \prod_{h=1}^{d}L_{h}\sideset{}{'}\sum_{\boldsymbol{\boldsymbol{0}}\leq\boldsymbol{k}\leq\boldsymbol{N}}|a_{\boldsymbol{k}}|^{2}+\underbrace{B_{\psi}(\boldsymbol{L})+2\sqrt{B_{\psi}(\boldsymbol{L})\int_{\mathbb{R}^{d}}|\psi(\boldsymbol{x})|^{2}d\boldsymbol{x}}}_{=G(\boldsymbol{L})}\label{eq:a_k_c_k}\\
\text{\ensuremath{\overset{(\ref{eq:parseval})}{\leq}}} & \int_{\mathbb{R}^{d}}|\psi(\boldsymbol{x})|^{2}d\boldsymbol{x}+G(\boldsymbol{L}).\label{eq:bound_ck2}
\end{align}
Hence,
\begin{align*}
\Vert\psi1_{[-\boldsymbol{L},\boldsymbol{L}]}-\sideset{}{'}\sum_{\boldsymbol{\boldsymbol{0}}\leq\boldsymbol{k}\leq\boldsymbol{N}}a_{\boldsymbol{k}}e_{\boldsymbol{k}}1_{[-\boldsymbol{L},\boldsymbol{L}]}\Vert_{2}^{2}\overset{(\ref{eq:ekel})}{\leq} & \prod_{h=1}^{d}L_{h}\sideset{}{'}\sum_{k_{1}>N_{1}\text{ or}\dots\text{or }k_{d}>N_{d}}|a_{\boldsymbol{k}}|^{2}\\
=\,\,\, & \prod_{h=1}^{d}L_{h}\sideset{}{'}\sum_{\boldsymbol{k}\in\mathbb{N}_{0}^{d}}|a_{\boldsymbol{k}}|^{2}-\prod_{h=1}^{d}L_{h}\sideset{}{'}\sum_{\boldsymbol{\boldsymbol{0}}\leq\boldsymbol{k}\leq\boldsymbol{N}}|a_{\boldsymbol{k}}|^{2}\\
\overset{(\ref{eq:parseval},\ref{eq:a_k_c_k})}{\leq} & \int_{\mathbb{R}^{d}}|\psi(\boldsymbol{x})|^{2}d\boldsymbol{x}-\prod_{h=1}^{d}L_{h}\sideset{}{'}\sum_{\boldsymbol{\boldsymbol{0}}\leq\boldsymbol{k}\leq\boldsymbol{N}}|c_{\boldsymbol{k}}|^{2}+G(\boldsymbol{L}).
\end{align*}
\end{proof}

\appendix

\section{R code and pseudocode}

\begin{algorithm}[H]
{\footnotesize d = 2}{\footnotesize\par}

{\footnotesize y = c(1.5, 1.5)}{\footnotesize\par}

{\footnotesize Sigma = matrix(c(1.0, 0.7, 0.7, 4.0), nrow = d)}{\footnotesize\par}

{\footnotesize eta = c(-1.0, 0)}{\footnotesize\par}

{\footnotesize N = c(40, 40)~~~~~~\#N can be estimated using
Alg. \ref{alg:Application-of-Lemma} }{\footnotesize\par}

{\footnotesize eps = 10\textasciicircum -3}{\footnotesize\par}

{\footnotesize lambda = function(alpha)\{exp(-sum(eta {*} alpha) -
0.5 {*} sum(alpha {*} Sigma \%{*}\% alpha) )\}~~~~~~\#Exa. \ref{exa:(Normal-distribution).-Let}}{\footnotesize\par}

{\footnotesize mu\_8 = 105 {*} diag(Sigma)\textasciicircum 4~~~~~~\#8th
marginal moments, Eq. (\ref{eq:moments}) }{\footnotesize\par}

{\footnotesize v\_inf\_norm = function(alpha)\{1 / lambda(alpha) {*}
exp(-sum(alpha {*} y))\}~~~~~~\#See Exa. \ref{exa:(Digital-cash-or-nothing-put}}{\footnotesize\par}

{\footnotesize M = function(alpha) (3 {*} d / eps {*} v\_inf\_norm(alpha)
{*} mu\_8)\textasciicircum (1 / 8)~~~~~~\#Eq. (\ref{eq:m-1-1})}{\footnotesize\par}

{\footnotesize L = function(alpha)\{M(alpha)\}}{\footnotesize\par}

{\footnotesize mu = function(alpha)\{eta + Sigma \%{*}\% alpha\}~~~~~~\#Exa.
\ref{exa:(Normal-distribution).-Let}}{\footnotesize\par}

{\footnotesize fhat = function(u)\{exp(-0.5 {*} sum(u {*} Sigma \%{*}\%
u))\}~~~~~~\#Exa. \ref{exa:(Normal-distribution).-Let}}{\footnotesize\par}

{\footnotesize tmp = function(s, k, psi)\{Re(psi(pi / 2 {*} s {*} k
/ L(alpha)) {*} exp(1i {*} pi / 2 {*} sum(s {*} k)))\}~~~~~~\#addend
for Eq. (\ref{eq:ck}, \ref{eq:vkTilde})}{\footnotesize\par}

{\footnotesize ~}{\footnotesize\par}

{\footnotesize\#\#\#\#\#\#\#\#\#\#\#\#\#\#\#\#\#\#\#\#\#\#\#\#\#\#\#\#
CLASSICAL COS-i METHOD\#\#\#\#\#\#\#\#\#\#\#\#\#\#\#\#\#\#\#\#\#}{\footnotesize\par}

{\footnotesize classical\_COS\_CDF = 0}{\footnotesize\par}

{\footnotesize alpha = rep(0,d)~~~~~~\#Damping factor}{\footnotesize\par}

{\footnotesize for(k1 in 0:N{[}1{]})\{}{\footnotesize\par}

{\footnotesize ~~~for(k2 in 0:N{[}2{]})\{}{\footnotesize\par}

{\footnotesize ~~~~~~k = c(k1, k2)}{\footnotesize\par}

{\footnotesize ~~~~~~ck = 1 / (2\textasciicircum (d - 1) {*}
prod(L(alpha))) {*} (tmp(c(1, 1), k, fhat) + tmp(c(1, -1), k, fhat))~~~~~~\#Eq.
(\ref{eq:ck})}{\footnotesize\par}

{\footnotesize ~~~~~~A = pmin(y - mu(rep(0, d)), M(alpha))~~~~~~\#Exa.
\ref{exa:CDF}}{\footnotesize\par}

{\footnotesize ~~~~~~tmp\_variable = (sin(k{[}k != 0{]} {*}
pi {*} (A{[}k != 0{]} + L(alpha){[}k != 0{]})/(2 {*} L(alpha){[}k
!= 0{]})))~~~~~~\#Using that M = L}{\footnotesize\par}

{\footnotesize ~~~~~~vk = prod(A{[}k == 0{]} + M(alpha){[}k
== 0{]}) {*} prod(2 {*} L(alpha){[}k != 0{]} / (k{[}k != 0{]} {*}
pi) {*} tmp\_variable) \#Eq. (\ref{eq:vkCDF})}{\footnotesize\par}

{\footnotesize ~~~~~~if(any((y - mu(rep(0, d))) < -M(alpha)))}{\footnotesize\par}

{\footnotesize ~~~~~~~~~vk = 0}{\footnotesize\par}

{\footnotesize ~~~~~~classical\_COS\_CDF = classical\_COS\_CDF
+ 1 / (2\textasciicircum sum(k == 0)) {*} ck {*} vk~~~~~~\#Eq.
(\ref{eq:vk})}{\footnotesize\par}

{\footnotesize ~~~~~~\}}{\footnotesize\par}

{\footnotesize\}}{\footnotesize\par}

{\footnotesize classical\_COS\_CDF~~~~~~\#0.7708859}{\footnotesize\par}

{\footnotesize ~}{\footnotesize\par}

{\footnotesize\#\#\#\#\#\#\#\#\#\#\#\#\#\#\#\#\#\#\#\#\#\#\#\#\#\#\#\#
DAMPED COS-iv METHOD\#\#\#\#\#\#\#\#\#\#\#\#\#\#\#\#\#\#\#\#\#}{\footnotesize\par}

{\footnotesize what = function(z)\{prod(exp(1i {*} y {*} z) / (1i {*}
z))\}~~~~~~\#Exa. \ref{exa:(Digital-cash-or-nothing-put}}{\footnotesize\par}

{\footnotesize vhat = function(u)\{1 / lambda(alpha) {*} exp(-1i {*}
sum(u {*} mu(alpha))) {*} what(u + 1i {*} alpha)\}~~~~~~\#Rem.
\ref{rem:fHat_vhat}}{\footnotesize\par}

{\footnotesize damped\_COS\_CDF = 0}{\footnotesize\par}

{\footnotesize alpha = rep(-1.0, d)~~~~~~\#Damping factor}{\footnotesize\par}

{\footnotesize for(k1 in 0:N{[}1{]})\{}{\footnotesize\par}

{\footnotesize ~~~for(k2 in 0:N{[}2{]})\{}{\footnotesize\par}

{\footnotesize ~~~~~~k = c(k1, k2)}{\footnotesize\par}

{\footnotesize ~~~~~~ck = 1 / (2\textasciicircum (d - 1) {*}
prod(L(alpha))) {*} (tmp(c(1, 1), k, fhat) + tmp(c(1, -1), k, fhat))~~~~~~\#Eq.
(\ref{eq:ck})}{\footnotesize\par}

{\footnotesize ~~~~~~vk\_tilde = 1 / (2\textasciicircum (d
- 1)) {*} (tmp(c(1, 1), k, vhat) + tmp(c(1, -1), k, vhat))~~~~~~\#Eq.
(\ref{eq:vkTilde})}{\footnotesize\par}

{\footnotesize ~~~~~~damped\_COS\_CDF = damped\_COS\_CDF + 1
/ (2\textasciicircum sum(k == 0)) {*} ck {*} vk\_tilde~~~~~~\#Eq.
(\ref{eq:vk_tilde})}{\footnotesize\par}

{\footnotesize ~~~~~~\}}{\footnotesize\par}

{\footnotesize\}}{\footnotesize\par}

{\footnotesize damped\_COS\_CDF~~~~~~\#0.7708836}{\footnotesize\par}

\caption{\protect\label{alg:Pure-R-code}Pure R code for the classical COS-i
and the damped COS-iv methods to obtain the CDF at $\boldsymbol{y}=(1.5,1.5)$
of a two dimensional normal distribution with location $\boldsymbol{\eta}=(-1,0)$
and covariance matrix $\Sigma$ with $\Sigma_{11}=1$, $\Sigma_{22}=4$,
$\Sigma_{12}=\Sigma_{21}=0.7$.}
\end{algorithm}

\begin{algorithm}[H]
{\footnotesize\texttt{Input: Error tolerance $\varepsilon>0$ and
the real number $I:=(2\pi)^{-d}\int_{\mathbb{R}^{d}}|\widehat{f}(\boldsymbol{u})|^{2}d\boldsymbol{u}$.}}~\\
{\footnotesize\texttt{Output: Approximation of $\int_{\mathbb{R}^{d}}v(\boldsymbol{x})f(\boldsymbol{x})d\boldsymbol{x}$
within the error tolerance $\varepsilon$.}}{\footnotesize\par}
\begin{lyxlist}{00.00.0000}
\item [{{\footnotesize\texttt{~}}}] {\footnotesize\texttt{DEFINE $\boldsymbol{L}$,
$\boldsymbol{M}$ via Eq. (\ref{eq:m-1-1})}}{\footnotesize\par}
\item [{{\footnotesize\texttt{~}}}] {\footnotesize\texttt{DEFINE $c_{\boldsymbol{k}}$
via Eq. (\ref{eq:ck})}}{\footnotesize\par}
\item [{{\footnotesize\texttt{~}}}] {\footnotesize\texttt{DEFINE $v_{\boldsymbol{k}}$
via Eq. (\ref{eq:vk})}}{\footnotesize\par}
\item [{{\footnotesize\texttt{~}}}] {\footnotesize\texttt{SET $\boldsymbol{k}:=\boldsymbol{0}\in\mathbb{N}_{0}^{d}$,
$\boldsymbol{N}:=\boldsymbol{0}\in\mathbb{N}_{0}^{d}$, $\gamma:=2^{-\Lambda(\boldsymbol{k})}|c_{\boldsymbol{k}}|^{2}$
and $\beta:=2^{-\Lambda(\boldsymbol{k})}c_{\boldsymbol{k}}v_{\boldsymbol{k}}$}}{\footnotesize\par}
\item [{{\footnotesize\texttt{~}}}] {\footnotesize\texttt{WHILE $\left|I-\gamma\prod_{h=1}^{d}L_{h}\right|>\frac{\varepsilon^{2}}{162\left\Vert v\right\Vert _{2}^{2}}$
DO:}}{\footnotesize\par}
\item [{{\footnotesize\texttt{~}}}] {\footnotesize\texttt{~~~FOR $\boldsymbol{k}\in\{\boldsymbol{0}\leq\boldsymbol{u}\leq\boldsymbol{N}+\boldsymbol{1}\}\setminus\{\boldsymbol{0}\leq\boldsymbol{u}\leq\boldsymbol{N}\}$
DO:}}{\footnotesize\par}
\item [{{\footnotesize\texttt{~}}}] {\footnotesize\texttt{~~~~~~SET
$\gamma:=\gamma+2^{-\Lambda(\boldsymbol{k})}|c_{\boldsymbol{k}}|^{2}$}}{\footnotesize\par}
\item [{{\footnotesize\texttt{~}}}] {\footnotesize\texttt{~~~~~~SET
$\beta:=\beta+2^{-\Lambda(\boldsymbol{k})}c_{\boldsymbol{k}}v_{\boldsymbol{k}}$}}{\footnotesize\par}
\item [{{\footnotesize\texttt{~}}}] {\footnotesize\texttt{~~~INCREMENT
$N_{h}$, $h=1,...,d$, by $1$}}{\footnotesize\par}
\item [{{\footnotesize\texttt{~}}}] {\footnotesize\texttt{RETURN $\beta$}}{\footnotesize\par}
\end{lyxlist}
\caption{\protect\label{alg:Application-of-Lemma}Approximate \foreignlanguage{english}{$\protect\int_{\mathbb{R}^{d}}v(\boldsymbol{x})f(\boldsymbol{x})d\boldsymbol{x}$
by the classical COS-i method, i.e., the sum $\sideset{}{'}\sum_{\boldsymbol{\boldsymbol{0}}\protect\leq\boldsymbol{k}\protect\leq\boldsymbol{N}}c_{\boldsymbol{k}}v_{\boldsymbol{k}}$.
To apply the damped COS-iv method, replace $v_{\boldsymbol{k}}$ by
$\tilde{v}_{\boldsymbol{k}}$ defined in Equation (\ref{eq:vkTilde}).
The number of terms $\boldsymbol{N}$ can be found during the evaluation
of the sum by} \foreignlanguage{english}{Inequality (\ref{eq:f_l-akek}).
The function $\Lambda$ is defined in Section \ref{sec:Notation}.}}
\end{algorithm}

\bibliographystyle{plainnat}
\bibliography{biblio}

\end{document}